\newcommand{\fq}{{\mathbb F}_{q}}
\newcommand{\C}{{\mathcal{C}}}
\newcommand{\mL}{{\mathcal{L}}}
\newcommand{\mP}{{\mathcal{P}}}
\newtheorem{thm}{Theorem}
\newtheorem{lem}{Lemma}
\newtheorem{cor}{Corollary}
\newtheorem{problem}{Problem}
\newtheorem{definition}{Definition}
\newtheorem{example}{Example}
\newtheorem{remark}{Remark}
\begin{document}
\title{On $(\mL,\mP)$-Twisted Generalized Reed-Solomon Codes}
\date{\today}

\author{Zhao Hu\thanks{Z. Hu and N. Li are with the Key Laboratory of Intelligent Sensing System and Security (Hubei University), Ministry of Education, the Hubei Provincial Engineering Research Center of Intelligent Connected Vehicle Network Security, School of Cyber Science and Technology, Hubei University, Wuhan, 430062, China. N. Li is also with the State Key Laboratory of Integrated Service Networks, Xi'an 710071, China. Email: zhao.hu@aliyun.com, nian.li@hubu.edu.cn},
Liang Wang\thanks{L. Wang and X. Zeng are with the Hubei Key Laboratory of Applied Mathematics, Faculty of Mathematics and Statistics, Hubei University, Wuhan, 430062, China. Email: liang.wang1@aliyun.com, xiangyongzeng@aliyun.com},
Nian Li, Xiangyong Zeng, Xiaohu Tang\thanks{X. Tang is with the Information Coding $\&$ Transmission Key Lab of Sichuan Province, CSNMT Int. Coop. Res. Centre (MoST), Southwest Jiaotong University, Chengdu, 610031, China. Email: xhutang@swjtu.edu.cn}}

\maketitle
\begin{abstract}
Twisted generalized Reed-Solomon (TGRS) codes are an extension of the generalized Reed-Solomon (GRS) codes by adding specific twists, which attract much attention recently. This paper presents an in-depth and comprehensive investigation of the TGRS codes for the most general form by using a universal method. At first, we propose a more precise definition to describe TGRS codes, namely $(\mL,\mP)$-TGRS codes, and provide a concise necessary and sufficient condition for $(\mL,\mP)$-TGRS codes to be MDS, which extends the related results in the previous works. Secondly, we explicitly characterize the parity check matrices of $(\mL,\mP)$-TGRS codes, and provide a sufficient condition for $(\mL,\mP)$-TGRS codes to be self-dual. Finally, we conduct an in-depth study into the non-GRS property of  $(\mL,\mP)$-TGRS codes via the Schur squares and the combinatorial techniques respectively. As a result, we obtain a large infinite families of non-GRS MDS codes.
\end{abstract}

\begin{IEEEkeywords}
Linear code, twisted generalized Reed-Solomon code, MDS code, parity check matrix, self-dual code.
\end{IEEEkeywords}
\section{Introduction}
Let $q$ be a prime power, $\mathbb{F}_{q}$ denote the finite field with $q$ elements and $\mathbb{F}_{q}^*=\mathbb{F}_{q}\setminus\{0\}$. Let $m$ and $n$ be positive integers. Denote the $n$-dimensional vector space over $\mathbb{F}_{q}$ by $\mathbb{F}_{q}^n$ and the set of $m\times{n}$ matrices over $\mathbb{F}_{q}$ by $\mathbb{F}_{q}^{m\times{n}}$. An $[n,k,d]_q$ linear code $\C$ is a subspace of $\mathbb{F}_{q}^n$ with dimension $k$ and minimum Hamming distance $d$.
The dual code $\C^\bot$ of a linear code $\C$ is defined by
\begin{eqnarray*}
\C^\bot=\{\boldsymbol{x} \in \fq^n \mid\boldsymbol{x}\cdot\boldsymbol{y}=0 \mbox{ for all } \boldsymbol{y}\in \C\},
\end{eqnarray*}
where $x\cdot y$ denotes the Euclidean inner product of $x$ and $y$.
A code $\C$ is called self-dual if $\C=\C^{\perp}$. Self-dual codes are an important class in algebraic coding theory since their important applications in secret sharing schemes, quantum communication, and error correction capability optimization \cite{CRAM,DOUG,STEA,FORN}.

For an $[n,k,d]$ linear code $\C$, the Singleton bound \cite{SING} implies that $d\leq n-k+1$, and the Singleton defect of  $\C$ is defined by $S(\C)=n-k+1-d$ \cite{AMAR}. If $S(\C)=0$, the code $C$ is called a maximum distance separable (MDS) code. If $S(\C)=1$, the code is called an almost-MDS (AMDS) code. If $S(\C)=S(\C^{\perp})=1$, then $\C$ is referred to as a near-MDS (NMDS) code. More generally, a code is called an $m$-MDS code if $S(\C)=S(\C^{\perp})=m$.
MDS codes are highly valued in information storage due to their optimal trade-off between storage capacity and reliability. Given that MDS and NMDS codes play an essential role in coding theory and have a wide range of applications, the study of these codes has attracted significant attention, involving their classification, construction, self-duality and inequivalence; see, for example, \cite{AMAR,BART,BEBO,BEPU,BERO,BETS,CASC,CHEN,COUV,CRAM,DING,DODU,GU,GULL,HUAQ,HUAN,LAVA,LIUH,ROTL,ROTS,SINC,SUIL,SUIS,SUIZ,ZHANGJ,ZHANGA,ZHAO}. The best known MDS codes are the so-called Reed-Solomon (RS) codes, which have significant applications such as in cryptography and distributed storage systems. Moreover, the construction of self-dual MDS codes from GRS codes has been extensively studied, and some related works are summarized in \cite{ZHANGA}.

TGRS codes are an extension of GRS codes, which was originally initiated by Beelen et al. \cite{BEPU} in 2017. Unlike GRS codes, TGRS codes are not necessarily MDS codes. Accordingly, constructing MDS codes from TGRS codes by adding different twists attracts much attention from researchers. Moreover, it is shown that TGRS codes have good structure properties
which making that TGRS codes can be applied as a promising alternative to Goppa codes in the McEliece code-based cryptosystem \cite{BEBO}. 
Due to the efficiency of constructing MDS codes from TGRS codes and their potential in cryptographic applications, TGRS codes have garnered significant attention in recent research.

By adding certain monomials (referred to as twists) to specific positions (referred to as hooks) of each generating polynomial $f(x)$ of GRS codes, TGRS codes can be obtained from GRS codes. We refer to it as the $(\mL,\mP)$-TGRS code in this paper, where $\mL$ (resp. $\mP$) denotes the twist set (resp. position set), see Definition \ref{D2} for more details.
In the initial stage, $1$-TGRS codes (i.e. $(\mL,\mP)$-TGRS codes with $\ell:=|\mL|=1$) attracts the interest of researchers.
Let $\mL=\{t\}$ and $\mP=\{h\}$. In 2017, Beelen et al. \cite{BEPU} characterized the necessary and sufficient condition for $1$-TGRS codes to be MDS, and presented two families of MDS $1$-TGRS codes for the cases that $(t,h)=(1,0)$ and $(t,h)=(1,k-1)$. Later, for the case $(t,h)=(1,k-1)$, Huang et al. \cite{HUAN} determined the parity check matrices of $1$-TGRS codes, and presented a necessary and sufficient condition such that $1$-TGRS codes are self-dual. Zhang et al. \cite{ZHANGJ} explored the minimum distance and dual codes of $1$-TGRS codes for $(t,h)=(q-k-1,0\leq h \leq{k-1})$. Furthermore, for any pair $(t,h)$, Sui et al. \cite{SUIZ} provided necessary and sufficient conditions for $1$-TGRS codes to be MDS and NMDS respectively.


After that, scholars are dedicated to studying $(\mL,\mP)$-TGRS codes with $\ell>1$. Beelen et al. \cite{BEBO} first proposed a general form of $(\mL,\mP)$-TGRS codes with $\ell=|\mP|\leq\min\{k,n-k\}$ for a special coefficient matrix $B$,
and they constructed some MDS codes. Furthermore, in 2022, Beelen et al. \cite{BERO} take an in-depth discussion on the inequivalence of such TGRS codes to GRS codes and the decoding algorithm of these codes, and constructed infinite families of non-GRS MDS codes. Subsequently, some new results on the TGRS codes proposed by \cite{BEBO} were presented. Based on this form of TGRS codes, Sui et al. \cite{SUIL} focused on $2$-TGRS codes with $\mL=\{0,1\}$ and $\mP=\{k-1,k-2\}$, in which necessary and sufficient conditions for such $2$-TGRS codes to be MDS and self-dual were presented respectively, and infinite families of MDS (resp. NMDS, $2$-MDS) self-dual TGRS codes were obtained. Moreover, they demonstrated that most of their $2$-TGRS codes are non-GRS. Later, Gu et al. \cite{GU} constructed infinite families of self-dual MDS codes from the $(\mL,\mP)$-TGRS codes proposed by \cite{BEBO}, where $\ell < \min\{k,n-k\}$, $\mL=\{0,1,...,\ell-1\}$ and $\mP=\{k-\ell,k-\ell+1,...,k-1\}$. Harshdeep et al. \cite{SINC} provided a necessary and sufficient condition for the $(\mL,\mP)$-TGRS codes proposed by \cite{BEBO} to be MDS. Furthermore, Cheng \cite{CHEN} gave an explicit expression for the parity check matrices of TGRS codes of this form. 

Note that the $(\mL,\mP)$-TGRS codes with $\mL=\{0,1,...,n-k-1\}$ and $\mP=\{0,1,...,k-1\}$ is the most general case of TGRS codes. In 2023, Sui et al. \cite{SUIS} proposed this form of TGRS codes, and for the case when $\mL=\{0,1\}$ and $\mP=\{k-2,k-1\}$ they characterized the necessary and sufficient conditions for $(\mL,\mP)$-TGRS codes to be MDS and self-dual respectively. In 2024, Ding et al. \cite{DING} presented the necessary and sufficient conditions for $(\mL,\mP)$-TGRS codes with general coefficient matrix $B$ to be MDS and self-dual respectively, where $\ell<\min\{k,n-k\}$, $\mL=\{0,1,...,\ell-1\}$ and $\mP=\{k-\ell,k-\ell+1,...,k-1\}$. Recently, Zhao et al. \cite{ZHAO} provided a necessary and sufficient condition for the general $(\mL,\mP)$-TGRS codes to be MDS.

In summary, there exist many nice works dedicated to studying $(\mL,\mP)$-TGRS codes for specific twist set $\mL$ and position set $\mP$, including the construction of MDS codes, NMDS codes and self-dual codes from the TGRS codes and the characterization of the parity check matrices and equivalence of these TGRS codes. However, for the most general case where $\mP=\{0,1,...,k-1\}$ and $\mL=\{0,1,...,n-k-1\}$, there is a lack of more comprehensive and in-depth research on the $(\mL,\mP)$-TGRS codes.


In this paper, we mainly investigate the TGRS codes for the most general case. At first, we introduce a more generic and precise definition for the TGRS codes, namely, the $(\mL,\mP)$-TGRS codes (see Definition \ref{D2}). Secondly, we present a necessary and sufficient condition such that the $(\mL,\mP)$-TGRS codes for the most general case are MDS, which is more concise and simpler than that in \cite{ZHAO} and extends some related results in the previous works. We also provide a necessary and sufficient condition for $(\mL,\mP)$-TGRS codes to be NMDS under the assumption that it is self-dual. Thirdly, we accurately characterize the parity check matrices of $(\mL,\mP)$-TGRS codes and propose a sufficient condition for $(\mL,\mP)$-TGRS codes to be self-dual. Finally, we study the non-GRS properties of $(\mL,\mP)$-TGRS codes by using the Schur squares and combinatorial techniques respectively. As a result, we obtain an infinite families of non-GRS MDS codes.



This paper is organized as follows. In Section \ref{sect-2}, we introduce some notation, definitions and fundamental results with respect to $(\mL,\mP)$-TGRS codes. In Section \ref{sect-3}, we determine a necessary and sufficient condition for $(\mL,\mP)$-TGRS codes to be MDS and a sufficient condition for these codes to be NMDS. In Section \ref{sect-4}, we characterize the parity check matrices and dual codes of $(\mL,\mP)$-TGRS codes. In Section \ref{sect-5}, we investigate the non-GRS properties of $(\mL,\mP)$-TGRS codes, and obtain a large family of non-GRS MDS codes. Section \ref{sect-6} concludes this paper.

\section{Preliminaries} \label{sect-2}
In this section, we introduce some notation, definitions and lemmas which will be used in subsequent sections.
Starting from now on, we adopt the following notation unless otherwise stated:
\begin{itemize}
  \item Let $q$ be a  prime power, and $m$, $n$ and $k$ be positive integers with $k\leq{n}$.
  \item Let $\mathbb{F}_{q}$ denote the finite field of order $q$ and $\mathbb{F}_{q}^*=\mathbb{F}_{q}\backslash\{0\}$.
  \item Let $\mathbb{F}_{q}^n$ denote the $n$-dimensional vector space over $\mathbb{F}_{q}$ and 
  $\mathbb{F}_{q}^{m\times{n}}$ denote the set of $m\times{n}$  matrices over $\mathbb{F}_{q}$.
  \item Let $[k]:=\{0,1,\ldots,k\}$ and $|S|$ denote the cardinality of a set $S$.
  \item Let $A^{\rm T}$ denote the transpose of a matrix $A$.
  \item For $\boldsymbol{\alpha}=(a_1,\ldots,a_n)\in\mathbb{F}_{q}^n$ and $\boldsymbol{\nu}=(v_1,\ldots,v_n)\in(\mathbb{F}_{q}^*)^n$, let $ev_{\boldsymbol{\alpha},\boldsymbol{\nu}}$ denote an evaluation map from $\mathbb{F}_{q}[x]$ to $\mathbb{F}_{q}^n$ with $ev_{\boldsymbol{\alpha},\boldsymbol{v}}(f(x))=(v_1f(a_1),\ldots,v_nf(a_n))$.
  \item Let $\boldsymbol{x}\ast\boldsymbol{y}=(x_1y_1,\ldots,x_ny_n)$ denote the componentwise product of $\boldsymbol{x}$ and $\boldsymbol{y}$, where $\boldsymbol{x}=(x_1,\ldots,x_n), \boldsymbol{y}=(y_1,\ldots,y_n) \in \mathbb{F}_q^n$.
\end{itemize}

\subsection{GRS codes}
GRS codes are a well-known family of MDS codes, which are generally very useful in many applications \cite{HWPV}. We recall the definition of GRS codes as follows.  
\begin{definition}(\cite{HWPV})\label{D1}
Let $\boldsymbol{\alpha}=(a_1,a_2,...,a_n)\in\mathbb{F}_{q}^n$ with $a_1,...,a_n$ distinct, $\boldsymbol{\nu}=(v_1,...,v_n)\in(\mathbb{F}_{q}^*)^n$, where $n$ and $k$ are  positive integers such that $0\leq{k}<n$. The generalized Reed-Solomon (GRS) code $\C(\boldsymbol{\alpha},\boldsymbol{\nu})$ is defined as
\begin{equation}\label{S2}
\C(\boldsymbol{\alpha},\boldsymbol{\nu})=\{ev_{\boldsymbol{\alpha},\boldsymbol{\nu}}(f(x))=(v_1f(a_1),v_2f(a_2),...,v_nf(a_n)):\ f(x)\in{\mathbb{F}_{q}[x]_{<k}}\},
\end{equation}
where $\mathbb{F}_{q}[x]_{<k}=\{\sum_{i=0}^{k-1}f_ix^i: f_i \in \fq,\,0\leq i \leq k-1 \}$ denotes the set of polynomials $f(x)\in{\mathbb{F}_{q}[x]}$ with $\deg(f(x))<k$.
When $\nu=(1,\ldots,1)$, this code is referred to as the Reed-Solomon (RS) code.
\end{definition}

Since $ev_{\boldsymbol{\alpha},\boldsymbol{v}}$ is a linear map and the set $\mathbb{F}_{q}[x]_{<k}$ forms a vector space of dimension $k$, the code $\C(\boldsymbol{\alpha},\boldsymbol{\nu})$ is a linear subspace of $\mathbb{F}_q^n$. The code $\C(\boldsymbol{\alpha},\boldsymbol{\nu})$ is an $[n,k,n-k+1]$ MDS code over $\fq$ \cite{HWPV}.
Moreover, it is known that the generator matrix of the GRS code $\C(\boldsymbol{\alpha},\boldsymbol{\nu})$ is given by \begin{equation}\label{S3}
G_{GRS}=\begin{pmatrix}
v_1&v_2&\cdots&v_n\\
v_1a_1&v_2a_2&\cdots&v_na_n\\
\vdots&\vdots&\ddots&\vdots\\
v_1a_1^{k-1}&v_2a_2^{k-1}&\cdots&v_na_n^{k-1}\\
\end{pmatrix}.
\end{equation}

\subsection{$(\mL,\mP)$-TGRS codes}

TGRS codes are an extension of GRS codes by adding certain monomials (referred to as twists) to specific positions (referred to as hooks) of each polynomial $f(x)=\sum_{i=0}^{k-1}f_ix^i$ of GRS codes, where $f_i\in \fq$ for $0\leq i \leq k-1$.
Although the definition of TGRS codes have been given in the previous works \cite{BEPU,HUAN,ZHANGJ,SUIZ,BEBO,CHEN,GU,SINC,SUIL,SUIS,DING,ZHAO} in different ways (maybe for certain monomials and positions), we provide a unified definition in the following.

\begin{definition}\label{D2}
Let $n,k$ and $\ell$ be integers with $0<k\leq n$ and $0\leq \ell \leq n-k$. Choose the following three notation:
\begin{itemize}
  \item $\mathcal{L}\subseteq [n-k-1]$ (called the twist set), where $\ell:=|\mathcal{L}|$ denotes the number of twists;
  \item $\mathcal{P}\subseteq[k-1]$ (called the position set);
  \item $B=[b_{i,j}] \in \fq^{k\times{(n-k)}}$ (called the coefficient matrix), where $0\leq i \leq k$ and $0\leq j \leq n-k-1$.
\end{itemize}
For given $\mL$, $\mP$ and $B$, the set of twisted polynomials is given by
\begin{equation}\label{S5}
F(\mL,\mP,B)=\left\{\sum_{i=0}^{k-1}f_ix^i+\sum_{i\in \mP}f_i\sum_{j\in \mL} b_{i,j}x^{k+j}: f_i \in \fq,\, 0\leq i \leq k-1\right\}.
\end{equation}
Let $\boldsymbol{\alpha}=(a_1,a_2,...,a_n)\in\mathbb{F}_{q}^n$ with distinct $a_1,...,a_n$, and ${\bf \nu}=(v_1,...,v_n)\in(\fq^*)^n$. Then the twisted generalized Reed-Solomon (TGRS) code is defined by
\begin{equation}\label{S6}
\C(\mL,\mP,B)=\{ev_{\boldsymbol{\alpha},\boldsymbol{v}}(f(x))=(v_1f(a_1),...,v_nf(a_n)):\ f(x)\in{F_{n,k}(\mL,\mP,B)}\}.
\end{equation}
For simplicity and accuracy, we call it $(\mL,\mP)$-TGRS code throughout this paper. It is also called $\ell$-TGRS in other literature.
Specifically, when ${\bf\nu}=(1,\ldots,1)$, the code is referred to as $(\mL,\mP)$-TRS code.
\end{definition}

Accordingly, the coefficient matrix $B$ of $(\mL,\mP)$-TGRS codes is given by
\begin{equation}\label{S8-B}
  B=\begin{pmatrix}
    b_{0,0}&b_{0,1}&\dots&b_{0,n-k-1}\\
    b_{1,0}&b_{1,1}&\dots&b_{1,n-k-1}\\
    \vdots&\vdots&\ddots&\vdots\\
    b_{k-1,0}&b_{k-1,1}&\dots&b_{k-1,n-k-1}\\
  \end{pmatrix},
\end{equation}
where $b_{i,j}\in \fq$ for $0\leq i\leq k-1$ and $0\leq j \leq n-k-1$.

By selecting proper $\mathcal{L}\subseteq [n-k-1]$, $\mathcal{P}\subseteq[k-1]$ and $B$, the $(\mL,\mP)$-TGRS codes will be reduced to the TGRS codes defined in the previous works. Note that all $(\mL,\mP)$-TGRS codes for any $\mathcal{L}\subseteq [n-k-1]$ and $\mathcal{P}\subseteq [k-1]$ can be obtained from $(\mL,\mP)$-TGRS codes with $\mathcal{L}=[n-k-1]$ and $\mathcal{P}=[k-1]$ by setting proper coefficient matrix $B$, since if the $i$-th row and $j$-th column of $B$ for $i\in \mP'\subseteq [k-1]$ and $j\in\mL'\subseteq [n-k-1]$ are all zero then the $(\mL,\mP)$-TGRS codes with $\mathcal{L}=[n-k-1]$ and $\mathcal{P}=[k-1]$ are reduced to the $(\mL,\mP)$-TGRS codes with $\mathcal{L}=[n-k-1]\setminus \mL'$ and $\mathcal{P}=[k-1]\setminus \mP'$.
Clearly, $(\mL,\mP)$-TGRS codes with $\mathcal{L}=[n-k-1]$ and $\mathcal{P}=[k-1]$ are the most general case for TGRS codes. Throughout this paper, we always focus on the most general case and assume that $\mathcal{L}=[n-k-1]$ and $\mathcal{P}=[k-1]$.

Next, we will explore the properties and generator matrices of $(\mL,\mP)$-TGRS codes.

\begin{lem}\label{L1}
Let $\boldsymbol{\alpha}=(a_1,a_2,...,a_n)\in\mathbb{F}_{q}^n$ with distinct $a_1,...,a_n$, $\boldsymbol{\nu}=(v_1,...,v_n)\in(\mathbb{F}_{q}^*)^n$ and $B$ be defined as in \eqref{S8-B}. Then we have the following:
\begin{enumerate}
  \item The set of twisted polynomials $F(\mL,\mP,B)$ defined as in \eqref{S5} is a $k$-dimensional subspace of $\mathbb{F}_q[x]$, and the set $\{g_i(x): 0 \leq i \leq k-1\}$ is a basis of $F(\mL,\mP,B)$, where
      \begin{equation}\label{S7}
      g_i(x)=x^i+\sum_{j=0}^{n-k-1}b_{i,j}x^{k+j},\ 0\leq{i}\leq{k-1}.
      \end{equation}
  \item The $(\mL,\mP)$-TGRS code $\C(\mL,\mP,B)$ defined as in \eqref{S6} is an $[n,k]$ linear code with the generator matrix
        \begin{equation}\label{S8-GM}
        G_{TGRS}=\begin{pmatrix}
        ev_{\boldsymbol{\alpha},\boldsymbol{\nu}}(g_0(x))\\
        \vdots\\
        ev_{\boldsymbol{\alpha},\boldsymbol{\nu}}(g_{k-1}(x))\\
        \end{pmatrix}
        =\begin{pmatrix}
        v_1(1+\sum\limits_{j=0}^{n-k-1}b_{0,j}a_1^{k+j})&\cdots&v_n(1+\sum\limits_{j=0}^{n-k-1}b_{0,j}a_n^{k+j})\\
        v_1(a_1+\sum\limits_{j=0}^{n-k-1}b_{1,j}a_1^{k+j})&\cdots&v_n(a_n+\sum\limits_{j=0}^{n-k-1}b_{1,j}a_n^{k+j})\\
        \vdots&\ddots&\vdots\\
        v_1(a_1^{k-1}+\sum\limits_{j=0}^{n-k-1}b_{k-1,j}a_1^{k+j})&\cdots&v_n(a_n^{k-1}+\sum\limits_{j=0}^{n-k-1}b_{k-1,j}a_n^{k+j})
        \end{pmatrix}.
        \end{equation}
  \end{enumerate}
\end{lem}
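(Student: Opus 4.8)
The plan is to prove the two parts of Lemma~\ref{L1} in sequence, since the second follows almost immediately once the first is established.

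\textbf{Part (1).} First I would argue that the map sending $f(x)=\sum_{i=0}^{k-1}f_ix^i$ to the twisted polynomial
\[
\Phi(f)(x)=\sum_{i=0}^{k-1}f_ix^i+\sum_{i\in\mP}f_i\sum_{j\in\mL}b_{i,j}x^{k+j}
\]
is $\mathbb{F}_q$-linear, so that $F(\mL,\mP,B)=\Phi(\mathbb{F}_q[x]_{<k})$ is a subspace of $\mathbb{F}_q[x]$. Next I would observe that, since in the most general case $\mP=[k-1]$ and $\mL=[n-k-1]$, the image of the monomial $x^i$ under $\Phi$ is exactly $g_i(x)=x^i+\sum_{j=0}^{n-k-1}b_{i,j}x^{k+j}$, so by linearity $\Phi(f)=\sum_{i=0}^{k-1}f_i g_i(x)$ and hence $\{g_0,\dots,g_{k-1}\}$ spans $F(\mL,\mP,B)$. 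To see this spanning set is a basis, I would exhibit linear independence: the polynomial $g_i(x)$ has a term $x^i$ with coefficient $1$ and all its other terms have degree $\geq k>i$, so the ``low-degree part'' $x^i$ of $g_i$ is unique to $g_i$; formally, if $\sum_{i=0}^{k-1}c_ig_i(x)=0$ then comparing the coefficient of $x^t$ for each $t\in\{0,1,\dots,k-1\}$ forces $c_t=0$ (the higher twisted monomials $x^{k+j}$ cannot cancel the $x^t$ terms since $k+j\geq k$). Therefore $\dim F(\mL,\mP,B)=k$.

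\textbf{Part (2).} Here I would use that $ev_{\boldsymbol{\alpha},\boldsymbol{\nu}}$ is an $\mathbb{F}_q$-linear map from $\mathbb{F}_q[x]$ to $\mathbb{F}_q^n$, so $\C(\mL,\mP,B)=ev_{\boldsymbol{\alpha},\boldsymbol{\nu}}(F(\mL,\mP,B))$ is a linear code; its dimension equals $\dim F(\mL,\mP,B)=k$ provided $ev_{\boldsymbol{\alpha},\boldsymbol{\nu}}$ is injective on $F(\mL,\mP,B)$. Injectivity follows because any nonzero element of $F(\mL,\mP,B)$ has degree at most $n-1$ (the highest twisted monomial is $x^{k+(n-k-1)}=x^{n-1}$) and a nonzero polynomial of degree $\leq n-1$ cannot vanish at all $n$ distinct points $a_1,\dots,a_n$; this is the one place the distinctness of the $a_i$ is essential. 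Finally, applying $ev_{\boldsymbol{\alpha},\boldsymbol{\nu}}$ to the basis $\{g_i(x)\}$ from Part (1), the $i$-th row of the generator matrix is $ev_{\boldsymbol{\alpha},\boldsymbol{\nu}}(g_i(x))=\bigl(v_1(a_1^i+\sum_{j=0}^{n-k-1}b_{i,j}a_1^{k+j}),\dots,v_n(a_n^i+\sum_{j=0}^{n-k-1}b_{i,j}a_n^{k+j})\bigr)$, which is exactly the claimed matrix $G_{TGRS}$ in \eqref{S8-GM}.

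The argument is essentially routine linear algebra; the only mildly delicate points are the bookkeeping that shows $\{g_i\}$ is a basis (keeping track of which degrees appear in which $g_i$) and the degree bound $\deg f\leq n-1$ needed for injectivity of the evaluation map. Neither presents a real obstacle, so I expect no hard part beyond careful indexing.
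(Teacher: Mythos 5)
Your proposal is correct and follows essentially the same route as the paper: identify $\{g_i(x)\}$ as a basis of $F(\mL,\mP,B)$ by separating the low-degree part $x^i$ from the twist terms of degree $\geq k$, then push this basis through the linear evaluation map to get the generator matrix. If anything, your argument is slightly more explicit than the paper's at one point — you justify injectivity of $ev_{\boldsymbol{\alpha},\boldsymbol{\nu}}$ on $F(\mL,\mP,B)$ via the degree bound $\deg f\leq n-1$ and the distinctness of the $a_i$, whereas the paper simply asserts the map is one-to-one.
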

\begin{proof}
Let $f(x)=\sum_{i=0}^{k-1}f_ix^i+\sum_{i=0}^{k-1}f_i\sum_{j=0}^{n-k-1}b_{i,j}x^{k+j}\in F(\mL,\mP,B)$, where $f_i\in{\mathbb{F}_q}$, and $g_i(x)$'s are given as in \eqref{S7} for $0\leq{i}\leq{k-1}$.
We claim that $g_0(x),g_1(x),...,g_{k-1}(x)$ are linearly independent over $\fq$. Observe that each $g_i(x)$ contains a distinct monomial $x^i$, where $0\leq{i}\leq{k-1}$, and $\{x^i: 0\leq{i}\leq{k-1}\}$ is absolutely a basis of $\mathbb{F}_{q}[x]_{<k}$.
Additionally, the degree of $\sum_{j=0}^{n-k-1}b_{i,j}x^{k+j}$ in $g_{i}(x)$ is at least $k$. This means that $g_0(x),g_1(x),...,g_{k-1}(x)$ are linearly independent over $\fq$.
Since $|F(\mL,\mP,B)|=q^k$ by the definition and it is closed under the addition and scalar multiplication, $\{g_i(x): 0 \leq i \leq k-1\}$ is a basis of the vector space $F(\mL,\mP,B)$, namely, each $f(x)$ can be expressed as $f(x)=\sum_{i=0}^{k-1}f_ig_i(x)$.
This proves 1).

Note that $ev_{\boldsymbol{\alpha},\boldsymbol{\nu}}$ is a linear one-to-one mapping. Thus the $(\mL,\mP)$-TGRS code $\C(\mL,\mP,B)$ is an $[n,k]$ linear code and $\{ev_{\boldsymbol{\alpha},\boldsymbol{\nu}}(g_0(x)),...,ev_{\boldsymbol{\alpha},\boldsymbol{\nu}}(g_{(k-1)}(x))\}$ is a basis of $\C(\mL,\mP,B)$, which gives the generator matrix $G_{TGRS}$ of $\C(\mL,\mP,B)$. This completes the proof.
\end{proof}

Moreover, it should be noted that the generator matrix of the $(\mL,\mP)$-TGRS code can be expressed as
\begin{equation}\label{S9-G}
G_{TGRS}=[I_k|B]V_nV_0,
\end{equation}
where $I_k$ is the $k\times{k}$ identity matrix, $V_n$ is the $n\times{n}$ Vandermonde matrix and $V_0$ is a diagonal matrix with elements $\{v_1,v_2,...,v_n\}$, given by
\begin{equation*}
V_n=\begin{pmatrix}
1&1&\cdots&1\\
a_1&a_2&\cdots&a_n\\
\vdots&\vdots&\ddots&\vdots\\
a_1^{n-1}&a_2^{n-1}&\cdots&a_n^{n-1}\\
\end{pmatrix},
V_0=\begin{pmatrix}
v_1&&&\\
&v_2&&\\
&&\ddots&\\
&&&v_n
\end{pmatrix}.
\end{equation*}

\subsection{Equivalence of linear codes}

In the following, we introduce the equivalence of linear codes over $\fq$.

\begin{definition}(\cite{BEPU}) \label{D3}
Let $\C_1$ and $\C_2$ be linear codes over $\fq$ with length $n$. We say that $\C_1$ and $\C_2$ are equivalent if there is a permutation $\pi$ in the permutation group with order $n$ and $\boldsymbol{\nu}=(v_1,...,v_n)\in{(\mathbb{F}_q^*)^n}$ such that
\begin{equation*}
\C_2=\Phi_{\pi,\boldsymbol{\nu}}(\C_1),
\end{equation*}
where $\Phi_{\pi,\boldsymbol{\nu}}: \mathbb{F}_q^n \rightarrow \mathbb{F}_q^n$ is defined by
\begin{equation*}
(c_1,\ldots,c_n)\mapsto(v_1c_{\pi(1)},\ldots,v_nc_{\pi(n)}).
\end{equation*}
\end{definition}

The equivalence preserves essential properties of a linear code, including the length, minimum distance, dimension, generator and parity check matrices, dual code, and automorphism group \cite{HWPV}.


\begin{remark}\label{R2}
Accordingly, the $(\mL,\mP)$-TGRS code $\C(\mL,\mP,B)$ for any ${\bf \nu} \in(\fq^*)^n$ is equivalent to $\C(\mL,\mP,B)$ with ${\bf\nu}=(1,...,1)$.
\end{remark}

\section{MDS $(\mL,\mP)$-TGRS codes} \label{sect-3}

The study of MDS codes is of great significance because they provide optimal error detection and correction capabilities. This makes them indispensable in areas such as communication, data storage, and coding theory. MDS codes form an essential family of codes in coding theory.

In this section, we will investigate the MDS properties of $(\mL,\mP)$-TGRS codes for the most general case.  We first show some useful lemmas.

\begin{lem}(\cite{HWPV})\label{L2}
Let $\C$ be an $[n,k]$ linear code over $\mathbb{F}_q$. Let $G$ be a generator matrix of $\C$. Then $\C$ is an MDS code if and only if every $k\times{k}$ minor (determinant of a $k\times{k}$ submatrix) of $G$ is nonzero.
\end{lem}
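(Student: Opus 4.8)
The plan is to translate the MDS property into a rank condition on the column‑submatrices of $G$, and then read that condition off from determinants. First I would fix notation: write a codeword as $\boldsymbol c=\boldsymbol xG$ with $\boldsymbol x\in\fq^k$, and for a subset $T\subseteq\{1,\dots,n\}$ of coordinate positions let $G_T$ denote the $k\times|T|$ submatrix of $G$ formed by the columns indexed by $T$. The basic observation is that $\boldsymbol c$ has all coordinates in $T$ equal to zero exactly when $\boldsymbol xG_T=\boldsymbol 0$; hence $\C$ contains a \emph{nonzero} codeword vanishing on $T$ if and only if the rows of $G_T$ are linearly dependent, i.e. $\Rank(G_T)<k$. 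Here one uses that a generator matrix of an $[n,k]$ code has rank $k$, so $\boldsymbol x\neq\boldsymbol 0$ forces $\boldsymbol c\neq\boldsymbol 0$.

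Next I would invoke the Singleton bound, already recalled in the paper: since $\C$ is an $[n,k]$ code, $d\le n-k+1$, so $\C$ is MDS if and only if $d\ge n-k+1$, equivalently $\C$ contains no nonzero codeword of Hamming weight $\le n-k$. A nonzero codeword of weight $\le n-k$ has at least $k$ zero coordinates, hence vanishes on some $T$ with $|T|=k$; conversely, a nonzero codeword vanishing on a set of size $k$ automatically has weight $\le n-k$. Combining this with the observation above, $\C$ fails to be MDS if and only if there is a $k$-subset $T$ with $\Rank(G_T)<k$, i.e. some $k\times k$ submatrix of $G$ is singular. Taking the contrapositive, and using that a square matrix over $\fq$ is nonsingular precisely when its determinant is nonzero, yields the claim: $\C$ is MDS if and only if every $k\times k$ minor of $G$ is nonzero.

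I do not expect a genuine obstacle, since this is a classical equivalence (it is exactly the statement cited from \cite{HWPV}); the only subtlety worth spelling out is the reduction in the second step to coordinate sets of size \emph{exactly} $k$, which is legitimate because a nonzero codeword of weight $\le n-k$ necessarily vanishes on at least $k$ positions, so it suffices to test all $\binom{n}{k}$ column‑subsets of size $k$ rather than all larger zero sets. An alternative, essentially equivalent argument would observe that $G$ is a parity‑check matrix of the dual code $\C^{\perp}$ (an $[n,n-k]$ code), that $\C$ is MDS iff $\C^{\perp}$ is, and that a code with parity‑check matrix $H$ has minimum distance $\ge\delta$ iff every $\delta-1$ columns of $H$ are linearly independent; applying this with $\delta=k+1$ gives the same conclusion. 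I would present the direct argument, as it is self‑contained given the Singleton bound.
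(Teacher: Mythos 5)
Your proof is correct. The paper itself gives no argument for this lemma---it is quoted directly from \cite{HWPV}---and what you write is precisely the standard textbook proof: using the Singleton bound to reduce MDS-ness to the nonexistence of a nonzero codeword with at least $k$ zero coordinates, and translating that, via $\boldsymbol{c}=\boldsymbol{x}G$ and the fact that $G$ has rank $k$, into the nonsingularity of every $k\times k$ column submatrix of $G$; the one subtlety (testing only coordinate sets of size exactly $k$) is handled correctly.
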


\begin{lem}({\cite[Lemma III.1]{GU}})\label{L3}
Let $A_t$ be a $(t+1)\times (t+1)$ matrix over $\fq$ given by
$$A_t=\begin{pmatrix}
c_0&&&&\\
c_1&c_0&&&\\
c_2&c_1&c_0&&\\
\vdots&\vdots&\ddots&\ddots\\
c_{t}&c_{t-1}&\cdots&c_1&c_0
\end{pmatrix}, $$
where $c_0=1$ and $c_1,c_2,...,c_t\in{\mathbb{F}_{q}}$ for a nonnegative integer $t$. Then the inverse of $A_t$ is
\begin{equation*}
A_t^{-1}=\begin{pmatrix}
e_0&&&&\\
e_1&e_0&&&\\
e_2&e_1&e_0&&\\
\vdots&\vdots&\ddots&\ddots\\
e_{t}&e_{t-1}&\cdots&e_1&e_0
\end{pmatrix},
\end{equation*}
where $e_0=1$ and $e_i=-\sum_{j=0}^{i-1}e_jc_{i-j}$ for $0\leq i \leq t$.
\end{lem}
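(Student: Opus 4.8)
\textbf{Proof plan for Lemma \ref{L3}.} The statement claims that the lower-triangular Toeplitz matrix $A_t$ with unit diagonal has an inverse that is again a lower-triangular Toeplitz matrix, with first column $(e_0,e_1,\ldots,e_t)^{\rm T}$ determined by the recursion $e_0=1$ and $e_i=-\sum_{j=0}^{i-1}e_jc_{i-j}$. The natural approach is direct verification: I would exhibit the candidate matrix $A_t^{-1}$ as written and simply check that $A_tA_t^{-1}=I_{t+1}$ by computing the $(r,s)$-entry of the product and showing it equals $\delta_{r,s}$.

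First I would observe that both $A_t$ and the candidate inverse are lower-triangular Toeplitz, so their product is lower triangular and it suffices to check entries on and below the diagonal. Writing $A_t=(a_{r,s})$ with $a_{r,s}=c_{r-s}$ for $r\geq s$ and $a_{r,s}=0$ otherwise (indices $0\le r,s\le t$), and similarly the candidate $E=(e_{r,s})$ with $e_{r,s}=e_{r-s}$ for $r\geq s$, the $(r,s)$-entry of $A_tE$ for $r\geq s$ is $\sum_{j=s}^{r}c_{r-j}e_{j-s}$. Re-indexing with $i=r-s$ and shifting the summation variable, this becomes $\sum_{u=0}^{i}c_{i-u}e_{u}$, which depends only on $i=r-s$. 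For $i=0$ this is $c_0e_0=1$, giving the diagonal. For $i\geq 1$, the defining recursion $e_i=-\sum_{u=0}^{i-1}e_uc_{i-u}$ rearranges exactly to $\sum_{u=0}^{i}c_{i-u}e_u=0$ (the $u=i$ term is $c_0e_i=e_i$). Hence $A_tE=I_{t+1}$, and since $A_t$ is square this makes $E=A_t^{-1}$.

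An equivalent and perhaps cleaner route is to identify lower-triangular Toeplitz $(t+1)\times(t+1)$ matrices with truncated polynomials (power series modulo $x^{t+1}$): $A_t$ corresponds to $c(x)=\sum_{u=0}^{t}c_ux^u$ with $c_0=1$, multiplication of such matrices corresponds to multiplication of polynomials modulo $x^{t+1}$, and the inverse corresponds to the inverse power series $e(x)=c(x)^{-1}\bmod x^{t+1}$, which exists precisely because $c_0=1$ is a unit. Comparing coefficients in $c(x)e(x)\equiv 1\pmod{x^{t+1}}$ yields the same recursion for the $e_i$. I would likely present the first (entrywise) argument since it is self-contained, but mention the power-series interpretation as motivation.

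There is essentially no hard obstacle here; the only thing requiring care is bookkeeping of the summation index when re-indexing $\sum_{j=s}^{r}c_{r-j}e_{j-s}$ into $\sum_{u=0}^{r-s}c_{r-s-u}e_u$ and verifying that the result depends only on the difference $r-s$ (this is what makes the product Toeplitz, not merely triangular). A secondary minor point is noting that the displayed recursion in the statement, written as ``$e_i=-\sum_{j=0}^{i-1}e_jc_{i-j}$ for $0\leq i\leq t$,'' is to be read together with $e_0=1$ as the base case (the empty sum at $i=0$ would give $e_0=0$, which is why the explicit $e_0=1$ is needed); I would state this explicitly to avoid ambiguity.
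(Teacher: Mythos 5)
Your proof is correct. The paper itself states this lemma as a quoted result from \cite[Lemma III.1]{GU} and gives no proof, so there is nothing internal to compare against; your direct entrywise verification that the candidate lower-triangular Toeplitz matrix satisfies $A_tE=I_{t+1}$ (equivalently, the truncated power-series identity $c(x)e(x)\equiv 1 \pmod{x^{t+1}}$) is the standard argument and fully establishes the statement, including the careful reading of the recursion with $e_0=1$ as the base case.
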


\begin{lem}\label{L4}
Let $\boldsymbol{\alpha}=(a_1,a_2,...,a_n)\in\mathbb{F}_{q}^n$ with distinct $a_i$'s, and $\mathcal{T}=\{t_1,\ldots,t_k\}$ be a $k$-subset of $\{1,...,n\}$.
Let $\prod_{i=1}^{k}(x-a_{t_i})=\sum_{j=0}^{k}c_jx^{k-j}$, where $c_j$'s are uniquely determined by $a_{t_i}$'s. For  any $0\leq{t}\leq{n-k-1} $, define $f_{t,s}\in{\mathbb{F}_q}$ for $0\leq{s}\leq{k-1}$ by the following
\begin{equation}\label{S10}
(a_{t_1}^{k+t},a_{t_2}^{k+t},\ldots,a_{t_k}^{k+t})=
(f_{t,0},f_{t,1},\ldots,f_{t,k-1})
\begin{pmatrix}
1&1&\cdots&1\\
a_{t_1}&a_{t_2}&\cdots&a_{t_k}\\
\vdots&\vdots&\ddots&\vdots\\
a_{t_1}^{k-1}&a_{t_2}^{k-1}&\cdots&a_{t_k}^{k-1}\\
\end{pmatrix},
\end{equation}
where $f_{t,s}$'s are determined by $a_{t_i}$'s and $t$.
Then
\begin{equation}\label{S11}
f_{t,s}=-\sum_{i=0}^{\min\{t,s\}}c_{i+k-s}e_{t-i},\, 0\leq s\leq k-1,
\end{equation}
where $e_0=1$ and $e_i=-\sum_{j=0}^{i-1}e_jc_{i-j}$ for $0\leq{i}\leq{t}$.
\end{lem}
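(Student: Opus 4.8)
The plan is to express the powers $a_{t_i}^{k+t}$ in terms of the lower powers $1, a_{t_i}, \dots, a_{t_i}^{k-1}$ by using the fact that each $a_{t_i}$ is a root of $P(x) := \prod_{i=1}^{k}(x - a_{t_i}) = \sum_{j=0}^{k} c_j x^{k-j}$ with $c_0 = 1$. From $P(a_{t_i}) = 0$ we get the reduction $a_{t_i}^{k} = -\sum_{j=1}^{k} c_j a_{t_i}^{k-j}$, and more generally, multiplying by $a_{t_i}^{t}$, the vector $(a_{t_1}^{k+t}, \dots, a_{t_k}^{k+t})$ is a fixed $\mathbb{F}_q$-linear combination (independent of $i$) of the rows of the Vandermonde-type matrix on the right-hand side of \eqref{S10}, together with the intermediate rows $(a_{t_1}^{k}, \dots, a_{t_k}^{k}), \dots, (a_{t_1}^{k+t-1}, \dots, a_{t_k}^{k+t-1})$. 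The coefficients $f_{t,s}$ are exactly the entries of the $s$-th coordinate after fully reducing modulo $P$.

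The key algebraic identity I would set up is the generating-function (formal power series) relation. Working in $\mathbb{F}_q[[x]]$, one has $1/P(x) = 1/\sum_{j\ge 0} c_j x^{j}$ (reading $P$ with the constant term $c_0 = 1$ first, i.e. reversing the polynomial), and the coefficients $e_i$ defined by $e_0 = 1$, $e_i = -\sum_{j=0}^{i-1} e_j c_{i-j}$ are precisely the coefficients of this reciprocal series: $\sum_{i \ge 0} e_i x^i = 1/\sum_{j\ge 0} c_j x^j$. This is exactly the relationship already recorded in Lemma \ref{L3}, where the lower-triangular Toeplitz matrix $A_t$ built from the $c_j$'s has inverse the lower-triangular Toeplitz matrix built from the $e_j$'s. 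So I would invoke Lemma \ref{L3} directly: set up the reduction of $x^{k}, x^{k+1}, \dots, x^{k+t}$ modulo $P(x)$ as a triangular linear system whose coefficient matrix is $A_t$ (the recursion for passing from $x^{k+i-1}$ to $x^{k+i}$ introduces exactly the $c_j$'s), then apply $A_t^{-1}$ from Lemma \ref{L3} to solve it, which produces the $e_i$'s. Tracking the $x^{s}$-coefficient through this inversion, one reads off $f_{t,s} = -\sum_{i=0}^{\min\{t,s\}} c_{i+k-s} e_{t-i}$, where the upper limit $\min\{t,s\}$ appears because $c_{i+k-s}$ vanishes when $i + k - s > k$ (i.e. $i > s$) and $e_{t-i}$ is only defined/nonzero bookkeeping-wise for $i \le t$.

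Concretely, the steps in order: (1) write $x^{k} \equiv -\sum_{j=1}^{k} c_j x^{k-j} \pmod{P(x)}$ and, for $0 \le r \le t$, derive the recursive reduction $x^{k+r} \equiv x \cdot x^{k+r-1} \pmod{P(x)}$, organizing the reductions of $x^{k}, \dots, x^{k+t}$ into a matrix equation $(\text{reduced powers}) = A_t \cdot (\text{something})$ or its transpose; (2) observe the coefficient matrix is the lower-triangular Toeplitz matrix of Lemma \ref{L3}; (3) invert using Lemma \ref{L3} to introduce the $e_i$'s; (4) extract the coefficient of $x^{s}$, simplify the resulting double sum using $c_0 = 1$ and the support of the $c_j$'s (only $c_0, \dots, c_k$ are nonzero), and arrive at \eqref{S11}; (5) finally note that $f_{t,s}$ as defined by \eqref{S10} is well-defined and unique because the $k \times k$ Vandermonde matrix in $a_{t_1}, \dots, a_{t_k}$ is invertible (the $a_{t_i}$ being distinct), so the closed form just computed is the answer. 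The main obstacle I anticipate is purely bookkeeping: getting the index shifts exactly right when converting "reduction modulo $P$" into the Toeplitz system and then correctly identifying which coefficient of the reciprocal series gives each $f_{t,s}$ — in particular making the truncation to $\min\{t,s\}$ fall out cleanly rather than as an ad hoc case split. Once the generating-function dictionary ($\sum e_i x^i = 1/\sum c_j x^j$) is firmly in place, the rest is a careful but routine extraction of coefficients.
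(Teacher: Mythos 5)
Your proposal is correct and follows essentially the same route as the paper: both reduce $x^{k+t}$ modulo $\prod_{i}(x-a_{t_i})$, identify the quotient/reciprocal-series coefficients with the $e_i$'s via the triangular Toeplitz inverse of Lemma \ref{L3}, and read off $f_{t,s}$ by comparing the coefficients of degree at most $k-1$. The paper merely packages your generating-function reduction as the explicit factorization $f^{(t)}(x)=g(x)h^{(t)}(x)$ with $f^{(t)}(x)=x^{k+t}-\sum_{s}f_{t,s}x^{s}$, which is the same computation in polynomial form.
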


\begin{proof}
From \eqref{S10}, we have $a_{t_i}^{k+t}=\sum_{s=0}^{k-1}f_{t,s}a_{t_i}^s$ for $1\leq i \leq k$.
Therefore, $a_{t_1},a_{t_2},\ldots,a_{t_k}$ are zeros of the polynomial $f^{(t)}(x)=x^{k+t}-\sum_{s=0}^{k-1}f_{t,s}x^s$. Note that
$a_{t_1},a_{t_2},\ldots,a_{t_k}$ are also zeros of the polynomial $g(x)=\sum_{j=0}^{k}c_jx^{k-j}=\prod_{i=1}^{k}(x-a_{t_i})$,
and $\deg(g(x))\leq \deg(f^{(t)}(x))$. Then it follows that $g(x)$ divides $f^{(t)}(x)$.
Hence, there exists some $h^{(t)}(x)=\sum_{i=0}^{t}w_i^{(t)}x^i \in \fq[x]$, where $w_i^{(t)}\in \fq$, such that
\begin{equation}\label{S12}
f^{(t)}(x)=g(x)h^{(t)}(x)=(\sum_{j=0}^{k}c_jx^{k-j})(\sum_{i=0}^{t}w_i^{(t)}x^i).
\end{equation}

Observe that in the polynomial $f^{(t)}(x)$, the coefficient of the term with degree $k+t$ is $1$, and all the coefficients of the terms with degree less than $k+t$ but greater than $k-1$ are $0$. It then follows from \eqref{S12} that
\begin{equation*}
(0,0,...,1)=
(w_0^{(t)},w_1^{(t)},...,w_{t}^{(t)})
\begin{pmatrix}
  c_0&0&\cdots&0\\
  c_1&c_0&\cdots&0\\
  \vdots&\vdots&\ddots&\vdots\\
  c_t&c_{t-1}&\cdots&c_0
  \end{pmatrix}.
\end{equation*}
Therefore we have $(w_0^{(t)},w_1^{(t)},...,w_{t}^{(t)})=(0,0,...,1)A_t^{-1}$, where
\[
A_t=\begin{pmatrix}
c_0&0&\cdots&0\\
c_1&c_0&\cdots&0\\
\vdots&\vdots&\ddots&\vdots\\
c_t&c_{t-1}&\cdots&c_0
\end{pmatrix}.
\]
By Lemma \ref{L3}, we have
\begin{equation*}
(w_0^{(t)},w_1^{(t)},...,w_{t}^{(t)})=(0,0,...,1)A_t^{-1}=(e_t,e_{t-1},...,e_0).
\end{equation*}
Thus $w_i^{(t)}=e_{t-i}$ for $0\leq{i}\leq{t}$. By comparing the coefficients of terms with degree $\leq k-1$ on both sides of \eqref{S12}, we obtain
\begin{equation*}
f_{t,s}=-\sum_{i=0}^{\min\{t,s\}}c_{i+k-s}w_i^{(t)}=-\sum_{i=0}^{\min\{t,s\}}c_{i+k-s}e_{t-i}, \,\,0\leq{s}\leq{k-1}.
\end{equation*}
This completes the proof.
\end{proof}

For given $\boldsymbol{\alpha}=(a_1,a_2,...,a_n)\in\mathbb{F}_{q}^n$ with distinct $a_i$'s and $k$-subset $\mathcal{T}=\{t_1,\ldots,t_k\}$ of $\{1,...,n\}$, the set $\{a_{t_i}:i\in \mathcal{T}\}$ defines a matrix $F_\mathcal{T}$ over $\fq$ given by 
\begin{equation}\label{S13}
F_\mathcal{T}=\begin{pmatrix}
f_{0,0}&f_{0,1}&\cdots&f_{0,k-1}\\
f_{1,0}&f_{1,1}&\cdots&f_{1,k-1}\\
\vdots&\vdots&\ddots&\vdots\\
f_{n-k-1,0}&f_{n-k-1,1}&\cdots&f_{n-k-1,k-1}\\
\end{pmatrix},
\end{equation}
where $f_{t,s}$'s  are defined by  \eqref{S11} for $0\leq{t}\leq{n-k-1} $ and $0\leq{s}\leq{k-1}$. 

In the following, we investigate the MDS property of $(\mL,\mP)$-TGRS codes for the most general case where $\mathcal{L}=[n-k-1]$ and $\mathcal{P}=[k-1]$.
\begin{thm}\label{T1}
Let $\boldsymbol{\alpha}=(a_1,\ldots,a_n)\in\mathbb{F}_{q}^n$ with distinct $a_i$'s, $\boldsymbol{\nu}=(v_1,\ldots,v_n)\in(\mathbb{F}_{q}^*)^n$ and $B=[b_{i,j}] \in \fq^{k\times{(n-k)}}$ be as in \eqref{S8-B}. Let $I_k$ be the $k\times{k}$ identity matrix over $\fq$ and $F_\mathcal{T}$ be the $(n-k)\times k$ matrix defined as in \eqref{S13}. Then the $(\mL,\mP)$-TGRS code $\C(\mL,\mP,B)$ defined by \eqref{S6} is an MDS code if and only if $B\in\Omega$, where
\begin{equation} \label{Omega}
\Omega:=\{B\in\mathbb{F}_{q}^{k\times(n-k)}:\, \vert{I_k+BF_\mathcal{T}}\vert\neq0 \mbox{ for all $k$-subset } \mathcal{T}\subseteq\{1,\ldots,n\} \}.
\end{equation}
\end{thm}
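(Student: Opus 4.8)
The plan is to invoke Lemma \ref{L2}: $\C(\mL,\mP,B)$ is MDS if and only if every $k\times k$ minor of a generator matrix is nonzero. I would work with the factored form $G_{TGRS}=[I_k\mid B]\,V_nV_0$ from \eqref{S9-G}. Fix a $k$-subset $\mathcal{T}=\{t_1,\ldots,t_k\}\subseteq\{1,\ldots,n\}$ and let $G_\mathcal{T}$ denote the $k\times k$ submatrix of $G_{TGRS}$ obtained by keeping the columns indexed by $\mathcal{T}$. Since $V_0$ is diagonal, restricting to these columns factors cleanly as $G_\mathcal{T}=[I_k\mid B]\,(V_n)_\mathcal{T}\cdot\mathrm{diag}(v_{t_1},\ldots,v_{t_k})$, where $(V_n)_\mathcal{T}$ is the $n\times k$ matrix formed by the columns of $V_n$ indexed by $\mathcal{T}$. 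Because the $v_{t_i}$ are nonzero, $|G_\mathcal{T}|\neq 0$ if and only if $\bigl|[I_k\mid B](V_n)_\mathcal{T}\bigr|\neq 0$.

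Next I would split $(V_n)_\mathcal{T}$ into its top $k$ rows $U$ (the rows of powers $0,1,\ldots,k-1$) and its bottom $n-k$ rows $U'$ (the rows of powers $k,k+1,\ldots,n-1$), so that $[I_k\mid B](V_n)_\mathcal{T}=U+BU'$. The block $U$ is precisely the $k\times k$ Vandermonde matrix appearing on the right-hand side of \eqref{S10}, hence invertible since the $a_{t_i}$ are distinct. Factoring $U$ out on the right, $U+BU'=(I_k+BU'U^{-1})U$, so $\bigl|[I_k\mid B](V_n)_\mathcal{T}\bigr|=|I_k+BU'U^{-1}|\cdot|U|$, which is nonzero if and only if $|I_k+BU'U^{-1}|\neq 0$.

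Finally I would identify $U'U^{-1}$ with the matrix $F_\mathcal{T}$ of \eqref{S13}. By the defining relation \eqref{S10}, for each $0\leq t\leq n-k-1$ the $t$-th row $(a_{t_1}^{k+t},\ldots,a_{t_k}^{k+t})$ of $U'$ equals $(f_{t,0},\ldots,f_{t,k-1})\,U$; stacking these identities over $t$ gives $U'=F_\mathcal{T}U$, hence $U'U^{-1}=F_\mathcal{T}$. Chaining the equivalences above, $|G_\mathcal{T}|\neq 0$ if and only if $|I_k+BF_\mathcal{T}|\neq 0$. Letting $\mathcal{T}$ range over all $k$-subsets of $\{1,\ldots,n\}$ and applying Lemma \ref{L2} yields that $\C(\mL,\mP,B)$ is MDS if and only if $B\in\Omega$, with $\Omega$ as in \eqref{Omega}.

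The argument is mostly bookkeeping; the one place that demands care is keeping the conventions straight — splitting the selected Vandermonde columns so that the top block is \emph{exactly} the matrix in \eqref{S10}, and factoring $U$ out on the \emph{right} (not the left) so that the coefficient matrix enters as $BF_\mathcal{T}$ rather than $U^{-1}BU'$. It is worth noting that the closed form \eqref{S11} from Lemma \ref{L4} is not actually needed for this proof — only the relation \eqref{S10} that defines $F_\mathcal{T}$ — although it does provide the explicit entries of $F_\mathcal{T}$ in terms of $\boldsymbol{\alpha}$ and $\mathcal{T}$.
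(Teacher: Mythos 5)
Your proposal is correct and follows essentially the same route as the paper's proof: both reduce via Lemma \ref{L2} to the $k\times k$ minors of the generator matrix, use the relation \eqref{S10} (i.e., $a_{t_i}^{k+t}=\sum_s f_{t,s}a_{t_i}^s$, justified by Lemma \ref{L4}) to pull out the Vandermonde factor, and arrive at $\vert G_\mathcal{T}\vert=\vert I_k+BF_\mathcal{T}\vert\cdot\vert U\vert$ up to nonzero scalars from $\boldsymbol{\nu}$. The only difference is cosmetic: you organize the computation as the block factorization $[I_k\mid B](V_n)_\mathcal{T}=(I_k+BF_\mathcal{T})U$ starting from \eqref{S9-G}, whereas the paper performs the same factorization at the level of determinant entries after normalizing $\boldsymbol{\nu}=(1,\ldots,1)$.
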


\begin{proof}
Up to the equivalence of codes, we always assume that $\boldsymbol{\nu}=(1,\ldots,1)$ in the proof.
By Lemma \ref{L2}, $\C(\mL,\mP,B)$ is an MDS code if and only if all $k\times{k}$ minors of the generator matrix $G_{TGRS}$ in \eqref{S8-GM} are nonzero. Then $\C(\mL,\mP,B)$ is MDS if and only if the determinant of the matrix generated by any $k$ columns of $G_{TGRS}$ is nonzero. Let $\mathcal{T}:=\{t_1,...,t_k\}$ be a $k$-subset of $\{1,...,n\}$. Then $\mathcal{T}$ corresponds to the index set of the $k$ columns of $G_{TGRS}$. It should be noted that the only difference between the columns of $G_{TGRS}$ lies in $a_i$, where $1\leq i \leq n$. Without loss of generality, we focus on the first $k$ columns of $G_{TGRS}$, namely, the case $\mathcal{T}=\{1,...,k\}$. Then the $k\times{k}$ submatrix of $G_{TGRS}$ corresponding to $\mathcal{T}$ is given by
\begin{equation}\label{S15}
G_\mathcal{T}=\begin{pmatrix}
  1+\sum\limits_{j=0}^{n-k-1}b_{0,j}a_1^{k+j}&\cdots&1+\sum\limits_{j=0}^{n-k-1}b_{0,j}a_k^{k+j}\\
a_1+\sum\limits_{j=0}^{n-k-1}b_{1,j}a_1^{k+j}&\cdots&a_k+\sum\limits_{j=0}^{n-k-1}b_{1,j}a_k^{k+j}\\
\vdots&\vdots&\vdots\\
a_1^{k-1}+\sum\limits_{j=0}^{n-k-1}b_{k-1,j}a_1^{k+j}&\cdots&a_k^{k-1}+\sum\limits_{j=0}^{n-k-1}b_{k-1,j}a_k^{k+j}\\
  \end{pmatrix}.
\end{equation}

Next, we will compute the determinant of $G_\mathcal{T}$.
By Lemma \ref{L4}, for $1\leq i \leq k$ and $0\leq{j}\leq{n-k-1}$, the terms $a_i^{k+j}$  can be expressed as $a_i^{k+j}=\sum_{s=0}^{k-1}f_{j,s}a_i^s$, where $f_{j,s}$ is given as in \eqref{S11}. It then follows that
\begin{equation*}
\vert{G_\mathcal{T}}\vert=\begin{vmatrix}
  1+\sum\limits_{s=0}^{k-1}\sum\limits_{j=0}^{n-k-1}b_{0,j}f_{j,s}a_1^s&\cdots&1+\sum\limits_{s=0}^{k-1}\sum\limits_{j=0}^{n-k-1}b_{0,j}f_{j,s}a_k^s\\
a_1+\sum\limits_{s=0}^{k-1}\sum\limits_{j=0}^{n-k-1}b_{1,j}f_{j,s}a_1^s&\cdots&a_k+\sum\limits_{s=0}^{k-1}\sum\limits_{j=0}^{n-k-1}b_{1,j}f_{j,s}a_k^s\\
\vdots&\vdots&\vdots\\
a_1^{k-1}+\sum\limits_{s=0}^{k-1}\sum\limits_{j=0}^{n-k-1}b_{k-1,j}f_{j,s}a_1^s&\cdots&a_k^{k-1}+\sum\limits_{s=0}^{k-1}\sum\limits_{j=0}^{n-k-1}b_{k-1,j}f_{j,s}a_k^s
\end{vmatrix}.
\end{equation*}
By decomposing the matrix corresponding to the determinant, we have
\begin{equation*}
\vert{G_\mathcal{T}}\vert=\begin{vmatrix}
1+\sum\limits_{j=0}^{n-k-1}b_{0,j}f_{j,0}&\sum\limits_{j=0}^{n-k-1}b_{0,j}f_{j,1}&\cdots&\sum\limits_{j=0}^{n-k-1}b_{0,j}f_{j,k-1}\\
\sum\limits_{j=0}^{n-k-1}b_{1,j}f_{j,0}&1+\sum\limits_{j=0}^{n-k-1}b_{1,j}f_{j,1}&\cdots&\sum\limits_{j=0}^{n-k-1}b_{1,j}f_{j,k-1}\\
\vdots&\vdots&\ddots&\vdots\\
\sum\limits_{j=0}^{n-k-1}b_{k-1,j}f_{j,0}&\sum\limits_{j=0}^{n-k-1}b_{k-1,j}f_{j,1}&\cdots&1+\sum\limits_{j=0}^{n-k-1}b_{k-1,j}f_{j,k-1}
\end{vmatrix}
\cdot\begin{vmatrix}
1&1&\cdots&1\\
a_1&a_2&\cdots&a_k\\
a_1^2&a_2^2&\cdots&a_k^2\\
  \vdots&\vdots&\ddots&\vdots\\
a_{1}^{k-1}&a_{2}^{k-1}&\cdots&a_{k}^{k-1}\\
\end{vmatrix}.
\end{equation*}
Note that the matrix on the right-hand side of $G_\mathcal{T}$ with respect to $a_i$'s is a $k\times{k}$ Vandermonde determinant. Thus it leads to
\begin{equation*}
\vert{G_\mathcal{T}}\vert=\begin{vmatrix}
1+\sum\limits_{j=0}^{n-k-1}b_{0,j}f_{j,0}&\sum\limits_{j=0}^{n-k-1}b_{0,j}f_{j,1}&\cdots&\sum\limits_{j=0}^{n-k-1}b_{0,j}f_{j,k-1}\\
\sum\limits_{j=0}^{n-k-1}b_{1,j}f_{j,0}&1+\sum\limits_{j=0}^{n-k-1}b_{1,j}f_{j,1}&\cdots&\sum\limits_{j=0}^{n-k-1}b_{1,j}f_{j,k-1}\\
\vdots&\vdots&\ddots&\vdots\\
\sum\limits_{j=0}^{n-k-1}b_{k-1,j}f_{j,0}&\sum\limits_{j=0}^{n-k-1}b_{k-1,j}f_{j,1}&\cdots&1+\sum\limits_{j=0}^{n-k-1}b_{k-1,j}f_{j,k-1}\\
\end{vmatrix}\cdot{\prod_{1\leq{j}<{i}\leq{k}}(a_i-a_j)}.
\end{equation*}
One can check that the remaining determinant as above can be expressed as $\vert{I_k+BF_\mathcal{T}}\vert$, where $F_\mathcal{T}$ is the matrix defined as in \eqref{S13}. Then it gives
\begin{equation*}
\vert{G_\mathcal{T}}\vert=\vert{I_k+BF_\mathcal{T}}\vert\cdot{\prod_{1\leq{j}<{i}\leq{k}}(a_i-a_j)}.
\end{equation*}
It is clear that $\vert{G_\mathcal{T}}\vert\neq{0}$ if and only if $\vert{I_k+BF_\mathcal{T}}\vert\neq{0}$ since ${\prod_{1\leq{j}<{i}\leq{k}}(a_i-a_j)}$ is nonzero.

With the discussion as above, we conclude that the code $\C(\mL,\mP,B)$ is MDS if and only if $B\in\Omega$, where $\Omega$ is given by \eqref{Omega}. This completes the proof.
\end{proof}

\begin{remark}\label{R4}
In Theorem \ref{T1}, we provide a necessary and sufficient condition for $(\mL,\mP)$-TGRS codes of the most general form to be MDS via the coefficient matrix $B$. It should be noted that by selecting specific coefficient matrix $B$, we can reproduce the main results on MDS property of TGRS codes in the previous works \cite{ZHAO,BEPU,SUIL,GU,DING}. Moreover, the condition presented in our Theorem \ref{T1} is more concise and simpler than that of Zhao et al. \cite{ZHAO}, and the proof of Theorem \ref{T1} is different from that of Zhao et al. and is much shorter and more efficient.
\end{remark}

\begin{remark}
When $B=\textbf{0}$, the $(\mL,\mP)$-TGRS code $\C(\mL,\mP,B)$ is reduced to a GRS code, and $\vert{I_k+BF_\mathcal{T}}\vert=\vert{I_k}\vert=1$ for all $k$-subset $\mathcal{T}$ which implies that it is MDS directly.
\end{remark}


In the following, we give some corollaries for Theorem \ref{T1} by selecting specific coefficient matrix $B$.
\begin{cor}\label{Cor1}
Let
\begin{equation*}
B=\begin{pmatrix}
  b_{0,0}&0&\cdots&0\\
  0&0&\dots&0\\
  \vdots&\vdots&\ddots&\vdots\\
  0&0&\dots&0\\
\end{pmatrix}
\end{equation*}
and $\nu=(1,...,1)$. Then the $(\mL,\mP)$-TGRS code $\C(\mL,\mP,B)$ in Theorem \ref{T1} is MDS if and only if for any $k$-subset $\mathcal{T}\subseteq\{1,...,n\}$ we have $b_{0,0}(-1)^k\prod_{i\in{\mathcal{T}}}a_i\neq{1}$, which was given in \cite[Lemma 4]{BEPU}.
\end{cor}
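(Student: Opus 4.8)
The plan is to specialize Theorem~\ref{T1} to the displayed matrix $B$ and to evaluate the determinant $\vert I_k + B F_{\mathcal{T}}\vert$ explicitly. First I would observe that $B$ has a single nonzero entry $b_{0,0}$ in position $(0,0)$, so the product $B F_{\mathcal{T}}$ has at most one nonzero row, namely its $0$th row, whose entries are $b_{0,0}$ times the $0$th row of $F_{\mathcal{T}}$, i.e.\ $b_{0,0}(f_{0,0},f_{0,1},\ldots,f_{0,k-1})$. Hence $I_k + B F_{\mathcal{T}}$ is the identity matrix with its $(0,0)$ entry replaced by $1 + b_{0,0}f_{0,0}$ and the rest of its $0$th row replaced by $b_{0,0}f_{0,j}$ for $1\le j\le k-1$; such an upper-triangular-type perturbation of the identity has determinant exactly $1 + b_{0,0}f_{0,0}$. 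So the MDS condition $\vert I_k + B F_{\mathcal{T}}\vert \ne 0$ becomes simply $1 + b_{0,0} f_{0,0} \ne 0$ for every $k$-subset $\mathcal{T}$.

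Next I would identify $f_{0,0}$ using Lemma~\ref{L4} with $t = 0$. When $t=0$, formula~\eqref{S11} gives $f_{0,s} = -\sum_{i=0}^{\min\{0,s\}} c_{i+k-s} e_{0-i} = -c_{k-s}e_0 = -c_{k-s}$, and in particular $f_{0,0} = -c_k$, where $c_k$ is the constant term of $\prod_{i=1}^{k}(x - a_{t_i}) = \sum_{j=0}^{k} c_j x^{k-j}$. Evaluating at $x = 0$ shows $c_k = \prod_{i\in\mathcal{T}}(-a_{t_i}) = (-1)^k \prod_{i\in\mathcal{T}} a_i$. Therefore $f_{0,0} = -(-1)^k\prod_{i\in\mathcal{T}} a_i$, and the condition $1 + b_{0,0}f_{0,0} \ne 0$ rewrites as $1 - b_{0,0}(-1)^k\prod_{i\in\mathcal{T}} a_i \ne 0$, i.e.\ $b_{0,0}(-1)^k\prod_{i\in\mathcal{T}} a_i \ne 1$, which is exactly the claimed condition and matches \cite[Lemma~4]{BEPU}.

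Putting these two computations together with Theorem~\ref{T1}: $\C(\mL,\mP,B)$ is MDS if and only if $B \in \Omega$, if and only if $\vert I_k + B F_{\mathcal{T}}\vert \ne 0$ for all $k$-subsets $\mathcal{T}$, if and only if $b_{0,0}(-1)^k\prod_{i\in\mathcal{T}} a_i \ne 1$ for all such $\mathcal{T}$. There is no real obstacle here; the only point requiring a little care is the determinant evaluation of $I_k + B F_{\mathcal{T}}$, where one must be careful that only the $0$th row is affected (because only the $0$th row of $B$ is nonzero), so the matrix is $I_k$ plus a matrix supported on a single row, and the cofactor expansion along that row — or simply noting it is triangular up to the diagonal — yields the diagonal-product-type answer $1 + b_{0,0}f_{0,0}$. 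I would also remark that this recovers the $1$-TGRS MDS criterion of Beelen et al.\ for the hook $h=0$ and twist $t=1$, thereby illustrating Remark~\ref{R4}.
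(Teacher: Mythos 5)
Your proposal is correct and follows exactly the route the paper intends for this corollary: specialize Theorem \ref{T1}, note that $I_k+BF_{\mathcal{T}}$ is upper triangular with determinant $1+b_{0,0}f_{0,0}$, and compute $f_{0,0}=-c_k=-(-1)^k\prod_{i\in\mathcal{T}}a_i$ from Lemma \ref{L4} with $t=0$, which yields the stated condition $b_{0,0}(-1)^k\prod_{i\in\mathcal{T}}a_i\neq 1$. No gaps; the paper itself gives no separate proof, treating the corollary as precisely this specialization.
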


\begin{cor}\label{Cor2}
Let
\begin{equation*}
B=\begin{pmatrix}
  0&0&\cdots&0\\
  0&0&\dots&0\\
  \vdots&\vdots&\ddots&\vdots\\
  b_{k-1,0}&0&\dots&0\\
\end{pmatrix}
\end{equation*}
and $\nu=(1,...,1)$.
Then the $(\mL,\mP)$-TGRS code $\C(\mL,\mP,B)$ in Theorem \ref{T1} is MDS if and only if for any $k$-subset $\mathcal{T}\subseteq\{1,...,n\}$ we have $b_{k-1,0}\sum_{i\in{\mathcal{T}}}a_i\neq{-1}$, which was given in \cite[Lemma 10]{BEPU}.
\end{cor}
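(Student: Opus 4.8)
The plan is to invoke Theorem~\ref{T1} and evaluate the determinant $\vert I_k+BF_\mathcal{T}\vert$ for the given rank-one coefficient matrix $B$. By Theorem~\ref{T1}, the code $\C(\mL,\mP,B)$ is MDS if and only if $\vert I_k+BF_\mathcal{T}\vert\neq 0$ for every $k$-subset $\mathcal{T}\subseteq\{1,\ldots,n\}$, so the whole task reduces to computing this determinant for each such $\mathcal{T}$, with $F_\mathcal{T}$ the matrix defined in \eqref{S13}.

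First I would note that the only nonzero entry of $B$ is $b_{k-1,0}$, lying in row $k-1$ and column $0$. Hence in the product $BF_\mathcal{T}$ every row vanishes except the last one, which equals $b_{k-1,0}(f_{0,0},f_{0,1},\ldots,f_{0,k-1})$. Consequently $I_k+BF_\mathcal{T}$ agrees with $I_k$ in its first $k-1$ rows and has last row $(b_{k-1,0}f_{0,0},\ldots,b_{k-1,0}f_{0,k-2},\,1+b_{k-1,0}f_{0,k-1})$; such a matrix is lower triangular, so its determinant is the product of its diagonal entries, namely
\[
\vert I_k+BF_\mathcal{T}\vert=1+b_{k-1,0}\,f_{0,k-1}.
\]
It remains to identify $f_{0,k-1}$. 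By Lemma~\ref{L4}, setting $t=0$ in \eqref{S11} gives $\min\{0,s\}=0$, so $f_{0,s}=-c_{k-s}e_0=-c_{k-s}$, and in particular $f_{0,k-1}=-c_1$. Since $\prod_{i=1}^{k}(x-a_{t_i})=\sum_{j=0}^{k}c_jx^{k-j}$, Vieta's formulas give $c_1=-\sum_{i\in\mathcal{T}}a_i$, whence $f_{0,k-1}=\sum_{i\in\mathcal{T}}a_i$ and
\[
\vert I_k+BF_\mathcal{T}\vert=1+b_{k-1,0}\sum_{i\in\mathcal{T}}a_i.
\]
This is nonzero precisely when $b_{k-1,0}\sum_{i\in\mathcal{T}}a_i\neq -1$, and imposing it over all $k$-subsets $\mathcal{T}$ yields the stated criterion, recovering \cite[Lemma 10]{BEPU}.

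Since the argument is a direct specialization of Theorem~\ref{T1}, there is no genuine obstacle. The only points that need a little care are recognizing the triangular structure of $I_k+BF_\mathcal{T}$, so that its determinant collapses to the single entry $1+b_{k-1,0}f_{0,k-1}$, and tracking the sign conventions in \eqref{S11} and in Vieta's formula so that $f_{0,k-1}$ comes out as $+\sum_{i\in\mathcal{T}}a_i$ rather than its negative.
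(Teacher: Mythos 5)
Your computation is correct and follows exactly the route the paper intends for this corollary: specialize Theorem \ref{T1} to the given $B$, note that $BF_\mathcal{T}$ has only its last row nonzero so that $\vert I_k+BF_\mathcal{T}\vert=1+b_{k-1,0}f_{0,k-1}$, and then use \eqref{S11} with $t=0$ together with Vieta's formula to get $f_{0,k-1}=-c_1=\sum_{i\in\mathcal{T}}a_i$. The paper states the corollary without writing out these details, and your argument supplies them accurately, including the sign bookkeeping.
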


\begin{cor}\label{Cor3}
Let
\begin{equation*}
B=\begin{pmatrix}
  0&0&\cdots&0\\
  \vdots&\vdots&&\vdots\\
  0&b_{k-2,1}&\dots&0\\
  b_{k-1,0}&0&\dots&0
  \end{pmatrix}.
\end{equation*}
Then the $(\mL,\mP)$-TGRS code $\C(\mL,\mP,B)$ in Theorem \ref{T1} is MDS if and only if for any $k$-subset $\mathcal{T}\subseteq\{1,...,n\}$ we have $\prod_{i\in{\mathcal{T}}}(x-a_i)=\sum_{j=0}^kc_jx^{k-j}$, which was given in \cite[Theorem 3.3]{SUIL}.
\end{cor}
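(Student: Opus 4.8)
The plan is a direct specialization of Theorem \ref{T1}. By that theorem, the code $\C(\mL,\mP,B)$ with this particular $B$ is MDS if and only if $B\in\Omega$, i.e., $\vert I_k+BF_\mathcal{T}\vert\neq 0$ for every $k$-subset $\mathcal{T}\subseteq\{1,\dots,n\}$, where $F_\mathcal{T}$ is defined in \eqref{S13}. So I would fix such a $\mathcal{T}$, write $\prod_{i\in\mathcal{T}}(x-a_i)=\sum_{j=0}^{k}c_jx^{k-j}$ with $c_0=1$, and first describe $BF_\mathcal{T}$ explicitly. Only two rows of $B$ are nonzero: the row indexed $k-2$, whose unique nonzero entry $b_{k-2,1}$ lies in column $1$, and the row indexed $k-1$, whose unique nonzero entry $b_{k-1,0}$ lies in column $0$. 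Hence $BF_\mathcal{T}$ has all rows zero except the row indexed $k-2$, which equals $b_{k-2,1}$ times the row of $F_\mathcal{T}$ indexed by $t=1$, and the row indexed $k-1$, which equals $b_{k-1,0}$ times the row of $F_\mathcal{T}$ indexed by $t=0$. Consequently $I_k+BF_\mathcal{T}$ agrees with $I_k$ in every row except rows $k-2$ and $k-1$.

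Next I would extract those two rows of $F_\mathcal{T}$ from Lemma \ref{L4}. Since $\min\{t,s\}\le t\le 1$ in \eqref{S11} when $t\in\{0,1\}$, only $e_0=1$ and $e_1=-c_1$ occur, and one gets $f_{0,s}=-c_{k-s}$ for $0\le s\le k-1$ and $f_{1,s}=c_1c_{k-s}-c_{k-s+1}$ for $1\le s\le k-1$ (with the convention $c_j=0$ for $j>k$). Because the first $k-2$ rows of $I_k+BF_\mathcal{T}$ are simply the corresponding rows of the identity matrix, expanding the determinant successively along these rows collapses it to the $2\times 2$ minor lying in rows $\{k-2,k-1\}$ and columns $\{k-2,k-1\}$. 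Substituting $f_{0,k-2}=-c_2$, $f_{0,k-1}=-c_1$, $f_{1,k-2}=c_1c_2-c_3$ and $f_{1,k-1}=c_1^2-c_2$ gives
\begin{equation*}
\vert I_k+BF_\mathcal{T}\vert=
\begin{vmatrix}
1+b_{k-2,1}(c_1c_2-c_3) & b_{k-2,1}(c_1^2-c_2)\\
-b_{k-1,0}c_2 & 1-b_{k-1,0}c_1
\end{vmatrix}.
\end{equation*}

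Finally I would expand this $2\times 2$ determinant; the two $c_1^2c_2$ contributions cancel, leaving
\begin{equation*}
\vert I_k+BF_\mathcal{T}\vert=1-b_{k-1,0}c_1+b_{k-2,1}(c_1c_2-c_3)+b_{k-1,0}b_{k-2,1}(c_1c_3-c_2^2),
\end{equation*}
and requiring this quantity to be nonzero for every $k$-subset $\mathcal{T}$ is exactly the MDS criterion of \cite[Theorem 3.3]{SUIL}. I do not expect a genuine obstacle once Theorem \ref{T1} and Lemma \ref{L4} are in hand; the points requiring care are the ``crossed'' pairing (the $B$-row indexed $k-2$ picks out the $t=1$ row of $F_\mathcal{T}$, while the $B$-row indexed $k-1$ picks out the $t=0$ row), keeping the iterated Laplace expansion aligned with columns $\{k-2,k-1\}$, and handling the small-$k$ boundary through the convention $c_j=0$ for $j>k$.
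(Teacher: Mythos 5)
Your specialization of Theorem \ref{T1} is correct: the identification of $BF_\mathcal{T}$ (with the ``crossed'' pairing of the $B$-row $k-2$ to the $t=1$ row of $F_\mathcal{T}$ and the $B$-row $k-1$ to the $t=0$ row), the values $f_{0,s}=-c_{k-s}$ and $f_{1,s}=c_1c_{k-s}-c_{k-s+1}$ from Lemma \ref{L4}, the collapse to the $2\times 2$ minor in rows and columns $\{k-2,k-1\}$, and the final expansion all check out, and this is exactly the route the paper intends, since the corollaries are stated as direct specializations of Theorem \ref{T1} without separate proofs. Note that the printed statement of Corollary \ref{Cor3} only records the defining relation $\prod_{i\in\mathcal{T}}(x-a_i)=\sum_{j=0}^{k}c_jx^{k-j}$ and omits the actual nonvanishing condition; your quantity $1-b_{k-1,0}c_1+b_{k-2,1}(c_1c_2-c_3)+b_{k-1,0}b_{k-2,1}(c_1c_3-c_2^2)\neq 0$ for every $k$-subset $\mathcal{T}$ is precisely what should appear there, and after substituting $c_j=(-1)^j\sigma_j$ it matches the condition of \cite[Theorem 3.3]{SUIL}.
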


\begin{cor}\label{Cor4}
Let
\begin{equation*}
B=\begin{pmatrix}
\boldsymbol{0}_{(k-\ell)\times{\ell}}&\boldsymbol{0}_{(k-\ell)\times{(n-k-\ell)}}\\
\boldsymbol{D}_{\ell\times{\ell}}&\boldsymbol{0}_{\ell\times{(n-k-\ell)}}\\
\end{pmatrix},
D=\begin{pmatrix}
  b_{k-\ell,0}&&&\\
  &b_{k-\ell+1,1}&&\\
  &&\ddots&\\
  &&&b_{k-\ell,\ell-1}\\
  \end{pmatrix}
\end{equation*}
and $\ell<min\{k,n-k\}$.
Then the $(\mL,\mP)$-TGRS code $\C(\mL,\mP,B)$ in Theorem \ref{T1} is MDS if and only if $D\in\Omega$, where
\begin{equation*}
\Omega=\{D\in{\mathbb{F}_q^{\ell\times{\ell}}}: M(D,\boldsymbol{\alpha},\mathcal{T},\ell)\neq{0} \mbox{ for all $k$-subset } \mathcal{T}\subseteq\{1,...,n\}\}
\end{equation*}
and
$$
\begin{aligned}
M(D,\boldsymbol{\alpha},\mathcal{T},\ell)
&=\begin{vmatrix}
1+b_{k-\ell,0}f_{0,k-\ell}&b_{k-\ell,0}f_{0,k-\ell+1}&\cdots&b_{k-\ell,0}f_{0,k-\ell}\\
b_{k-\ell+1,1}f_{1,k-\ell}&1+b_{k-\ell+1,1}f_{1,k-\ell+1}&\cdots&b_{k-\ell+1,1}f_{1,k-\ell}\\
\vdots&\vdots&\ddots&\vdots\\
b_{k-1,\ell-1}f_{\ell-1,k-\ell}&b_{k-1,\ell-1}f_{\ell-1,k-\ell+1}&\cdots&1+b_{k-1,\ell-1}f_{\ell-1,k-\ell}
\end{vmatrix}=\vert{I_k+BF_\mathcal{T}}\vert.
\end{aligned}
$$
This was also given in \cite[Theorem III.3]{GU}.
\end{cor}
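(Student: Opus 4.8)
The plan is to obtain Corollary \ref{Cor4} as a direct specialization of Theorem \ref{T1}, so that the only real work is to evaluate the determinant $|I_k+BF_\mathcal{T}|$ for the particular block matrix $B$ in the statement. By Theorem \ref{T1}, $\C(\mL,\mP,B)$ is MDS if and only if $|I_k+BF_\mathcal{T}|\neq 0$ for every $k$-subset $\mathcal{T}\subseteq\{1,\ldots,n\}$, where $F_\mathcal{T}$ is the $(n-k)\times k$ matrix in \eqref{S13} with entries $f_{t,s}$ from \eqref{S11}. Thus I would fix an arbitrary $\mathcal{T}$ and simplify $|I_k+BF_\mathcal{T}|$ using the very sparse shape of $B$.

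First I would record the shape of $BF_\mathcal{T}$. Since the first $k-\ell$ rows of $B$ are zero, and row $i=k-\ell+r$ of $B$ (for $0\le r\le \ell-1$) has a single nonzero entry $b_{k-\ell+r,\,r}$, namely the $(r,r)$ entry of $D$, the $(i,s)$ entry of $BF_\mathcal{T}$ is $0$ whenever $i<k-\ell$ and equals $b_{k-\ell+r,\,r}\,f_{r,s}$ when $i=k-\ell+r$. Splitting the index set $\{0,\ldots,k-1\}$ into $\{0,\ldots,k-\ell-1\}$ and $\{k-\ell,\ldots,k-1\}$, the matrix $I_k+BF_\mathcal{T}$ therefore has $I_{k-\ell}$ in its top-left block, the zero matrix in its top-right block, and is consequently lower block triangular. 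Hence $|I_k+BF_\mathcal{T}|=|I_{k-\ell}|\cdot|I_\ell+Y|=|I_\ell+Y|$, where $Y$ is the $\ell\times\ell$ bottom-right block with $(r,r')$ entry $b_{k-\ell+r,\,r}\,f_{r,\,k-\ell+r'}$ for $0\le r,r'\le \ell-1$.

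Next I would identify $|I_\ell+Y|$ with $M(D,\boldsymbol{\alpha},\mathcal{T},\ell)$: its $(r,r')$ entry is $\delta_{r,r'}+b_{k-\ell+r,\,r}\,f_{r,\,k-\ell+r'}$, and running $r$ over $0,\ldots,\ell-1$ reproduces exactly the displayed $\ell\times\ell$ determinant (the diagonal entries $b_{k-\ell,0},b_{k-\ell+1,1},\ldots,b_{k-1,\ell-1}$ of $D$ furnishing the row-dependent coefficients $b_{k-\ell+r,\,r}$). Combining this with Theorem \ref{T1} gives: $\C(\mL,\mP,B)$ is MDS $\iff$ $M(D,\boldsymbol{\alpha},\mathcal{T},\ell)\neq 0$ for all $k$-subsets $\mathcal{T}$ $\iff$ $D\in\Omega$, which is the assertion, and recovers \cite[Theorem III.3]{GU}. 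The hypothesis $\ell<\min\{k,n-k\}$ is used only to ensure the block $D$ fits inside $B$ and that the above index split is genuine.

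The argument is entirely routine once Theorem \ref{T1} is available; the only place that needs care is the index bookkeeping — matching the row index $i$ of $B$ with the index $r=i-(k-\ell)$ of the diagonal of $D$ and of the relevant rows $f_{0,\cdot},\ldots,f_{\ell-1,\cdot}$ of $F_\mathcal{T}$, and confirming that only those $\ell$ rows of $F_\mathcal{T}$ survive into the final determinant. I do not expect any substantive obstacle; this corollary is a specialization of Theorem \ref{T1}.
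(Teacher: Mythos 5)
Your proposal is correct and matches what the paper intends: Corollary \ref{Cor4} is presented as an immediate specialization of Theorem \ref{T1}, with $\vert I_k+BF_\mathcal{T}\vert$ collapsing to the displayed $\ell\times\ell$ determinant exactly via the block-triangular reduction you describe (zero top rows of $BF_\mathcal{T}$ giving $I_{k-\ell}$ and a zero top-right block, hence $\vert I_k+BF_\mathcal{T}\vert=\vert I_\ell+Y\vert$ with $Y_{r,r'}=b_{k-\ell+r,r}f_{r,k-\ell+r'}$). Your index bookkeeping is sound; the only discrepancy with the printed statement is a typo in the paper's last column of the determinant (it should read $f_{\cdot,k-1}$ rather than $f_{\cdot,k-\ell}$), which your computation in fact corrects.
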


\begin{cor}\label{Cor5}
Let
\[B=\begin{pmatrix}
\boldsymbol{0}_{(k-\ell)\times{\ell}}&\boldsymbol{0}_{(k-\ell)\times{(n-k-\ell)}}\\
\boldsymbol{A}_{\ell\times{\ell}}&\boldsymbol{0}_{\ell\times{(n-k-\ell)}}\\
\end{pmatrix},
A=\begin{pmatrix}
  b_{k-\ell,0}&b_{k-\ell,1}&\cdots&b_{k-\ell,\ell-1}\\
  b_{k-\ell+1,0}&b_{k-\ell+1,1}&\cdots&b_{k-\ell+1,\ell-1}\\
  \vdots&\vdots&\ddots&\vdots\\
  b_{k-1,0}&b_{k-1,1}&\cdots&b_{k-1,\ell-1}\\
  \end{pmatrix}
\]
and $\ell<min\{k,n-k\}$.
Then the $(\mL,\mP)$-TGRS code $\C(\mL,\mP,B)$ in Theorem \ref{T1} is MDS if and only if $A\in\Omega$, where
\begin{equation*}
\Omega=\{A\in{\mathbb{F}_q^{\ell\times{\ell}}}: \Psi(A,\boldsymbol{\alpha},\mathcal{T},\ell)\neq{0} \mbox{ for all $k$-subset }  \mathcal{T}\subseteq\{1,...,n\}\}
\end{equation*}
and
$$
\begin{aligned}
\Psi(A,\boldsymbol{\alpha},\mathcal{T},\ell)&=\begin{vmatrix}
1+\sum\limits_{i=0}^{\ell-1}b_{k-\ell,i}f_{i,k-\ell}&\sum\limits_{i=0}^{\ell-1}b_{k-\ell,i}f_{i,k-\ell+1}&\cdots&\sum\limits_{i=0}^{\ell-1}b_{k-\ell,i}f_{i,k-1}\\
\sum\limits_{i=0}^{\ell-1}b_{k-\ell+1,i}f_{i,k-\ell}&1+\sum\limits_{i=0}^{\ell-1}b_{k-\ell+1,i}f_{i,k-\ell+1}&\cdots&\sum\limits_{i=0}^{\ell-1}b_{k-\ell+1,i}f_{i,k-1}\\
\vdots&\vdots&\ddots&\vdots\\
\sum\limits_{i=0}^{\ell-1}b_{k-1,i}f_{i,k-\ell}&\sum\limits_{i=0}^{\ell-1}b_{k-1,i}f_{i,k-\ell+1}&\cdots&1+\sum\limits_{i=0}^{\ell-1}b_{k-1,i}f_{i,k-1}\\
\end{vmatrix}=\vert{I_k+BF_\mathcal{T}}\vert.&
\end{aligned}
$$
This was also given in \cite[Theorem 3.2]{DING}.
\end{cor}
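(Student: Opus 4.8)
The plan is to invoke Theorem~\ref{T1} directly and then simply evaluate the determinant $\vert I_k+BF_\mathcal{T}\vert$ for the particular block matrix $B$ appearing in the statement. By Theorem~\ref{T1}, the $(\mL,\mP)$-TGRS code $\C(\mL,\mP,B)$ is MDS if and only if $\vert I_k+BF_\mathcal{T}\vert\neq 0$ for every $k$-subset $\mathcal{T}\subseteq\{1,\ldots,n\}$, where $F_\mathcal{T}=[f_{t,s}]$ is the $(n-k)\times k$ matrix defined in \eqref{S13}. Hence the whole content of the corollary reduces to verifying the single identity $\vert I_k+BF_\mathcal{T}\vert=\Psi(A,\boldsymbol{\alpha},\mathcal{T},\ell)$, after which the claimed description of $\Omega$ is immediate.

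First I would exploit the block structure $B=\left(\begin{smallmatrix}\boldsymbol{0}_{(k-\ell)\times\ell}&\boldsymbol{0}_{(k-\ell)\times(n-k-\ell)}\\ \boldsymbol{A}_{\ell\times\ell}&\boldsymbol{0}_{\ell\times(n-k-\ell)}\end{smallmatrix}\right)$ to compute $BF_\mathcal{T}$. Since the first $k-\ell$ rows of $B$ vanish, the first $k-\ell$ rows of $BF_\mathcal{T}$ vanish as well; and because the $(k-\ell+r)$-th row of $B$ (for $0\le r\le\ell-1$) is supported on its first $\ell$ columns, only the first $\ell$ rows of $F_\mathcal{T}$ contribute, so the $(k-\ell+r)$-th row of $BF_\mathcal{T}$ is $\big(\sum_{i=0}^{\ell-1}b_{k-\ell+r,i}f_{i,0},\ \ldots,\ \sum_{i=0}^{\ell-1}b_{k-\ell+r,i}f_{i,k-1}\big)$. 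Consequently, the first $k-\ell$ rows of $I_k+BF_\mathcal{T}$ are exactly the standard unit vectors $\boldsymbol{e}_0,\ldots,\boldsymbol{e}_{k-\ell-1}$, and only the last $\ell$ rows carry the entries of $A$. The hypothesis $\ell<\min\{k,n-k\}$ is precisely what makes the zero blocks $\boldsymbol{0}_{(k-\ell)\times\ell}$ and $\boldsymbol{0}_{\ell\times(n-k-\ell)}$ non-vacuous, so the decomposition is legitimate.

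Next I would perform a Laplace (cofactor) expansion of $\vert I_k+BF_\mathcal{T}\vert$ along those first $k-\ell$ unit-vector rows — equivalently, successively strike out each pivot row together with its pivot column — which collapses the $k\times k$ determinant to the determinant of the lower-right $\ell\times\ell$ submatrix indexed by rows $k-\ell,\ldots,k-1$ and columns $k-\ell,\ldots,k-1$, whose $(r,c)$-entry is $\delta_{r,c}+\sum_{i=0}^{\ell-1}b_{k-\ell+r,i}f_{i,k-\ell+c}$. This is literally the determinant displayed as $\Psi(A,\boldsymbol{\alpha},\mathcal{T},\ell)$ in the statement, so $\vert I_k+BF_\mathcal{T}\vert=\Psi(A,\boldsymbol{\alpha},\mathcal{T},\ell)$, and combining with Theorem~\ref{T1} finishes the proof. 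I do not anticipate any real obstacle here; the only point demanding care is the index bookkeeping — aligning the surviving $\ell\times\ell$ block of $BF_\mathcal{T}$ (rows offset by $k-\ell$, columns $s=k-\ell,\ldots,k-1$) with the ranges $i=0,\ldots,\ell-1$ appearing in $\Psi$ — and noting en route that restricting $A$ to a diagonal matrix recovers Corollary~\ref{Cor4} as a special case.
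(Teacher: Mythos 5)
Your proposal is correct and follows exactly the route the paper intends: the corollary is stated as an immediate specialization of Theorem \ref{T1}, and your block computation of $BF_\mathcal{T}$ (zero first $k-\ell$ rows, then collapsing $\vert I_k+BF_\mathcal{T}\vert$ to the lower-right $\ell\times\ell$ block via the unit-vector rows) is precisely the verification that $\vert I_k+BF_\mathcal{T}\vert=\Psi(A,\boldsymbol{\alpha},\mathcal{T},\ell)$ which the paper leaves implicit. No gaps; the index bookkeeping you flag is handled correctly.
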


\begin{example}\label{EX1}
Let $n=6$, $k=4$, $q=7$, $\mathbb{F}_7=\{0,1,2,3,4,5,6\}$, $\boldsymbol{\alpha}=(1,2,3,4,5,6)\in{\mathbb{F}_7^6}$, $\boldsymbol{\nu}=(1,\ldots,1)$ and $B=[b_{i,j}] \in \fq^{4 \times 2}$.
Recall from Theorem \ref{T1} that the $(\mL,\mP)$-TGRS code is MDS if and only if $B\in{\Omega}$. Magma experiments shows that $\Omega$ is given by
\[\Omega=\left\{\begin{pmatrix}
4&6\\
5&5\\
5&2\\
4&0\\
\end{pmatrix},
\begin{pmatrix}
3&5\\
4&3\\
2&1\\
6&3\\
\end{pmatrix},
\begin{pmatrix}
1&4\\
1&1\\
3&5\\
0&1\\
\end{pmatrix},
\begin{pmatrix}
3&3\\
0&4\\
1&1\\
4&5\\
\end{pmatrix},
\begin{pmatrix}
6&5\\
1&1\\
6&6\\
0&6\\
\end{pmatrix},
\begin{pmatrix}
0&5\\
5&5\\
3&2\\
1&1\\
\end{pmatrix},
\begin{pmatrix}
1&1\\
6&5\\
4&2\\
6&0\\
\end{pmatrix},
\begin{pmatrix}
0&6\\
6&3\\
2&6\\
2&1\\
\end{pmatrix},
\begin{pmatrix}
3&2\\
3&1\\
4&0\\
4&6\\
\end{pmatrix},\ldots
\right\},
\]
where the cardinality of $\Omega$ is $390841$. When $B\in \Omega$, $\C(\mL,\mP,B)$ is a $[6,4,3]$ MDS code.

\end{example}

\begin{example}\label{EX2}
Let $n=6$, $k=3$, $q=7$, $\mathbb{F}_7=\{0,1,2,3,4,5,6\}$, $\boldsymbol{\alpha}=(1,2,3,4,5,6)\in{\mathbb{F}_7^6}$, $\boldsymbol{\nu}=(1,\ldots,1)$ and $B=[b_{i,j}] \in \fq^{3 \times 3}$.
Recall from Theorem \ref{T1} that the $(\mL,\mP)$-TGRS code is MDS if and only if $B\in{\Omega}$. Magma experiments shows that $\Omega$ is given by
\begin{equation*}
\Omega=\left\{\begin{pmatrix}
2&5&3\\
2&1&1\\
3&2&2\\
\end{pmatrix},
\begin{pmatrix}
3&4&4\\
0&0&3\\
4&0&2\\
\end{pmatrix},
\begin{pmatrix}
5&1&6\\
5&2&4\\
3&1&0\\
\end{pmatrix},
\begin{pmatrix}
3&2&4\\
2&1&3\\
6&0&5\\
\end{pmatrix},
\begin{pmatrix}
1&6&6\\
5&0&2\\
5&2&4\\
\end{pmatrix},\ldots
\right\}.
\end{equation*}
where the cardinality of $\Omega$ is $894747$. When $B\in \Omega$, $\C(\mL,\mP,B)$ is a $[6,3,4]$ MDS code.

\end{example}


\begin{example}\label{EX3}
Let $n=8$, $k=3$, $q=9$, $\mathbb{F}_9^*=\langle z\rangle $, $\boldsymbol{\alpha}=\{1,2,z,z^2,z^3,z^5,z^6,z^7\}$, $\boldsymbol{\nu}=(1,\ldots,1)$ and
$B$ be of the form 
\begin{equation*}
B=\begin{pmatrix}
b_{00}&b_{01}&b_{02}&0&0\\
b_{10}&b_{11}&b_{12}&0&0\\
b_{20}&b_{21}&b_{22}&0&0
\end{pmatrix}.
\end{equation*}
Recall from Theorem \ref{T1} that the $(\mL,\mP)$-TGRS code is MDS if and only if $B\in{\Omega}$. Magma experiments show that $\Omega$ is given
\begin{equation*}
\Omega=\left\{\begin{pmatrix}
z^3&z^3&z^6&0&0\\
z^3&1&1&0&0\\
z^7&z&2&0&0
\end{pmatrix},
\begin{pmatrix}
z^3&1&z^2&0&0\\
z&z^6&z^3&0&0\\
z^2&z^2&z^7&0&0
\end{pmatrix},
\begin{pmatrix}
z^3&1&z^2&0&0\\
1&z^6&z^3&0&0\\
1&z^6&0&0&0
\end{pmatrix},
\begin{pmatrix}
2&0&z^6&0&0\\
2&1&z^6&0&0\\
z^3&z&z&0&0
\end{pmatrix},
\ldots
\right\}.
\end{equation*}
where the cardinality of $\Omega$ is $24977$. When $B\in \Omega$, $\C(\mL,\mP,B)$ is an $[8,3,6]$ MDS code.
\end{example}



We now turn our attention to the condition under which the $(\mL,\mP)$-TGRS code is NMDS.
NMDS codes are slightly less optimal than MDS codes, while still maintaining a high level of error correction.



In the following, we introduce a result to study NMDS codes.
\begin{lem}(\cite[Lemma 3.7]{SUIL}, \cite{DODU})\label{L5}
An $[n,k]$ linear code $\C$ over $\mathbb{F}_q$ is NMDS if and only if a generator matrix $G$ of $\C$ satisfies the following conditions:
\begin{enumerate}
\item There exists $k$ linearly dependent columns in $G$, i.e., $S(\C)\neq{0}$ and $S(\C^\perp)\neq{0}$.
\item Any $k+1$ columns of $G$ are rank of $k$, i.e., $S(\C)\leq{1}$.
\item Any $k-1$ columns of $G$ are linearly independent, i.e., $S(\C^\perp)\leq{1}$.
\end{enumerate}
\end{lem}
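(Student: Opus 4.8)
The plan is to reduce the statement to two classical facts relating the minimum distances $d(\C)$ and $d(\C^\perp)$ to the ranks of sets of columns of a generator matrix $G$ of $\C$, and then to substitute into $S(\C)=n-k+1-d$. First I would unwind the definition: $\C$ is NMDS means $S(\C)=S(\C^\perp)=1$, that is $d(\C)=n-k$ and $d(\C^\perp)=k$; and since the Singleton bound gives $S(\C),S(\C^\perp)\geq 0$, this is the conjunction of ``$S(\C)\leq 1$'', ``$S(\C^\perp)\leq 1$'' and ``$S(\C^\perp)\geq 1$'' (the fourth inequality $S(\C)\geq 1$ will come for free).

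The two facts I would use (both standard, see e.g. \cite{HWPV}) are: (i) for any $0\leq w\leq n$, $\C$ has a nonzero codeword of weight $\leq w$ if and only if some $n-w$ columns of $G$ have rank $<k$ --- the forward direction writes such a codeword as $\boldsymbol{x}G$ with $\boldsymbol{x}\neq \boldsymbol{0}$, which then annihilates the columns indexed by the complement of its support, and the converse reverses this --- so $d(\C)\geq w+1$ iff every $(n-w)$-subset of columns of $G$ has full rank $k$; and (ii) $d(\C^\perp)$ equals the least size of a linearly dependent set of columns of $G$, obtained by applying the classical parity-check characterization of minimum distance to $\C^\perp$, whose parity-check matrix is $G$.

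With these in hand the three conditions translate mechanically. Condition (2) (every $k+1$ columns have rank $k$) is $d(\C)\geq n-k$ by (i) with $w=n-k-1$, i.e. $S(\C)\leq 1$. Condition (3) (every $k-1$ columns independent) is $d(\C^\perp)\geq k$ by (ii), i.e. $S(\C^\perp)\leq 1$. For condition (1): the existence of $k$ linearly dependent columns is, by (ii), exactly $d(\C^\perp)\leq k$, i.e. $S(\C^\perp)\geq 1$ (conversely, if fewer than $k$ columns are already dependent one enlarges the set to size $k$), and by (i) with $w=n-k$ it also forces $d(\C)\leq n-k$, i.e. $S(\C)\geq 1$. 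Conjoining the three translated conditions gives precisely $S(\C)\leq 1$, $S(\C^\perp)\leq 1$ and $S(\C^\perp)\geq 1$, hence $S(\C)=S(\C^\perp)=1$; conversely, an NMDS code satisfies all three.

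The only point needing care is the bookkeeping in condition (1): a dependent column set of size smaller than $k$ must be enlarged (dependence is preserved under taking supersets), and one should check this is consistent with conditions (2) and (3), which constrain only $(k+1)$- and $(k-1)$-subsets. Beyond stating facts (i) and (ii) precisely I do not expect any genuine obstacle.
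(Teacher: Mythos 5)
Your proposal is correct. Note, however, that the paper does not prove this lemma at all: it is quoted verbatim from \cite[Lemma 3.7]{SUIL} and \cite{DODU}, so there is no in-paper argument to compare against. Your derivation is the standard one: fact (i) (a nonzero codeword of weight $\leq w$ exists iff some $n-w$ columns of $G$ have rank $<k$, using that $G$ has full row rank $k$) and fact (ii) ($G$ is a parity-check matrix of $\C^\perp$, so $d(\C^\perp)$ is the least size of a dependent column set) translate conditions (2), (3), (1) into $S(\C)\leq 1$, $S(\C^\perp)\leq 1$, and $S(\C^\perp)\geq 1$ together with $S(\C)\geq 1$, exactly as you state; the enlargement of a dependent set of size $d(\C^\perp)\leq k$ to one of size exactly $k$ is legitimate since dependence is inherited by supersets. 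One cosmetic remark: your aside that ``$S(\C)\geq 1$ comes for free'' is redundant given that you later extract it directly from condition (1) via fact (i) with $w=n-k$; if you did want to rely on it instead, you would need the additional classical fact that the dual of an MDS code is MDS. Either way the argument is complete and matches the cited sources' standard proof.
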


Now we provide a necessary and sufficient condition for the $(\mL,\mP)$-TGRS codes to be NMDS under the case that it is self-dual.

\begin{thm}\label{T2}
With the notation as in Theorem \ref{T1}, assume that the $(\mL,\mP)$-TGRS code $\C(\mL,\mP,B)$ defined by \eqref{S6} is self-dual and $B\in{\mathbb{F}_q^{k\times(n-k)}\backslash{\Omega}}$ with $\Omega$ defined by \eqref{Omega} .
Then $\C(\mL,\mP,B)$ is NMDS if and only if for any $(k+1)$-subset $J\subseteq\{1,...,n\}$, there exists a $k$-subset $\mathcal{T}\subseteq{J}$ such that $\vert{I_k+BF_\mathcal{T}}\vert\neq0$.
\end{thm}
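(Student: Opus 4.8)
The plan is to combine the minor formula obtained inside the proof of Theorem~\ref{T1} with the NMDS characterization of Lemma~\ref{L5}. First, recall that for every $k$-subset $\mathcal{T}=\{t_1,\dots,t_k\}\subseteq\{1,\dots,n\}$, the proof of Theorem~\ref{T1} shows that the $k\times k$ submatrix $G_\mathcal{T}$ of $G_{TGRS}$ formed by the columns indexed by $\mathcal{T}$ satisfies
\begin{equation*}
\vert G_\mathcal{T}\vert=\vert I_k+BF_\mathcal{T}\vert\cdot\prod_{1\le j<i\le k}(a_{t_i}-a_{t_j}),
\end{equation*}
where $F_\mathcal{T}$ is the matrix in \eqref{S13} determined by $\{a_{t_i}\}$. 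Since the $a_i$'s are distinct, the Vandermonde factor is nonzero, so the $k$ columns of $G_{TGRS}$ indexed by $\mathcal{T}$ are linearly independent if and only if $\vert I_k+BF_\mathcal{T}\vert\ne0$. This equivalence is the bridge between linear (in)dependence of column sets of $G_{TGRS}$ and the determinantal quantities appearing in the statement, and I would use it throughout.

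Next I would exploit self-duality to collapse the three conditions of Lemma~\ref{L5} into a single one. Because $\C(\mL,\mP,B)$ is self-dual, $n=2k$ and $\C(\mL,\mP,B)=\C(\mL,\mP,B)^\perp$, hence $S(\C(\mL,\mP,B))=S(\C(\mL,\mP,B)^\perp)$. The hypothesis $B\notin\Omega$ together with Theorem~\ref{T1} says that $\C(\mL,\mP,B)$ is not MDS, i.e.\ $S(\C(\mL,\mP,B))\ne0$, so condition~(1) of Lemma~\ref{L5} holds automatically. Moreover, since $S(\C(\mL,\mP,B))=S(\C(\mL,\mP,B)^\perp)$, condition~(2) (equivalently $S(\C(\mL,\mP,B))\le1$) and condition~(3) (equivalently $S(\C(\mL,\mP,B)^\perp)\le1$) are equivalent. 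Thus $\C(\mL,\mP,B)$ is NMDS if and only if condition~(2) of Lemma~\ref{L5} holds, i.e.\ every $k+1$ columns of $G_{TGRS}$ have rank $k$.

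It then remains to translate this rank condition into the stated criterion. As $G_{TGRS}$ has $k$ rows, any $k+1$ of its columns span a subspace of dimension at most $k$, and they span a $k$-dimensional space precisely when some $k$ of those columns are linearly independent; that is, when the corresponding $(k+1)$-subset $J\subseteq\{1,\dots,n\}$ contains a $k$-subset $\mathcal{T}\subseteq J$ whose columns are linearly independent. By the first paragraph this is exactly $\vert I_k+BF_\mathcal{T}\vert\ne0$. Letting $J$ range over all $(k+1)$-subsets yields precisely the asserted equivalence, and the proof is complete.

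The only genuinely delicate point is the middle step: one must notice that self-duality forces $S(\C)=S(\C^\perp)$, which both makes condition~(1) of Lemma~\ref{L5} automatic (via $B\notin\Omega$) and merges conditions~(2) and~(3). Once that observation is in place, everything reduces to routine bookkeeping — passing between ranks of column sets of $G_{TGRS}$ and the minors $\vert I_k+BF_\mathcal{T}\vert$ — which is immediate from Theorem~\ref{T1}.
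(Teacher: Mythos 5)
Your proposal is correct and follows essentially the same route as the paper: it invokes the minor factorization $\vert G_\mathcal{T}\vert=\vert I_k+BF_\mathcal{T}\vert\cdot\prod(a_{t_i}-a_{t_j})$ from the proof of Theorem \ref{T1}, uses self-duality together with $B\notin\Omega$ to settle condition 1) of Lemma \ref{L5} and to merge conditions 2) and 3), and then translates condition 2) into the stated determinantal criterion. Your write-up simply makes explicit the steps the paper compresses into ``similar to the proof of Theorem \ref{T1}.''
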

\begin{proof}
Let $B\in{\mathbb{F}_q^{k\times(n-k)}\backslash{\Omega}}$ and $\C(\mL,\mP,B)$ be a self-dual code, where $\Omega$ is defined by \eqref{Omega}. It then follows that $S(\C)=S(\C^\perp)\geq1$, which satisfies condition 1) of Lemma \ref{L5}. We only need to prove 2) of Lemma \ref{L5} since $\C(\mL,\mP,B)$ is self-dual. Similar to the proof of Theorem  \ref{T1}, it follows that condition 2) of Lemma \ref{L5} holds if and only if there exists a $k$-subset $\mathcal{T}\subseteq{J}$ such that $\vert{I_k+BF_\mathcal{T}}\vert\neq0$ for any $(k+1)$-subset $J\subseteq\{1,...,n\}$. This completes the proof.
\end{proof}

\begin{remark}
Note that Theorem \ref{T2} extends the result in \cite[Theorem 3.8]{SUIL} from $2$-TGRS codes to $(\mL,\mP)$-TGRS codes for the general case. A natural question is to characterize the necessary and sufficient condition for $(\mL,\mP)$-TGRS codes to be NMDS without any restrictions.
\end{remark}

\section{The parity check matrices of $(\mL,\mP)$-TGRS codes and the self-dual codes} \label{sect-4}
In this section, we first characterize the parity check matrices of $(\mL,\mP)$-TGRS codes for the most general case, and then investigate the self-dual codes from $(\mL,\mP)$-TGRS codes.

\subsection{The parity check matrices of $(\mL,\mP)$-TGRS codes}
The parity check matrix of a linear code is essential since its dual code can be completely determined by its parity check matrix. In this subsection, we determine the parity check matrix of the $(\mL,\mP)$-TGRS code. 


\begin{thm}\label{T3}
Let $\boldsymbol{\alpha}=(a_1,a_2,...,a_n)\in\mathbb{F}_{q}^n$ with distinct $a_1,...,a_n$, $\boldsymbol{\nu}=(v_1,...,v_n)\in(\mathbb{F}_{q}^*)^n$, and $B=[b_{i,j}] \in \fq^{k\times{(n-k)}}$ and $F(\mL,\mP,B)$ be defined as in \eqref{S8-B} and \eqref{S5} respectively. Define $u_i=\prod_{j=1,j\neq{i}}^n(a_i-a_j)^{-1}$ for $1\leq{i}\leq{n}$ and $\prod_{i=1}^n(x-a_i)=\sum_{j=0}^nc_jx^{n-j}$, which defines $u_i$'s and $c_j$'s for given $a_i$'s. Then the $(\mL,\mP)$-TGRS code $\C(\mL,\mP,B)$ defined by \eqref{S6} has parity check matrix as follows
\begin{equation}\label{S16}
H=\begin{pmatrix}
\cdots&\frac{u_j}{v_j}[1-\sum\limits_{i=0}^{k-1}b_{i,n-k-1}\sum\limits_{t=0}^{n-1-i}c_{n-1-i-t}a_j^t]&\cdots\\
\cdots&\frac{u_j}{v_j}[\sum\limits_{t=0}^1c_{1-t}a_j^t-\sum\limits_{i=0}^{k-1}b_{i,n-k-2}\sum\limits_{t=0}^{n-1-i}c_{n-1-i-t}a_j^t]&\cdots\\
\vdots&\vdots&\vdots\\
\cdots&\frac{u_j}{v_j}[\sum\limits_{t=0}^{n-k-2}c_{n-k-2-t}a_j^t-\sum\limits_{i=0}^{k-1}b_{i,1}\sum\limits_{t=0}^{n-1-i}c_{n-1-i-t}a_j^t]&\cdots\\
\cdots&\frac{u_j}{v_j}[\sum\limits_{t=0}^{n-k-1}c_{n-k-1-t}a_j^t-\sum\limits_{i=0}^{k-1}b_{i,0}\sum\limits_{t=0}^{n-1-i}c_{n-1-i-t}a_j^t]&\cdots\\
\end{pmatrix}.
\end{equation}
\end{thm}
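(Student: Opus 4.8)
The plan is to verify directly that the rows of the claimed matrix $H$ are orthogonal to every row of the generator matrix $G_{TGRS}$ given in \eqref{S8-GM}, and then to check that $H$ has full rank $n-k$; together with $\dim \C(\mL,\mP,B)=k$ from Lemma \ref{L1}, this identifies $H$ as a parity check matrix. Up to equivalence (Remark \ref{R2}), I may assume $\boldsymbol{\nu}=(1,\ldots,1)$, so the factors $u_j/v_j$ become $u_j$; the general $\boldsymbol{\nu}$ is recovered by scaling columns. The starting observation is the classical fact that the GRS code $\C(\boldsymbol{\alpha},\mathbf{1})$ with parameters $[n,k]$ has parity check matrix equal to the generator matrix of the $[n,n-k]$ GRS code with evaluation vector $\boldsymbol{\alpha}$ and column multipliers $u_i=\prod_{j\neq i}(a_i-a_j)^{-1}$. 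Equivalently, for $0\le r\le n-k-1$ and $0\le s\le k-1$,
\begin{equation*}
\sum_{j=1}^n u_j\, a_j^{\,r}\, a_j^{\,s}=\sum_{j=1}^n u_j\, a_j^{\,r+s}=0,
\end{equation*}
since $r+s\le n-2$, while $\sum_{j=1}^n u_j a_j^{\,n-1}=1$. These "power-sum" identities, obtained from Lagrange interpolation or from the expansion of $1/\prod_i(x-a_i)$, are the only analytic input needed.

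Next I would record the key auxiliary identity that makes the $c_j$'s appear. For a fixed exponent $e$ with $k\le e\le n-1$, write $e=k+j'$ with $0\le j'\le n-k-1$; I claim that the vector
\begin{equation*}
\Big(\ \sum_{t=0}^{?} c_{?-t} a_\cdot^{\,t}\ \Big)
\end{equation*}
appearing in row $j'$ of $H$ is precisely the coordinate expansion, against $u_j a_j^0,\dots,u_j a_j^{n-1}$, that isolates the monomial $x^{k+j'}$ modulo lower-degree corrections. Concretely, from $\prod_{i=1}^n(x-a_i)=\sum_{j=0}^n c_j x^{n-j}$ one gets, for each root $a_\ell$, the Newton-type relations $\sum_{t} c_{?-t} a_\ell^{\,t}=0$ for the "interior" degrees, and these relations let me reduce $a_\ell^{\,k+j'}$ and $a_\ell^{\,n-1-i}$ to combinations of $1,a_\ell,\dots,a_\ell^{n-2}$ that are annihilated by the vector $(u_j)_j$. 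This is essentially Lemma \ref{L4} applied with the full index set $\mathcal T=\{1,\dots,n\}$ in place of a $k$-subset: the quantities $\sum_{t=0}^{n-k-1-j''}c_{n-k-1-j''-t}a_j^t$ are exactly the "$f$"-type coefficients, and the extra terms $\sum_{i=0}^{k-1}b_{i,\cdot}\sum_{t=0}^{n-1-i}c_{n-1-i-t}a_j^t$ are the twist corrections $b_{i,\cdot}$ times the reduction of $a_j^{k+\cdot}\!\cdot a_j^{\,?}$.

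With these identities in hand the orthogonality check is mechanical: take the inner product of the $(n-k-1-m)$-th row of $H$ with the $i$-th row $ev_{\boldsymbol{\alpha},\mathbf 1}(g_i(x))=(a_j^i+\sum_{j'=0}^{n-k-1}b_{i,j'}a_j^{k+j'})_j$ of $G_{TGRS}$, expand, and use the power-sum vanishing $\sum_j u_j a_j^{r+s}=0$ for $r+s\le n-2$ together with $\sum_j u_j a_j^{n-1}=1$ to see that the "GRS part" contributes the Kronecker-delta pattern and the twist part of $G$ cancels precisely against the $b_{i,\cdot}$-correction built into $H$; all surviving terms cancel in pairs, giving $0$. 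For the rank statement, I would note that deleting the twist corrections turns $H$ into the standard $[n,n-k]$ GRS generator matrix, which has rank $n-k$, and the correction is a matrix whose row space lies in the span of the first $k-1$ GRS generator rows, or argue directly that $H$ contains an $(n-k)\times(n-k)$ Vandermonde-type minor up to nonzero scaling by the $u_j$; either way $\operatorname{rank}H=n-k$, so $H$ is a parity check matrix of $\C(\mL,\mP,B)$.

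The main obstacle is purely bookkeeping: correctly matching the truncated convolutions $\sum_{t=0}^{n-1-i}c_{n-1-i-t}a_j^t$ against the exponents $k+j'$ arising from the twists, i.e. making sure that the reduction of $a_j^{k+j'}$ modulo $\prod_i(x-a_i)$ lines up index-for-index with the entries written in \eqref{S16}. I expect this to be the step that requires care; once the indices are pinned down, the cancellation is forced by the GRS power-sum identities.
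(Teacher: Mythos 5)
Your overall route---checking $GH^{\rm T}=\boldsymbol{0}$ entrywise from the power-sum identities $\sum_j u_ja_j^r=0$ for $r\le n-2$, $\sum_j u_ja_j^{n-1}=1$, together with reduction of high powers modulo $\prod_i(x-a_i)$, and then a rank count---is viable, and it differs from the paper, which instead factors $G=[I_k\mid B]V_nV_0$ and $H=[-J_{n-k}B^{\rm T}\mid J_{n-k}]CV_nU$ and reduces everything to the single identity $V_nV_0UV_n^{\rm T}C^{\rm T}=I_n$ (via Lemmas \ref{L3} and \ref{L4}), from which orthogonality and ${\rm rank}(H)=n-k$ both drop out at once. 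However, as written your sketch has a real gap in the orthogonality step: the inner product of a row of $G$ with a row of $H$ contains, besides the constant and linear-in-$B$ terms you describe, terms \emph{bilinear} in the entries of $B$, coming from the twist part $\sum_s b_{i,s}a_j^{k+s}$ of $G$ multiplied against the correction part $-\sum_{i'}b_{i',\cdot}\,r_{i'}(a_j)$ of $H$, where $r_{i'}(x)=\sum_{t=0}^{n-1-i'}c_{n-1-i'-t}x^t$. These quadratic terms do not ``cancel in pairs'' against anything; one must show separately that $\sum_j u_j a_j^{k+s}r_{i'}(a_j)=0$ for all admissible $s,i'$ (which follows from $a_j^{i'+1}r_{i'}(a_j)=-\sum_{u=0}^{i'}c_{n-u}a_j^{u}$, pushing all exponents below $n-1$). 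Without this extra family of identities the verification is incomplete, even though the needed computation is of the same kind you invoke for the linear terms.

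The rank argument is also flawed as stated. The correction rows are $\frac{u_j}{v_j}$-weighted evaluations of the polynomials $r_{i'}$, whose degrees are $n-1-i'\in\{n-k,\ldots,n-1\}$; since evaluation at $n$ distinct points is injective on polynomials of degree $<n$, these rows do \emph{not} lie in the span of the untwisted rows (evaluations of polynomials of degree $\le n-k-1$), and if the row space of $H$ did coincide with that of the dual GRS matrix for every $B$, then $\C(\mL,\mP,B)$ would be GRS for every $B$, contradicting Theorem \ref{T5}. Moreover, even granting such a containment, ``rank-$(n-k)$ matrix plus a correction inside its row space'' does not imply rank $n-k$; and $H$ contains no visible Vandermonde minor because its entries are not monomial evaluations. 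A correct elementary fix: a vanishing linear combination of the rows of $H$ corresponds to a polynomial of degree $\le n-1$ vanishing at all $n$ points $a_j$, hence the zero polynomial; comparing coefficients in degrees $n-1,n-2,\ldots,n-k$ (where only the $r_{i'}$ contribute, triangularly with leading coefficient $1$) and then in degrees $\le n-k-1$ (where the base polynomials have distinct degrees $0,\ldots,n-k-1$) forces all coefficients to vanish. Alternatively, the paper's factorization gives ${\rm rank}(H)={\rm rank}([-J_{n-k}B^{\rm T}\mid J_{n-k}])=n-k$ immediately, which is one of the main payoffs of its approach.
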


\begin{proof}
We first prove ${\rm rank}(H)=n-k$ for $H$ defined as in \eqref{S16}. Denote $\boldsymbol{\alpha}^i=(a_1^i,...,a_n^i)$, $\frac{\boldsymbol{u}}{\boldsymbol{\nu}}=(\frac{u_1}{v_1},...,\frac{u_n}{v_n})$ and $\frac{\boldsymbol{u}}{\boldsymbol{\nu}}\ast\boldsymbol{\alpha}^i=(\frac{u_1}{v_1}a_1^i,...,\frac{u_n}{v_n}a_n^i)$ for $\boldsymbol{u}=(u_1,\ldots,u_n)$. Then by \eqref{S8-GM} and \eqref{S9-G} the generator matrix of $\C(\mL,\mP,B)$ can be rewritten as
\begin{equation}\label{S17}
G=\begin{pmatrix}
\boldsymbol{\nu}\ast(\boldsymbol{\alpha}^0+\sum\limits_{j=0}^{n-k-1}b_{0,j}\boldsymbol{\alpha^{k+j}})\\
\boldsymbol{\nu}\ast(\boldsymbol{\alpha}^1+\sum\limits_{j=0}^{n-k-1}b_{1,j}\boldsymbol{\alpha^{k+j}})\\
\vdots\\
\boldsymbol{\nu}\ast(\boldsymbol{\alpha}^{k-1}+\sum\limits_{j=0}^{n-k-1}b_{k-1,j}\boldsymbol{\alpha^{k+j}})
\end{pmatrix}
=[I_k\mid{B}]V_nV_0,
\end{equation}
where $I_k$ is the $k\times{k}$ identity matrix, $V_n=(\boldsymbol{\alpha}^0,...,\boldsymbol{\alpha}^{n-1})^{\rm T}$ is the $n\times{n}$ Vandermonde matrix, and $V_0$ is a diagonal matrix with elements $v_1,v_2,\ldots,v_n$, given by
\begin{equation*}
V_n=\begin{pmatrix}
1&1&\cdots&1\\
a_1&a_2&\cdots&a_n\\
\vdots&\vdots&\ddots&\vdots\\
a_1^{n-1}&a_2^{n-1}&\cdots&a_n^{n-1}\\
\end{pmatrix},
V_0=\begin{pmatrix}
v_1&&&\\
&v_2&&\\
&&\ddots&\\
&&&v_n
\end{pmatrix}.
\end{equation*}

It is clear that $H$ in (\ref{S16}) can be expressed as
\begin{equation} \label{S18}
H=\begin{pmatrix}
\frac{\boldsymbol{u}}{\boldsymbol{\nu}}\ast[\boldsymbol{{\alpha}^0}-\sum\limits_{i=0}^{k-1}b_{i,n-k-1}\sum\limits_{t=0}^{n-1-i}c_{n-1-i-t}\boldsymbol{{\alpha}^t}]\\
\frac{\boldsymbol{u}}{\boldsymbol{\nu}}\ast[\sum\limits_{t=0}^1c_{1-t}\boldsymbol{{\alpha}^t}-\sum\limits_{i=0}^{k-1}b_{i,n-k-2}\sum\limits_{t=0}^{n-1-i}c_{n-1-i-t}\boldsymbol{{\alpha}^t}]\\
\vdots\\
\frac{\boldsymbol{u}}{\boldsymbol{\nu}}\ast[\sum\limits_{t=0}^{n-k-2}c_{n-k-2-t}\boldsymbol{{\alpha}^t}-\sum\limits_{i=0}^{k-1}b_{i,1}\sum\limits_{t=0}^{n-1-i}c_{n-1-i-t}\boldsymbol{{\alpha}^t}]\\
\frac{\boldsymbol{u}}{\boldsymbol{\nu}}\ast[\sum\limits_{t=0}^{n-k-1}c_{n-k-1-t}\boldsymbol{{\alpha}^t}-\sum\limits_{i=0}^{k-1}b_{i,0}\sum\limits_{t=0}^{n-1-i}c_{n-1-i-t}\boldsymbol{{\alpha}^t}]
\end{pmatrix}.
\end{equation}
Further, it can be written as
\begin{equation*}
H=[-J_{n-k}B^{\rm T}\mid{J_{n-k}}]CV_nU,
\end{equation*}
where $J_{n-k} \in \fq^{(n-k) \times (n-k)}$, $C \in \fq^{n\times n}$ and $U\in\fq^{n\times n}$ are defined as follows:
\[
\begin{aligned}
&J_{n-k}=\begin{pmatrix}
0&\cdots&0&1\\
0&\cdots&1&0\\
\vdots&\ddots&\vdots&\vdots\\
1&\cdots&0&0
\end{pmatrix},
C=\begin{pmatrix}
c_{n-1}&c_{n-2}&\cdots&1\\
c_{n-2}&c_{n-3}&\cdots&0\\
\vdots&\vdots&&\vdots\\
c_1&1&\cdots&0\\
1&0&\cdots&0\\
\end{pmatrix},
&U=\begin{pmatrix}
\frac{u_1}{v_1}&&&\\
  &\frac{u_2}{v_2}&&\\
  &&\ddots&\\
  &&&\frac{u_n}{v_n}\\
\end{pmatrix}.
\end{aligned}
\]
Observe that $J_{n-k}$, $C$ and $U$ are invertible. We then conclude that
\begin{equation*}
{\rm rank}(H)={\rm rank}([-J_{n-k}B^{\rm T}\mid{J_{n-k}}])=n-k.
\end{equation*}
This proves ${\rm rank}(H)=n-k$.

Next we prove that $GH^{\rm T}=\boldsymbol{0}$. It gives
\begin{equation*}
GH^{\rm T}=[I_k\mid{B}]V_nV_0([-J_{n-k}B^{\rm T}\mid{J_{n-k}}]CV_nU)^{\rm T},
\end{equation*}
which can be written as
\begin{equation} \label{GH}
GH^{\rm T}=[I_k\mid{B}]V_nV_0UV_n^{\rm T}C^{\rm T}\left[\begin{array}{c}-BJ_{n-k}\\ \hline J_{n-k}\end{array}\right].
\end{equation}
A direct computation gives
$$
\begin{aligned}
V_nV_0UV_n^{\rm T}&=\begin{pmatrix}\boldsymbol{\alpha^0}\\\boldsymbol{\alpha}\\\vdots\\\boldsymbol{\alpha}^{n-1}\end{pmatrix}
((\boldsymbol{u}\ast\boldsymbol{\alpha}^0)^{\rm T},...,(\boldsymbol{u}\ast\boldsymbol{\alpha}^{n-1})^{\rm T})
=\begin{pmatrix}
\sum\limits_{i=1}^nu_ia_i^0&\sum\limits_{i=1}^nu_ia_i^1&\cdots&\sum\limits_{i=1}^nu_ia_i^{n-1}\\
\sum\limits_{i=1}^nu_ia_i^1&\sum\limits_{i=1}^nu_ia_i^2&\cdots&\sum\limits_{i=1}^nu_ia_i^n\\
\vdots&\vdots&\ddots&\vdots\\
\sum\limits_{i=1}^nu_ia_i^{n-1}&\sum\limits_{i=1}^nu_ia_i^n&\cdots&\sum\limits_{i=1}^nu_ia_i^{2n-2}
\end{pmatrix}.
\end{aligned}
$$
It follows from the proof of \cite[Theorem 2.2]{HUAN} that
\begin{equation*}
  \left\{\begin{array}{ll}
    \sum\limits_{t=1}^nu_ta_t^i=0,    &   \mbox{ if } 0\leq{i}\leq{n-2};\\
    \sum\limits_{t=1}^nu_ta_t^i=1,    &   \mbox{ if } i=n-1.\\
\end{array}\right.
\end{equation*}
By using Lemma \ref{L4}, then $a_i^{n+t}=\sum\limits_{s=0}^{n-1}f_{t,s}a_i^s$. For $0\leq{t}\leq{n-2}$, we have
\begin{equation*}
\sum\limits_{i=1}^nu_ia_i^{n+t}=\sum\limits_{i=1}^nu_i\sum\limits_{s=0}^{n-1}f_{t,s}a_i^s=\sum\limits_{s=0}^{n-1}f_{t,s}\sum\limits_{i=1}^nu_ia_i^s=f_{t,n-1},\\
\end{equation*}
and by \eqref{S11} it gives
\begin{equation*}
f_{t,n-1}=-\sum_{i=0}^{\min\{t,n-1\}}c_{i+1}e_{t-i}=-\sum\limits_{i=0}^{t}c_{i+1}e_{t-i}=-\sum\limits_{j=0}^{t}c_{t+1-j}e_{j}=e_{t+1}.
\end{equation*}
It leads to
$$
\begin{aligned}
V_nV_0UV_n^T&=\begin{pmatrix}
\sum\limits_{i=1}^nu_ia_i^0&\sum\limits_{i=1}^nu_ia_i^1&\cdots&\sum\limits_{i=1}^nu_ia_i^{n-1}\\
\sum\limits_{i=1}^nu_ia_i^1&\sum\limits_{i=1}^nu_ia_i^2&\cdots&\sum\limits_{i=1}^nu_ia_i^n\\
\vdots&\vdots&\ddots&\vdots\\
\sum\limits_{i=1}^nu_ia_i^{n-1}&\sum\limits_{i=1}^nu_ia_i^n&\cdots&\sum\limits_{i=1}^nu_ia_i^{2n-2}
\end{pmatrix}
=\begin{pmatrix}
0&\cdots&0&e_0\\
0&\cdots&e_0&e_1\\
\vdots&&\vdots&\vdots\\
0&\cdots&e_{n-3}&e_{n-2}\\
e_0&\cdots&e_{n-2}&e_{n-1}
\end{pmatrix}.
\end{aligned}
$$
By using Lemma \ref{L3}, it shows that
\begin{equation*}
\begin{pmatrix}
c_0&&&&\\
c_1&c_0&&&\\
c_2&c_1&c_0&&\\
\vdots&\vdots&\ddots&\ddots\\
c_{n-1}&c_{n-2}&\cdots&c_1&c_0\\
\end{pmatrix}
\begin{pmatrix}
e_0&&&&\\
e_1&e_0&&&\\
e_2&e_1&e_0&&\\
\vdots&\vdots&\ddots&\ddots\\
e_{n-1}&e_{n-2}&\cdots&e_1&e_0\\
\end{pmatrix}
=I_n.
\end{equation*}
Then we have
\begin{equation*}
V_nV_0UV_n^{\rm T}C^{\rm T}=
\begin{pmatrix}
0&\cdots&0&e_0\\
0&\cdots&e_0&e_1\\
\vdots&&\vdots&\vdots\\
0&\cdots&e_{n-3}&e_{n-2}\\
e_0&\cdots&e_{n-2}&e_{n-1}\\
\end{pmatrix}
\begin{pmatrix}
c_{n-1}&c_{n-2}&\cdots&c_0\\
c_{n-2}&c_{n-3}&\cdots&0\\
\vdots&\vdots&&\vdots\\
c_1&c_0&\cdots&0\\
c_0&0&\cdots&0\\
\end{pmatrix}=I_n.
\end{equation*}
This together with \eqref{GH} gives
$$
\begin{aligned}
GH^{\rm T}&=[I_k\mid{B}]V_nV_0UV_n^{\rm T}C^{\rm T}\left[\begin{array}{c}-BJ_{n-k}\\ \hline J_{n-k}\end{array}\right]\\
&=[I_k\mid{B}]\left[\begin{array}{c}-BJ_{n-k}\\ \hline J_{n-k}\end{array}\right]\\
&=\boldsymbol{0}.
\end{aligned}
$$
This completes the proof.
\end{proof}

\begin{remark}
In Theorem \ref{T3}, we present an explicit characterization of the parity check matrices of the $(\mL,\mP)$-TGRS codes for the most general case by using the formula given in \cite[Theorem 4.2]{SUIS}. Moreover, Theorem \ref{T3} extend the results in \cite[Theorem 7]{CHEN} in which the TGRS codes with at most $\ell$ positions of $B$ being nonzero are considered.
\end{remark}

\subsection{The self-dual $(\mL,\mP)$-TGRS codes}
In this section, we study the self-dual codes from $(\mL,\mP)$-TGRS codes.
If $\C$ has a generator matrix $G$ and a parity check matrix $H$, then $\C={\rm span}_{\fq}(G)$ and $\C^\perp={\rm span}_{\fq}(H)$. Therefore, $\C$ is self-dual if and only if ${\rm span}_{\mathbb{F}_q}(G)={\rm span}_{\mathbb{F}_q}(H)$.
In the following, we always assume that $n=2k$.

\begin{thm}\label{T4}
Let $\boldsymbol{\alpha}=(a_1,a_2,...,a_n)\in\mathbb{F}_{q}^n$ with distinct $a_1,...,a_n$, $\boldsymbol{\nu}=(v_1,...,v_n)\in(\mathbb{F}_{q}^*)^n$, and $B=[b_{i,j}] \in \fq^{k\times{(n-k)}}$ and $F(\mL,\mP,B)$ be defined as in \eqref{S8-B} and \eqref{S5} respectively. Define $u_i=\prod_{j=1,j\neq{i}}^n(a_i-a_j)^{-1}$ for $1\leq{i}\leq{n}$ and $\prod_{i=1}^n(x-a_i)=\sum_{j=0}^nc_jx^{n-j}$, which defines $u_i$'s and $c_j$'s for given $a_i$'s. Assume that $n=2k$.
Then the $(\mL,\mP)$-TGRS code $\C(\mL,\mP,B)$ defined by \eqref{S6} is self-dual if the following two conditions hold:
\begin{enumerate}
\item There exists a $\lambda\in\mathbb{F}_{q}^*$ such that $v_i^2=\lambda{u_i}$ for all $1\leq{i}\leq{n}$;
\item $B^{\rm T}DB=NB+B^{\rm T}N$, where
$
D=\begin{pmatrix}
c_{n-1}&\cdots&c_k\\
\vdots&&\vdots\\
c_k&\cdots&c_1\\
\end{pmatrix}
$ 
and 
$N=\begin{pmatrix}
c_{k-1}&c_{k-2}&\cdots&1\\
c_{k-2}&c_{k-3}&\cdots&0\\
\vdots&\vdots&&\vdots\\
c_1&1&\cdots&0\\
1&0&\cdots&0
\end{pmatrix}.
$
\end{enumerate}
\end{thm}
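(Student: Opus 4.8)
The plan is to establish the slightly stronger statement that, under conditions (1)--(2), the parity-check matrix $H$ from Theorem \ref{T3} can be written as $H=MG$ for some $M\in\fq^{k\times k}$, where $G=[I_k\mid B]V_nV_0$ is the generator matrix of $\C(\mL,\mP,B)$ written as in \eqref{S9-G}. Since rows of $MG$ are $\fq$-combinations of rows of $G$, this gives $\C^\perp=\mathrm{span}_{\fq}(H)=\mathrm{span}_{\fq}(MG)\subseteq\mathrm{span}_{\fq}(G)=\C$, and as $n=2k$ we have $\dim\C^\perp=n-k=k=\dim\C$, so the inclusion is an equality, i.e. $\C(\mL,\mP,B)$ is self-dual.

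First I would record two structural observations. Since $n=2k$ and $\prod_{i=1}^n(x-a_i)=\sum_{j=0}^nc_jx^{n-j}$, the $n\times n$ matrix $C$ appearing in Theorem \ref{T3} has $(i,j)$-entry equal to $c_{n-1-i-j}$ (interpreted as $0$ when $n-1-i-j<0$); cutting it into $k\times k$ blocks therefore yields
\[
C=\begin{pmatrix}D&N\\N&\boldsymbol{0}\end{pmatrix},
\]
where the top-left block $(c_{n-1-i-j})_{0\le i,j\le k-1}$ is exactly the matrix $D$ of condition (2), the top-right block $(c_{k-1-i-j})_{0\le i,j\le k-1}$ is exactly $N$, the bottom-left block equals $N^{\rm T}=N$ (it is symmetric), and the bottom-right block vanishes because its index $-1-i-j$ is always negative. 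Secondly, condition (1), namely $v_i^2=\lambda u_i$ for all $i$, is equivalent to the identity $U=\tfrac{1}{\lambda}V_0$ between the diagonal matrices $U=\mathrm{diag}(u_i/v_i)$ of Theorem \ref{T3} and $V_0=\mathrm{diag}(v_i)$ of Lemma \ref{L1}. This is the crucial point: it lets $H$ be rewritten over the same invertible matrix $V_nV_0$ as $G$, since by Theorem \ref{T3}
\[
H=[-J_kB^{\rm T}\mid J_k]\,CV_nU=\tfrac{1}{\lambda}[-J_kB^{\rm T}\mid J_k]\,CV_nV_0 .
\]

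Then the argument is a short block computation. Multiplying out,
\[
[-J_kB^{\rm T}\mid J_k]\begin{pmatrix}D&N\\N&\boldsymbol{0}\end{pmatrix}=[\,J_k(N-B^{\rm T}D)\mid -J_kB^{\rm T}N\,],
\]
so $H=\tfrac{1}{\lambda}[\,J_k(N-B^{\rm T}D)\mid -J_kB^{\rm T}N\,]V_nV_0$. Setting $M:=\tfrac{1}{\lambda}J_k(N-B^{\rm T}D)$ gives $MG=\tfrac{1}{\lambda}[\,J_k(N-B^{\rm T}D)\mid J_k(N-B^{\rm T}D)B\,]V_nV_0$. Since $V_nV_0$ is invertible and the first blocks already agree, $H=MG$ holds if and only if $-J_kB^{\rm T}N=J_k(N-B^{\rm T}D)B$; multiplying by $J_k$ on the left and rearranging, this is $B^{\rm T}DB=NB+B^{\rm T}N$ — precisely condition (2). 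Hence $H=MG$, and the dimension count above yields $\C=\C^\perp$. In particular $M$ is forced to be invertible, consistent with ${\rm rank}(H)=n-k$ from Theorem \ref{T3}.

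The genuinely routine parts are the two block multiplications and the final rearrangement. The step that needs care is identifying the two $k\times k$ blocks of $C$ with the matrices $D$ and $N$ exactly as written in the statement, and keeping track of all transposes and of the anti-identity $J_k$; once that bookkeeping is done, condition (2) appears on the nose. One subtlety worth flagging is that, in contrast with the MDS analysis of Section \ref{sect-3}, one cannot normalise $\boldsymbol{\nu}=(1,\dots,1)$ via Remark \ref{R2}, since that equivalence need not preserve self-duality — condition (1) is exactly the constraint on $\boldsymbol{\nu}$ that makes $\mathrm{span}_{\fq}(G)$ and $\mathrm{span}_{\fq}(H)$ coincide.
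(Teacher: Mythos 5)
Your proposal is correct and follows essentially the same route as the paper's proof: the same block decomposition $C=\bigl(\begin{smallmatrix}D&N\\ N&\boldsymbol{0}\end{smallmatrix}\bigr)$, the same key identity $[-J_kB^{\rm T}\mid J_k]C=J_k(N-B^{\rm T}D)[I_k\mid B]$ derived from condition (2), and condition (1) to reconcile the diagonal matrices $U$ and $V_0$. The only difference is directional and cosmetic — you write $H=MG$ with $M=\tfrac{1}{\lambda}J_k(N-B^{\rm T}D)$ and close with an explicit dimension count, whereas the paper expresses $[I_k\mid B]=M[-J_kB^{\rm T}\mid J_k]C$ with $M$ the inverse of $J_k(N-B^{\rm T}D)$; your write-up is, if anything, slightly more explicit about why the two conditions yield $\C=\C^\perp$.
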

\begin{proof}
Recall the generator matrix $G$ and the parity check matrix $H$ of $\C(\mL,\mP,B)$ given as in (\ref{S17}) and (\ref{S18}) respectively. Let $G=[g_0,\ldots,g_{k-1}]^{\rm T}$ and $H=[h_0,\ldots,h_{n-k-1}]^{\rm T}$, where $g_i$ for $0\leq i \leq k-1$ (resp. $h_j$ for $0\leq j \leq n-k-1$) denotes the $(i+1)$-th (resp. $j+1$) row of $G$ (resp. $H$). The code $\C(\mL,\mP,B)$ is self-dual if and only if the sets $\{g_0,...,g_{k-1}\}$ and $\{h_0,...,h_{k-1}\}$ are linearly related to each other.

From the proof of Theorem \ref{T3}, the matrix G in (\ref{S17}) can be rewritten as
\begin{equation*}
G=[I_k\mid{B}]\begin{pmatrix}
\boldsymbol{\nu}\ast\boldsymbol{\alpha}^0\\
\boldsymbol{\nu}\ast\boldsymbol{\alpha}^1\\
\vdots\\
\boldsymbol{\nu}\ast\boldsymbol{\alpha}^{n-1}\\
\end{pmatrix}.
\end{equation*}
Similarly, the parity check matrix $H$ in (\ref{S18}) can be rewritten as
\begin{equation*}
H=[-J_kB^{\rm T}\mid{J_k}]C\begin{pmatrix}
\frac{\boldsymbol{u}}{\boldsymbol{\nu}}\ast\boldsymbol{\alpha}^0\\
\frac{\boldsymbol{u}}{\boldsymbol{\nu}}\ast\boldsymbol{\alpha}^1\\
\vdots\\
\frac{\boldsymbol{u}}{\boldsymbol{\nu}}\ast\boldsymbol{\alpha}^{n-1}\\
\end{pmatrix},
\end{equation*}
where $J_k \in \fq^{k \times k}$ and $C\in \fq^{n\times n}$ are given by
\begin{equation*}
J_k=\begin{pmatrix}
0&\cdots&0&1\\
0&\cdots&1&0\\
\vdots&\ddots&\vdots&\vdots\\
1&\cdots&0&0
\end{pmatrix},
C=\begin{pmatrix}
c_{n-1}&c_{n-2}&\cdots&1\\
c_{n-2}&c_{n-3}&\cdots&0\\
\vdots&\vdots&&\vdots\\
c_1&1&\cdots&0\\
1&0&\cdots&0
\end{pmatrix}.
\end{equation*}
Then $\C(\mL,\mP,B)$ is self-dual if the following two conditions hold:
\begin{enumerate}
\item [a)] There exists a $\lambda\in\mathbb{F}_{q}^*$ such that $v_i^2=\lambda{u_i}$ for all $1\leq{i}\leq{n}$;
\item [b)] There exists a nonsingular matrix $M\in \fq^{k \times k}$ such that $[I_k\mid{B}]=M[-J_kB^{\rm T}\mid{J_k}]C$.
\end{enumerate}

Next we demonstrate that the condition b) is equivalent to condition 2) in Theorem \ref{T4}. Note that the matrix $C$ can be expressed as
$\begin{pmatrix}
D&N\\
N&\boldsymbol{0}_{k\times{k}}\\
\end{pmatrix},
$
where $D\in \fq^{k \times k}$ and $N\in \fq^{k \times k}$ are given by
\begin{equation*}
D=\begin{pmatrix}
c_{n-1}&\cdots&c_k\\
\vdots&&\vdots\\
c_k&\cdots&c_1
\end{pmatrix},
N=
\begin{pmatrix}
c_{k-1}&c_{k-2}&\cdots&1\\
c_{k-2}&c_{k-3}&\cdots&0\\
\vdots&\vdots&&\vdots\\
c_1&1&\cdots&0\\
1&0&\cdots&0
\end{pmatrix}.
\end{equation*}

Suppose that $B^{\rm T}DB=NB+B^{\rm T}N$. Then $-B^{\rm T}N=(-B^{\rm T}D+N)B$, and it gives
\begin{equation*}
[-J_kB^{\rm T}\mid{J_k}]C=[-J_kB^{\rm T}D+J_kN\mid{-J_kB^{\rm T}N}]=(J_k)(-B^{\rm T}D+N)[I_k\mid{B}].
\end{equation*}
Then we have $\mbox{rank}((J_k)(-B^{\rm T}D+N))=k$ due to the fact that $\mbox{rank}([-J_kB^{\rm T}\mid{J_k}]C)=k$.
Therefore there exists a nonsingular $M=((J_k)(-B^{\rm T}D+N))^{-1}$ such that the condition b) holds.
This completes the proof.
\end{proof}

\begin{remark}
In Theorem \ref{T4}, we provide a sufficient condition for $(\mL,\mP)$-TGRS codes to be self-dual. A natural question is to characterize the necessary and sufficient condition for $(\mL,\mP)$-TGRS codes to be self-dual for the most general case.
\end{remark}

\section{The non-GRS properties of $(\mL,\mP)$-TGRS codes} \label{sect-5}
In this section, we will study the non-GRS properties of the $(\mL,\mP)$-TGRS codes for the most general case. 
It worth noting that constructing non-GRS MDS codes is an interesting research topic since most of the known MDS codes are equivalent to GRS codes.
It is shown in \cite{BERO} that many TGRS codes are non-GRS codes for certain $\mL$, $\mP$ and coefficient matrix $B$. We will use a similar approach to the one in \cite{BERO} to explore the non-GRS properties of $(\mL,\mP)$-TGRS codes.

\subsection{Inequivalence based on the Schur square}
In this subsection, we investigate the inequivalence of $(\mL,\mP)$-TGRS codes to GRS codes by using the Schur square.

The study of Schur squares plays a significant role in coding theory due to their applications \cite{COUV,CRAD,RAND}. Next we introduce the definition of Schur square of a linear code over $\fq$.
\begin{definition}(\cite{BERO})
Let $\C$ be an $[n,k]$ linear code over $\fq$. The Schur square of $\C$ is a linear codes over $\fq$ defined by
\begin{equation*}
\C^2:=\langle\{c\star c^\prime:\ c,\ c^\prime\in{\C}\}\rangle,
\end{equation*}
where $c\star c^\prime=(c_1c_1^\prime,\ldots,c_nc_n^\prime)$ denotes the Schur product of $c=(c_1,\ldots,c_{n})\in \fq^n$ and $c'=(c'_1,\ldots,c'_{n})\in \fq^n$, and $\langle S \rangle $ represent the $\fq$-subspace spanned by the set $S$ of $\fq^n$.
\end{definition}

The dimension of the Schur product of a code is an invariant up to equivalence of codes. For any linear code $\C$ over $\fq$, it satisfies the inequality \cite{MIRA} that
\begin{equation*}
\mbox{dim}(\C^2)\leq{\mbox{min}\{n,\frac{1}{2}k(k+1)\}}.
\end{equation*}
A random linear code attains this upper bound with high probability \cite{CASC}. For an MDS code $\C$, it satisfies $\mbox{dim}(\C^2)\geq \mbox{min}\{n,2k-1\}$ \cite{RAND}, and specially $\mbox{dim}(\C^2)=\mbox{min}\{n,2k-1\}$ for a GRS code $\C$. 

Next we give a generic lower bound on the dimension of Schur square of the evaluation code, which is generated by using the evaluation map $ev_{\alpha,v}$. Before this, we introduce the following definition.

\begin{definition}(\cite{BERO}) \label{def-DP}
Let $\Delta$ be an $\fq$-subspace of $\mathbb{F}_{q}[x]_{<n}$, and $\boldsymbol{\alpha}=(a_1,\ldots,a_n)\in\mathbb{F}_{q}^n$ with $a_1,\ldots,a_n$ distinct. Define the polynomial sets $D(\Delta)_{<n}$ and $\overline{D}(\Delta,\boldsymbol{\alpha})$ as follows:
\begin{equation*}
D(\Delta)_{<n}=\{\deg(f(x)g(x)): f(x), g(x)\in{\Delta},\deg(f(x)g(x))<n\}
\end{equation*}
and
\begin{equation*}
\overline{D}(\Delta,\boldsymbol{\alpha})=\{\deg(\overline{f(x)g(x)}):\ f(x),\ g(x)\in \Delta\},
\end{equation*}
where $\overline{f(x)}= f(x) \mbox{ \rm mod } \prod_{i=1}^n(x-a_i)$ for $f(x)\in{\mathbb{F}_{q}[x]}$.
\end{definition}

In the following, we directly extend the result in \cite[Lemma 9]{BERO} from $\boldsymbol{\nu}=(1,\ldots,1)\in(\fq)^n$ to any $\boldsymbol{\nu}=(v_1,\ldots,v_n)\in(\mathbb{F}_{q}^*)^n$ and we omit the proof since it can be similarly proved.

\begin{lem}\label{L7}
Let $\boldsymbol{\alpha}=(a_1,\ldots,a_n)\in\mathbb{F}_{q}^n$ with $a_1,\ldots,a_n$ distinct, and $\boldsymbol{\nu}=(v_1,...,v_n)\in(\mathbb{F}_{q}^*)^n$. Let $\Delta$, $D(\Delta)_{<n}$ and $\overline{D}(\Delta,\boldsymbol{\alpha})$ be defined as in Definition \ref{def-DP}, and $\C=ev_{\boldsymbol{\alpha},\boldsymbol{\nu}}(\Delta)$ be the evaluation code of $\Delta$. Then
\begin{equation*}
\C^2=ev_{\boldsymbol{\alpha},\boldsymbol{\nu}^2}(\langle f(x)g(x):\ f(x),\ g(x)\in \Delta\rangle)
\end{equation*}
and
\begin{equation*}
\dim(C^2)\geq \vert{\overline{D}(\Delta,\boldsymbol{\alpha})}\vert\geq\vert{D(\Delta)_{<n}}\vert.
\end{equation*}
\end{lem}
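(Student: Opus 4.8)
The plan is to reduce the statement to the already-known case $\boldsymbol{\nu}=(1,\ldots,1)$ established in \cite[Lemma 9]{BERO}, exploiting that the Schur square dimension is an equivalence invariant and that scaling by $\boldsymbol{\nu}$ is precisely a diagonal (monomial) equivalence. First I would unwind the definitions: a generic element of $\C=ev_{\boldsymbol{\alpha},\boldsymbol{\nu}}(\Delta)$ is $(v_1f(a_1),\ldots,v_nf(a_n))$ for $f(x)\in\Delta$, so the Schur product of two such codewords, coming from $f(x),g(x)\in\Delta$, is $(v_1^2 f(a_1)g(a_1),\ldots,v_n^2 f(a_n)g(a_n))=ev_{\boldsymbol{\alpha},\boldsymbol{\nu}^2}(f(x)g(x))$. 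Since $\C^2$ is spanned by such products and $ev_{\boldsymbol{\alpha},\boldsymbol{\nu}^2}$ is $\fq$-linear, this immediately gives
\[
\C^2=ev_{\boldsymbol{\alpha},\boldsymbol{\nu}^2}\bigl(\langle f(x)g(x):\ f(x),\ g(x)\in\Delta\rangle\bigr),
\]
which is the first displayed claim.

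For the dimension bound, I would observe that the map $\Phi_{\mathrm{id},\boldsymbol{\nu}^2}:\fq^n\to\fq^n$, $(c_1,\ldots,c_n)\mapsto(v_1^2 c_1,\ldots,v_n^2 c_n)$, is a bijection (each $v_i\in\fq^*$), and it carries $ev_{\boldsymbol{\alpha},(1,\ldots,1)}(h(x))$ to $ev_{\boldsymbol{\alpha},\boldsymbol{\nu}^2}(h(x))$ for every $h(x)$. Hence $\C^2=\Phi_{\mathrm{id},\boldsymbol{\nu}^2}(\C_0^2)$ where $\C_0:=ev_{\boldsymbol{\alpha},(1,\ldots,1)}(\Delta)$, and therefore $\dim(\C^2)=\dim(\C_0^2)$ because a diagonal bijection is a linear isomorphism of $\fq^n$. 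Now I simply invoke \cite[Lemma 9]{BERO} for $\C_0$, which gives $\dim(\C_0^2)\ge|\overline{D}(\Delta,\boldsymbol{\alpha})|\ge|D(\Delta)_{<n}|$, and the chain of inequalities transfers verbatim to $\C^2$. The inequality $|\overline{D}(\Delta,\boldsymbol{\alpha})|\ge|D(\Delta)_{<n}|$ is itself purely a statement about degree sets of polynomial products and does not involve $\boldsymbol{\nu}$ at all, so nothing new is needed there.

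There is essentially no obstacle here; the content is the bookkeeping of definitions plus the single observation that multiplying coordinatewise by the nonzero vector $\boldsymbol{\nu}^2$ is an invertible linear map, so it preserves dimensions and commutes with evaluation. The only point requiring a modicum of care is making sure the span in the first display is taken before evaluating (so that $\langle f(x)g(x)\rangle$ is the $\fq$-span of all products, which need not equal $\Delta^2$ as a set), but this is handled identically to \cite[Lemma 9]{BERO}, which is why the authors state the proof can be omitted. I would phrase the write-up as: "Applying the equivalence $\Phi_{\mathrm{id},\boldsymbol{\nu}^2}$ reduces to \cite[Lemma 9]{BERO}," and leave it at that.
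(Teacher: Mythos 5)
Your proof is correct and follows exactly the route the paper intends: the authors omit the argument precisely because it reduces to \cite[Lemma 9]{BERO}, and your computation of the Schur product as $ev_{\boldsymbol{\alpha},\boldsymbol{\nu}^2}(f(x)g(x))$ together with the observation that coordinatewise scaling by $\boldsymbol{\nu}^2$ is a linear isomorphism is the natural way to carry out that reduction. Nothing is missing.
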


In the following theorem, we study the non-GRS property of $(\mL,\mP)$-TGRS codes with respect to a special form of coefficient matrix $B$.

\begin{thm}\label{T5}
Let $n>2k$ and the $(\mL,\mP)$-TGRS codes $\C(\mL,\mP,B)$ be defined as in \eqref{S6}. Let the coefficient matrix $B$ be given by
\begin{equation}\label{B-sepcial}
B=\begin{pmatrix}
\boldsymbol{0}_{(k-\ell)\times{\ell}}&\boldsymbol{0}_{(k-\ell)\times{(n-k-\ell)}}\\
\boldsymbol{A}_{\ell\times{\ell}}&\boldsymbol{0}_{\ell\times{(n-k-\ell)}}\\
\end{pmatrix},
\end{equation}
where $\ell<\min\left\{k,n-2k+1\right\}$ and
\begin{equation*}
A=\begin{pmatrix}
b_{k-\ell,0}&0&\cdots&0\\
b_{k-\ell+1,0}&b_{k-\ell+1,1}&\cdots&0\\
\vdots&\vdots&\ddots&\vdots\\
b_{k-1,0}&b_{k-1,1}&\cdots&b_{k-1,\ell-1}\\
\end{pmatrix}.
\end{equation*}
Then the dimension of the Schur square of $\C(\mL,\mP,B)$ is $\dim(\C(\mL,\mP,B)^2)\geq 2k$, and $\C(\mL,\mP,B)$ is non-GRS. Moreover, $\C(\mL,\mP,B)$ is a non-GRS MDS code if $B\in \Omega$ with $\Omega$ defined as in \eqref{Omega}.
\end{thm}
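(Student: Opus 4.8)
The plan is to compute a lower bound on $\dim(\C(\mL,\mP,B)^2)$ via Lemma \ref{L7} by explicitly identifying the space $\Delta=F(\mL,\mP,B)$ and counting the degrees appearing in products of pairs of its basis elements. For the specified $B$, the basis polynomials from \eqref{S7} are $g_i(x)=x^i$ for $0\le i\le k-\ell-1$, and $g_i(x)=x^i+\sum_{j=0}^{i-(k-\ell)}b_{i,j}\,x^{k+j}$ for $k-\ell\le i\le k-1$; in particular the twisted ones have degree exactly $k+(i-(k-\ell))$, ranging over $k,k+1,\dots,k+\ell-1$. Thus $\Delta$ contains polynomials of every degree in $\{0,1,\dots,k-1\}\cup\{k,k+1,\dots,k+\ell-1\}$, but more importantly the products $g_i g_j$ realize many degrees. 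First I would exhibit, inside $\langle f g: f,g\in\Delta\rangle$, polynomials of each degree $0,1,\dots,2k-2$ (using products $x^a x^b$ with $a,b\le k-1$, giving degrees $0$ through $2k-2$), and then additionally a polynomial of degree $2k-1$: take $g_{k-1}\cdot x^{k-1}$, whose leading term is $b_{k-1,\ell-1}x^{2k-2+\ell-1+\dots}$—more carefully, $g_{k-1}$ has degree $k+\ell-1$, so $g_{k-1}\cdot x^{k-1}$ has degree $2k+\ell-2\ge 2k-1$ since $\ell\ge 1$. Because $n>2k$ and $\ell\le n-2k$, all these degrees are $<n$, so by Definition \ref{def-DP} and Lemma \ref{L7}, $|D(\Delta)_{<n}|\ge 2k$, hence $\dim(\C(\mL,\mP,B)^2)\ge 2k$.

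Next, to conclude non-GRS: if $\C(\mL,\mP,B)$ were (equivalent to) a GRS code, then since $\dim$ of the Schur square is an equivalence invariant and a GRS code $\C$ with parameters $[n,k]$ satisfies $\dim(\C^2)=\min\{n,2k-1\}$, we would need $\min\{n,2k-1\}\ge 2k$. But $n>2k$ forces $\min\{n,2k-1\}=2k-1<2k$, a contradiction. Therefore $\C(\mL,\mP,B)$ is non-GRS. Finally, the MDS claim is immediate: by Theorem \ref{T1}, $\C(\mL,\mP,B)$ is MDS precisely when $B\in\Omega$, and being non-GRS is a property already established independently of $\Omega$, so for $B\in\Omega$ the code is simultaneously non-GRS and MDS.

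The main obstacle is the bookkeeping in the first step: one must be careful that the claimed degrees are genuinely attained (no unexpected cancellation of leading terms) and that they remain strictly below $n$. The degree-$2k-1$ element requires $\ell\ge 1$ (so that at least one twist is present) and $2k+\ell-2<n$, which is guaranteed by $\ell<n-2k+1$, i.e. $\ell\le n-2k$, giving $2k+\ell-2\le n-2<n$; the lower degrees $0,\dots,2k-2$ are automatically $<n$ since $2k-2<n$. One should also double-check that the lower-triangular structure of $A$ (as opposed to the full matrix in Corollary \ref{Cor5}) is not actually needed for this argument—only the presence of a twist in row $k-1$ reaching degree $k+\ell-1$ matters—but stating it as in \eqref{B-sepcial} keeps the hypothesis explicit and compatible with the MDS criterion. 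Modulo this routine verification, the argument is short, and I would present it in exactly the order above: identify $\Delta$, count degrees to get $\dim(\C^2)\ge 2k$, invoke the Schur-square invariant to rule out GRS, then layer on Theorem \ref{T1} for the MDS conclusion.
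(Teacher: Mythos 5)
Your overall strategy—identify $\Delta=F(\mL,\mP,B)$, lower-bound $\dim(\C(\mL,\mP,B)^2)$ by counting degrees of products via Lemma \ref{L7}, compare with the Schur-square dimension $2k-1$ of a GRS code (using $n>2k$), and then layer Theorem \ref{T1} on top for the MDS claim—is exactly the paper's, and the second and third steps are fine. The gap is in the degree count. For $k-\ell\le a\le k-1$ the monomial $x^a$ is \emph{not} an element of $\Delta$: the basis element of index $a$ is $x^a$ plus twist terms of degree at least $k$, and (with the diagonal entries of $A$ nonzero, which is what makes $\deg g_i=k+i-(k-\ell)$ in the first place) no linear combination of $g_{k-\ell},\dots,g_{k-1}$ can cancel all twist terms while keeping a degree below $k$. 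So your claim that $\Delta$ contains polynomials of every degree in $\{0,\dots,k-1\}$ is false, the products ``$x^ax^b$ with $a,b\le k-1$'' said to realize all degrees $0,\dots,2k-2$ are not available, and the auxiliary product $g_{k-1}\cdot x^{k-1}$ does not exist in the product set either. This is not merely cosmetic: the nonzero elements of $\Delta$ have degrees in $\{0,\dots,k-\ell-1\}\cup\{k,\dots,k+\ell-1\}$, so product degrees lie in $[0,2k-2\ell-2]\cup[k,2k-2]\cup[2k,2k+2\ell-2]$, and when $\ell>(k-1)/2$ the degrees $2k-2\ell-1,\dots,k-1$ are never attained. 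For instance, with $k=4$, $\ell=3$, $n=11$ the attainable degrees below $n$ are $\{0,4,5,6,8,9,10\}$, only $2k-1$ values; your claimed list $\{0,1,\dots,6\}$ is wrong (degrees $1,2,3$ are missing).

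The repair is the count the paper actually performs: products of two untwisted monomials give all degrees in $[0,2k-2\ell-2]$; mixed products $x^a g_i$ with $a\le k-\ell-1$ and $k-\ell\le i\le k-1$ give the degrees $2k-\ell-2,\dots,2k-2$ (the paper's $T_2$, which lies inside the attainable range $[k,2k-2]$ once $\ell\le k-2$); and products of two twisted generators give $2k,\dots,2k+2\ell-2$, of which at least the $\ell$ values $2k,\dots,2k+\ell-1$ are below $n$ because $\ell\le n-2k$. This yields $(2k-2\ell-1)+(\ell+1)+\ell=2k$ distinct degrees and hence $\dim(\C(\mL,\mP,B)^2)\ge 2k$, provided $1\le\ell\le k-2$; for the boundary case $\ell=k-1$ this counting produces only $2k-1$ degrees (as the example above shows), so that case needs either exclusion or a separate argument. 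Finally, your side remark that the lower-triangular shape of $A$ is irrelevant is too quick: the nonzero diagonal of the triangular $A$ is precisely what guarantees that the twisted generators attain the degrees $k,\dots,k+\ell-1$ on which the whole count rests.
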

\begin{proof}
By Lemma \ref{L1}, the set $F(\mL,\mP,B)$ of twisted polynomials of $\C(\mL,\mP,B)$ for the given $B$ as in \eqref{B-sepcial} has a basis $\{g_i(x):0\leq i \leq k-1\}$, where
\begin{equation*}
  \left\{\begin{array}{ll}
    g_i(x)=x^i,    &   \mbox{ if } 0\leq{i}\leq{k-\ell-1};\\
    g_i(x)=x^i+\sum_{j=0}^{i-k+\ell}b_{ij}x^{k+j},    &   \mbox{ if } k-\ell\leq{i}\leq{k-1}.
\end{array}\right.
\end{equation*}
Notice that $\{\deg(g_i(x)): 0\leq i \leq k-1\}$ is given by
\begin{equation*}
  S(B)=\{0,1,...,k-\ell-1,k,k+1,...,k+\ell-1\}.
\end{equation*}
Further, we define the set $\Upsilon =\{f(x)g(x):f(x),g(x)\in{F(\mL,\mP,B)}\}$. Notice that $\Upsilon$ must contain polynomials of degree $i$ for $i\in T_1\cup T_2 \cup T_3$, where
$$T_1:=\{0,1,...,2k-2\ell-2\},T_2:=\{2k-\ell-2,2k-\ell-1,...,2k-2\},T_3:=\{2k,2k+1,...,2k+2\ell-2\}.$$
Observe that $|T_1|+|T_2|=2k-\ell$ and $i<n$ for $i\in T_1\cup T_2$. Moreover, due to $2k+\ell-1<n$, there are at least $\ell$ elements $i\in T_3$ such that $i<n$. Then we conclude that there are at least $2k$ polynomials of distinct degrees less than $n$ in the set $\Upsilon$. This together with Lemma \ref{L7} gives that $\dim(\C(\mL,\mP,B)^2)\geq|{D(F(\mL,\mP,B))_{<n}}| \geq |\Upsilon| \geq 2k$, where $D(\cdot)_{<n}$ is defined as in Definition \ref{def-DP}. Recall that the dimension of the Schur square of a GRS code is $2k-1$ due to $n>2k$. Therefore, $\C(\mL,\mP,B)$ is non-GRS. This completes the proof.
\end{proof}

\begin{remark}
By Theorems \ref{T1} and \ref{T5}, non-GRS MDS codes can be derived from the $(\mL,\mP)$-TGRS codes for the coefficient matrix $B$ of the form \eqref{B-sepcial}.
\end{remark}

\begin{example}
Let $n=8$, $k=3$, $q=17$, $\alpha=(1,2,3,4,5,6,7,8)\in{\mathbb{F}_{17}^8}$, $\boldsymbol{\nu}=(1,\ldots,1)$ and $B$ be of the form
\begin{equation*}
\begin{pmatrix}
0&0&0&0&0\\
b_{1,0}&0&0&0&0\\
b_{2,0}&b_{2,1}&0&0&0
\end{pmatrix}.
\end{equation*}
Then $\C(\mL,\mP,B)$ is non-GRS by Theorem \ref{T5}. Magma experiments shows that $\C(\mL,\mP,B)$ is an $[8,3,6]$ non-GRS MDS code if and only if $B\in \Xi$, where $|\Xi|=76$  and $\Xi$ is given by
\begin{equation*}
\Xi=\left\{\begin{pmatrix}
0&0&0&0&0\\
12&0&0&0&0\\
1&0&0&0&0
\end{pmatrix},\begin{pmatrix}
0&0&0&0&0\\
15&0&0&0&0\\
14&9&0&0&0
\end{pmatrix},\begin{pmatrix}
0&0&0&0&0\\
13&0&0&0&0\\
8&13&0&0&0
\end{pmatrix},\begin{pmatrix}
0&0&0&0&0\\
3&0&0&0&0\\
10&0&0&0&0
\end{pmatrix},
\ldots\right\}.
\end{equation*}
\end{example}

The following result can be derived directly from Theorem \ref{T5}, which is a special case of Theorem \ref{T5}. Note that this type of TGRS codes was first proposed by Gu et al. \cite{GU}, while the non-GRS property of the codes has not been investigated.

\begin{cor}\label{Cor6}
Let $n >2k$ and $\C(\mL,\mP,B)$ be defined as in \eqref{S6}. Let the coefficient matrix $B$ be given by
\begin{equation*}
B=\begin{pmatrix}
\boldsymbol{0}_{(k-\ell)\times{\ell}}&\boldsymbol{0}_{(k-\ell)\times{(n-k-\ell)}}\\
\boldsymbol{E}_{\ell\times{\ell}}&\boldsymbol{0}_{\ell\times{(n-k-\ell)}}
\end{pmatrix},
\end{equation*}
where
\begin{equation*}
E=\begin{pmatrix}
b_{k-\ell,0}&&&\\
&b_{k-\ell+1,1}&&\\
&&\ddots&\\
&&&b_{k-\ell,\ell-1}
\end{pmatrix},
\end{equation*}
and $\ell<\min\{k,n-2k+1\}$. Then $\C(\mL,\mP,B)$ is non-GRS. Moreover, $\C(\mL,\mP,B)$ is a non-GRS MDS code if $B\in \Omega$ with $\Omega$ defined as in \eqref{Omega}.
\end{cor}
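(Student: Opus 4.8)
The plan is to deduce Corollary \ref{Cor6} directly from Theorem \ref{T5}. The point is that the diagonal matrix $E$ in the statement is exactly the lower-triangular matrix $A$ of Theorem \ref{T5} with its strictly-below-diagonal entries set to zero: the entry of $A$ in (zero-indexed) row $s$ and column $j$ is $b_{k-\ell+s,j}$ and already vanishes for $j>s$, so imposing in addition $b_{k-\ell+s,j}=0$ for $j<s$ leaves precisely the diagonal entries $b_{k-\ell,0},b_{k-\ell+1,1},\dots,b_{k-1,\ell-1}$, i.e.\ the matrix $E$. The standing hypotheses also coincide: both results assume $n>2k$ and $\ell<\min\{k,n-2k+1\}$. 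Hence $\C(\mL,\mP,B)$ is one of the codes covered by Theorem \ref{T5}, and all of its conclusions apply.

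Spelled out, the steps are: by Lemma \ref{L1}, $F(\mL,\mP,B)$ has the basis $\{g_i(x):0\le i\le k-1\}$ with $g_i(x)=x^i$ for $0\le i\le k-\ell-1$ and $g_i(x)=x^i+b_{i,\,i-k+\ell}\,x^{i+\ell}$ for $k-\ell\le i\le k-1$ (the twist part of each $g_i$ reduces to a single monomial since $E$ is diagonal), so the degree set of the $g_i$'s is $\{0,1,\dots,k-\ell-1\}\cup\{k,k+1,\dots,k+\ell-1\}$, exactly the set $S(B)$ appearing in the proof of Theorem \ref{T5}. Consequently the degree-counting argument of that proof carries over unchanged: the product set $\Upsilon=\{f(x)g(x):f(x),g(x)\in F(\mL,\mP,B)\}$ contains polynomials of at least $2k$ distinct degrees below $n$ (those in $T_1\cup T_2$ give $2k-\ell$ of them, and $2k+\ell-1<n$ supplies $\ell$ more from $T_3$), so Lemma \ref{L7} applied to $\Delta=F(\mL,\mP,B)$ yields $\dim(\C(\mL,\mP,B)^2)\ge 2k$. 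Since a GRS code of length $n>2k$ has Schur-square dimension $2k-1$, the code $\C(\mL,\mP,B)$ is not GRS. Finally, if moreover $B\in\Omega$, Theorem \ref{T1} shows $\C(\mL,\mP,B)$ is MDS, and combining the two facts gives a non-GRS MDS code.

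There is essentially no obstacle: the only thing one should verify is that the particular products used in the proof of Theorem \ref{T5} to realize the degrees in $T_1$, $T_2$ and $T_3$ remain available after the specialization to a diagonal coefficient matrix. This is immediate, because those products are formed only from the monomials $x^i$ with $0\le i\le k-\ell-1$ and from the leading twist monomials $x^{i+\ell}$ of the $g_i(x)$, none of which changes when the strictly-lower-triangular entries of $A$ are zeroed out. Thus Corollary \ref{Cor6} is obtained by reading off Theorem \ref{T5}, with the MDS refinement furnished by Theorem \ref{T1}; it is worth stating separately because this family of TGRS codes was introduced by Gu et al.\ \cite{GU} without its non-GRS property having been examined.
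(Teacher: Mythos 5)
Your proposal is correct and matches the paper's treatment: the paper likewise obtains Corollary \ref{Cor6} as an immediate specialization of Theorem \ref{T5}, since the diagonal matrix $E$ is the lower-triangular matrix $A$ with its strictly-below-diagonal entries set to zero, and the degree-counting argument (together with Theorem \ref{T1} for the MDS refinement) carries over verbatim. Your write-up just makes explicit the verification that the paper leaves implicit, including the same tacit assumption (shared with Theorem \ref{T5}) that the displayed diagonal entries of $E$ are nonzero so that $\deg g_i=i+\ell$ for $k-\ell\le i\le k-1$.
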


\subsection{A combinatorial inequivalence argument}
In this subsection, we first present some combinatorial results to studying the non-GRS property of the $(\mL,\mP)$-TGRS codes. The following result gives a well-known characterization of GRS codes.

\begin{lem}(\cite{ROTL,ROTS})\label{L8}
Let $\C$ be an $[n,k]$ linear code with a generator matrix of the form $G=[I_k|M]$, where $M=[M_{i,j}]\in \fq^{k \times (n-k)}$ and $M_{i,j}$'s are entries of $M$. Let $M'=[M'_{i,j}]\in \fq^{k\times(n-k)} $ with $M'_{i,j}=M_{ij}^{-1}$. Then $\C$ is a $GRS$ code if and only if the following conditions hold:
\begin{enumerate}
\item all entries of $M$ are non-zero;
\item all $2\times2$ minors of $M^\prime$ are non-zero; and
\item all $3\times3$ minors of $M^\prime$ are zero.
\end{enumerate}
\end{lem}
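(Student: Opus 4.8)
The plan is to show that conditions (1)--(3) are equivalent to $M$ being a ``generalized Cauchy'' matrix, i.e.\ $M_{i,j}=\dfrac{r_is_j}{b_j-a_i}$ for nonzero scalars $r_i,s_j$ and distinct field elements $a_1,\dots,a_k,b_1,\dots,b_{n-k}$, and that such matrices are exactly the systematic parity parts of GRS codes. I would establish the two implications separately, using an explicit Lagrange–interpolation description in the forward direction and a rank-two bilinear decomposition of $M'$ in the converse.

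\emph{Forward implication.} Suppose $\C=\C(\boldsymbol\alpha,\boldsymbol\nu)$ is GRS; relabel so that $\boldsymbol\alpha=(a_1,\dots,a_k,a_{k+1},\dots,a_n)$. The first $k$ columns of the generator matrix \eqref{S3} form a scaled Vandermonde matrix, hence are invertible, so $\C$ admits the systematic generator matrix $[I_k\mid M]$, and $M_{i,j}$ is the coefficient needed to express the $(k+j)$-th column of \eqref{S3} through its first $k$ columns. A Lagrange interpolation computation gives $M_{i,j}=\dfrac{r_is_j}{a_{k+j}-a_i}$, where $\pi(x)=\prod_{l=1}^k(x-a_l)$, $r_i=\bigl(v_i\,\pi'(a_i)\bigr)^{-1}$ and $s_j=v_{k+j}\,\pi(a_{k+j})$; all the $r_i,s_j$ are nonzero and $a_{k+j}\neq a_i$, giving (1). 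Then $M'_{i,j}=(a_{k+j}-a_i)/(r_is_j)$, so $M'$ is obtained from the matrix $\bigl[\,a_{k+j}-a_i\,\bigr]_{i,j}$ by left and right multiplication by invertible diagonal matrices; since $\bigl[\,a_{k+j}-a_i\,\bigr]$ is a difference of two rank-one matrices its rank is $\le2$, which forces all $3\times3$ minors of $M'$ to vanish (condition (3)), while its $2\times2$ minor on rows $i_1,i_2$ and columns $j_1,j_2$ equals $(a_{i_1}-a_{i_2})(a_{k+j_1}-a_{k+j_2})/(r_{i_1}r_{i_2}s_{j_1}s_{j_2})\neq0$ (condition (2)).

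\emph{Converse implication.} Assume (1)--(3). The degenerate cases $\min\{k,n-k\}\le1$ are immediate: a code whose generator vector has no zero entry is GRS, and duals of GRS codes are GRS, so one may assume $k,n-k\ge2$. Then (3) gives ${\rm rank}(M')\le2$ and (2) gives ${\rm rank}(M')\ge2$, so ${\rm rank}(M')=2$ and we may write $M'_{i,j}=\lambda_jp_i+\mu_jq_i$ for column vectors $\mathbf p,\mathbf q\in\fq^k$ and $\boldsymbol\lambda,\boldsymbol\mu\in\fq^{n-k}$. Expanding the $2\times2$ minors of $M'$ in this decomposition, (2) says the projective points $(p_i:q_i)$ are pairwise distinct and the points $(\lambda_j:\mu_j)$ are pairwise distinct, while (1) says $\lambda_jp_i+\mu_jq_i\neq0$ for all $i,j$, i.e.\ no $(\lambda_j:\mu_j)$ coincides with the point ``orthogonal'' to some $(p_i:q_i)$. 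These $n$ points of $\mathbb P^1(\fq)$ being distinct forces $n\le q+1$; assuming $n\le q$ (which is necessary for $\C$ to be GRS in the usual sense), we can pick a single $A\in GL_2(\fq)$ and replace $[\mathbf p\mid\mathbf q]$ by $[\mathbf p\mid\mathbf q]A$ and $[\boldsymbol\lambda\mid\boldsymbol\mu]^{\rm T}$ by $A^{-1}[\boldsymbol\lambda\mid\boldsymbol\mu]^{\rm T}$ (leaving $M'$ unchanged) so that afterwards all $p_i\neq0$ and all $\mu_j\neq0$. Now put $a_i=-q_i/p_i$, $r_i=1/p_i$ and $b_j=\lambda_j/\mu_j$, $s_j=1/\mu_j$; then $M_{i,j}=r_is_j/(b_j-a_i)$, the $a_i$ are distinct, the $b_j$ are distinct, and $a_i\neq b_j$ by (1), so $\boldsymbol\alpha:=(a_1,\dots,a_k,b_1,\dots,b_{n-k})$ has distinct entries. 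Finally, with $\pi(x)=\prod_{l=1}^k(x-a_l)$ set $v_i=\bigl(r_i\pi'(a_i)\bigr)^{-1}$ and $v_{k+j}=s_j/\pi(b_j)$ (all nonzero) and $\boldsymbol\nu=(v_1,\dots,v_n)$; running the Lagrange computation of the forward part in reverse shows the systematic generator matrix of $\C(\boldsymbol\alpha,\boldsymbol\nu)$ equals $[I_k\mid M]$, whence $\C(\boldsymbol\alpha,\boldsymbol\nu)=\C$ and $\C$ is GRS.

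The main obstacle is the converse, specifically the normalisation step: one must justify that (1)--(3) genuinely constrain $n$ (so that a point of $\mathbb P^1(\fq)$ missing the prescribed ones exists) and that one and the same $A\in GL_2(\fq)$ can simultaneously clear the zero entries of $\mathbf p$ and of $\boldsymbol\mu$. The remaining difficulty is purely bookkeeping --- keeping the translation between minors of $M$, minors of $M'$, and the bilinear decomposition $M'_{i,j}=\lambda_jp_i+\mu_jq_i$ consistent, and verifying the Lagrange formula in both directions.
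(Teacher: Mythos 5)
The paper does not prove this lemma at all: it is quoted as a known characterization and attributed to Roth--Lempel \cite{ROTL} and Roth--Seroussi \cite{ROTS}, so there is no in-paper argument to compare yours against. What you have written is essentially a reconstruction of the classical proof from those sources, via the generalized Cauchy shape $M_{i,j}=r_is_j/(b_j-a_i)$, and it is sound. The forward Lagrange computation is correct ($M_{i,j}=v_{k+j}\pi(a_{k+j})/\bigl(v_i\pi'(a_i)(a_{k+j}-a_i)\bigr)$, giving your $r_i,s_j$), as are the rank-$2$ consequence for the $3\times3$ minors and the $2\times2$ minor formula $(a_{i_1}-a_{i_2})(a_{k+j_1}-a_{k+j_2})/(r_{i_1}r_{i_2}s_{j_1}s_{j_2})$. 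In the converse, the Cauchy--Binet reading of conditions 2) and 1) through the factorization $M'_{i,j}=\lambda_jp_i+\mu_jq_i$ is right, and the ``main obstacle'' you flag in the normalisation step in fact dissolves: for a $2\times2$ matrix $A$, the row point $(0,1)A^{-1}$ and the first column of $A$ are projectively the same datum (the column $(a_{11},a_{21})^{\rm T}$ determines $(0,1)A^{-1}\propto(-a_{21},a_{11})$), so both requirements amount to choosing one point of $\mathbb{P}^1(\fq)$ avoiding the $n$ distinct points $(q_i:-p_i)$, $i\le k$, and $(\lambda_j:\mu_j)$, $j\le n-k$; this is possible exactly when $n\le q$, after which you complete the column to an invertible $A$.

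The only substantive caveat is the one you half-noticed: conditions 1)--3) alone force $n\le q+1$, not $n\le q$, and at $n=q+1$ they are satisfied by (singly) extended GRS codes, which are not GRS in this paper's sense; so the lemma as literally stated carries an implicit assumption $n\le q$. Your proof is therefore complete on the lemma's actual domain of use here (all codes in the paper are evaluation codes at $n$ distinct elements of $\fq$, so $n\le q$ automatically), but if you want a self-contained statement you should either add the hypothesis $n\le q$ or phrase the conclusion as ``GRS or extended GRS.'' Also note the degenerate case $\min\{k,n-k\}\le 1$ again needs $n\le q$ for the same reason, consistent with the remark following the lemma.
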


Note that an MDS code can be characterized by conditions 1) and 2). The crucial difference between a GRS code and a non-GRS MDS code lies on the condition 3). Moreover, it's known that when min$\{k,n-k\}<3$, an $[n,k]$ MDS code is always a GRS code.

Recall that the $(\mL,\mP)$-TGRS code defined as in \eqref{S6} is MDS if and only if $B\in \Omega$ by Theorem \ref{T1}, where $\Omega$ is defined as in \eqref{Omega}.
The coefficient matrix $B\in \fq^{k \times (n-k)}$ is given by
\begin{equation*}
  B=\begin{pmatrix}
    b_{0,0}&b_{0,1}&\dots&b_{0,n-k-1}\\
    b_{1,0}&b_{1,1}&\dots&b_{1,n-k-1}\\
    \vdots&\vdots&\ddots&\vdots\\
    b_{k-1,0}&b_{k-1,1}&\dots&b_{k-1,n-k-1}\\
  \end{pmatrix},
\end{equation*}
where $b_{i,j}\in \fq$ for $0\leq i\leq k-1$ and $0\leq j \leq n-k-1$.

Notably, Beelen et al. \cite{BERO} investigated the non-GRS property of a special type of MDS $(\mL,\mP)$-TGRS codes. When $\mL=\{t_1,t_2,...,t_{\ell}\}$ and $\mP=\{h_1,...,h_\ell\}$ with $\ell\leq \mbox{min}\{k,n-k\}$ and the coefficient matrix $B$ satisfies that $b_{h_i,t_j}\in\fq$ for $i=j$ and $b_{h_i,t_j}=0$ otherwise (which implies that at most $\ell$ positions of $B$ are nonzero), the $(\mL,\mP)$-TGRS codes are reduced to the TGRS codes studied in \cite{BERO}. We use the same technique to investigate the non-GRS property of $(\mL,\mP)$-TGRS codes for the most general case.


For the multi-variable polynomial $\Gamma \in\fq[x_1,\ldots,x_{k(n-k)}]$, we say that $B\in \fq^{k \times (n-k)}$ is a zero of the polynomial $\Gamma$ (i.e., $\Gamma(B)=0$) if 
\begin{equation*}\label{gamma}
  \Gamma(b_{0,0},b_{0,1},\ldots,b_{0,n-k-1},b_{1,0},\ldots, b_{k-1,0},\ldots, b_{k-1,n-k-1})=0.
\end{equation*}


We provide the following results on $(\mL,\mP)$-TGRS codes, where the techniques in \cite{BERO} are useful in the proofs.

\begin{lem}\label{L9}
Let $\C(\mL,\mP,B)$ be an $[n,k]$ code defined by \eqref{S6} with $B\in \Omega$, where $\Omega$ represents the set of $B$'s such that $\C(\mL,\mP,B)$ is MDS and it is given as in \eqref{Omega}.
Let $G^{(sys,B)}=[I_k|M^{(B)}]$ be the systematic generator matrix of $\C(\mL,\mP,B)$. Then the entries of $\boldsymbol{M}^{(B)}\in{\fq^{k\times{(n-k)}}}$ can be written as
\begin{equation}\label{S19}
M_{i,j}^{(B)}=\frac{p^{(i,j)}(b_{0,0},b_{0,1},\ldots,b_{0,n-k-1},b_{1,0},\ldots, b_{k-1,0},\ldots, b_{k-1,n-k-1})}
{p(b_{0,0},b_{0,1},\ldots,b_{0,n-k-1},b_{1,0},\ldots, b_{k-1,0},\ldots, b_{k-1,n-k-1})},
\end{equation}
where $p^{(i,j)},p\in \fq[x_1,\ldots,x_{k(n-k)}]$ are $k(n-k)$-variate polynomials of degree at most 1 in each variable and they have no zeros in $\Omega$.
\end{lem}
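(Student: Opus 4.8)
The plan is to realize $M^{(B)}$ as a Cramer quotient of minors of the generator matrix $G_{TGRS}$ from \eqref{S8-GM}, and then to track how these minors depend polynomially on the entries of $B$. By Remark \ref{R2} we may take $\boldsymbol{\nu}=(1,\ldots,1)$; if not, each $v_j$ only multiplies every determinant below by a fixed nonzero scalar and changes nothing. The structural fact I would isolate first is this: by \eqref{S8-GM} the $l$-th row of $G_{TGRS}$ is $\bigl(a_1^l+\sum_t b_{l,t}a_1^{k+t},\ \ldots,\ a_n^l+\sum_t b_{l,t}a_n^{k+t}\bigr)$, so it depends \emph{affinely} on the $n-k$ variables $b_{l,0},\ldots,b_{l,n-k-1}$ and on no other entry of $B$. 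Write $\mathcal{T}_0=\{1,\ldots,k\}$ and let $G_{\mathcal{T}_0}$ be the submatrix of $G_{TGRS}$ on columns $1,\ldots,k$. Since $B\in\Omega$, Theorem \ref{T1} (its proof, with $\mathcal{T}=\mathcal{T}_0$) gives $\det G_{\mathcal{T}_0}=\vert I_k+BF_{\mathcal{T}_0}\vert\cdot\prod_{1\le j<i\le k}(a_i-a_j)\neq 0$, so $G_{\mathcal{T}_0}$ is invertible and $\C(\mL,\mP,B)$ has the unique systematic generator matrix $G^{(sys,B)}=[I_k\mid M^{(B)}]$ with $M^{(B)}=G_{\mathcal{T}_0}^{-1}G_{rest}$, where $G_{rest}$ consists of columns $k+1,\ldots,n$ of $G_{TGRS}$.

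Next I would apply Cramer's rule: for $0\le i\le k-1$ and $0\le j\le n-k-1$,
\begin{equation*}
M^{(B)}_{i,j}=\frac{\det N^{(i,j)}}{\det G_{\mathcal{T}_0}},
\end{equation*}
where $N^{(i,j)}$ is obtained from $G_{\mathcal{T}_0}$ by replacing its $(i+1)$-th column with the $(k+j+1)$-th column of $G_{TGRS}$. Expanding $\det G_{\mathcal{T}_0}$ and $\det N^{(i,j)}$ by multilinearity in the rows, and using that every entry $b_{l,t}$ occurs only in the $l$-th row of these matrices and affinely there, each of the two determinants is a polynomial in the $k(n-k)$ variables $b_{l,t}$ of degree at most $1$ in each variable. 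Setting $p:=\det G_{\mathcal{T}_0}$ and $p^{(i,j)}:=\det N^{(i,j)}$ (as elements of $\fq[x_1,\ldots,x_{k(n-k)}]$ under the indexing used in \eqref{S19}) then yields \eqref{S19}.

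It remains to show $p$ and all $p^{(i,j)}$ have no zeros in $\Omega$. For $p$ this is the displayed factorization: $p(B)=\det G_{\mathcal{T}_0}\neq0$ for every $B\in\Omega$ by Theorem \ref{T1}. For $p^{(i,j)}$ I would argue directly from the MDS property: since $B\in\Omega$, $\C(\mL,\mP,B)$ is MDS, so by Lemma \ref{L2} every $k\times k$ minor of $G^{(sys,B)}=[I_k\mid M^{(B)}]$ is nonzero; the minor on the $k-1$ columns of $I_k$ other than the $(i+1)$-th together with the $(k+j+1)$-th column equals $\pm M^{(B)}_{i,j}$, hence $M^{(B)}_{i,j}\neq0$, and therefore $p^{(i,j)}(B)=M^{(B)}_{i,j}\,p(B)\neq0$. (Equivalently, up to a sign $N^{(i,j)}$ is a column reordering of the submatrix $G_{\mathcal{T}'}$ of $G_{TGRS}$ for the $k$-subset $\mathcal{T}'=(\mathcal{T}_0\setminus\{i+1\})\cup\{k+j+1\}$, and the computation in the proof of Theorem \ref{T1} applied to $\mathcal{T}'$ gives $\det G_{\mathcal{T}'}=\vert I_k+BF_{\mathcal{T}'}\vert\cdot\prod(a_{\,\cdot}-a_{\,\cdot})\neq0$ on $\Omega$.) In particular, evaluating at $B=\boldsymbol{0}\in\Omega$ shows none of $p,p^{(i,j)}$ is the zero polynomial. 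I expect no genuine obstacle here beyond careful index bookkeeping — the shift between the labels $b_{l,t}$ and the column numbers $1,\ldots,n$, and keeping track of which evaluation point sits in which column — together with the multilinearity observation, on which the entire argument rests.
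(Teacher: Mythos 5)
Your proposal is correct and is essentially the paper's own argument: the paper writes $G_{TGRS}=[Q^{(B)}\mid T^{(B)}]$ and uses $M^{(B)}=\mathrm{adj}(Q^{(B)})T^{(B)}/\det(Q^{(B)})$, whose entries are exactly your Cramer quotients $\det N^{(i,j)}/\det G_{\mathcal{T}_0}$, with the same row-wise affinity observation giving degree at most $1$ in each variable. The non-vanishing of the numerators is also argued in the paper from the MDS property (a zero entry of $M^{(B)}$ would give a codeword of weight at most $n-k$), which matches your minor argument, so the two proofs coincide in substance.
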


\begin{proof}        
Recall that a generator matrix $G_{TGRS}=[g_{i,j}]$ ($0\leq i \leq k-1$, $0\leq j \leq n-1$) of $\C(\mL,\mP,B)$ can be given by \eqref{S8-GM}, where 
 \begin{equation}\label{S20}
 g_{i,j}=v_{j+1}(a_{j+1}^{i}+\sum_{s=0}^{n-k-1}b_{i,s}a_{j+1}^{k+s}).
 \end{equation}
Observe that $g_{i,j}$ is the evaluation at $(b_{i,0},\ldots,b_{i,n-k-1})$ of the corresponding polynomial $v_{j+1}(a_{j+1}^{i}+\sum_{s=0}^{n-k-1}x_{(i+1)(s+1)}a_{j+1}^{k+s})$ in $\fq[x_1\ldots x_{k\times{(n-k)}}]$ of degree at most $1$ in each variable. Here $b_{i,s}$ corresponds to the variable $x_{(i+1)(s+1)}$.

Let $G_{TGRS}=[Q^{(B)} \mid T^{(B)}]$, which defines $Q^{(B)} \in\fq^{k\times{k}}$ and $T^{(B)} \in\fq^{k\times{(n-k)}}$. Since $\C(\mL,\mP,B)$ is MDS by the assumption, $Q^{(B)}$ is invertible. Then we have
\begin{eqnarray}\label{S21}
 M^{(B)}={Q^{(B)}}^{-1}T^{(B)}=\frac{{\rm adj}{(Q^{(B)})}T^{(B)}}{{\rm det}(Q^{(B)})},
\end{eqnarray}
where ${\rm adj}(Q^{(B)})$ is the adjugate matrix of $Q^{(B)}$ and ${\rm det}(Q^{(B)})$  is the determinant of $Q^{(B)}$.

The determinant ${\rm det}(Q^{(B)})$ is the evaluation at $B$ of a polynomial $p\in{\mathbb{F}_{q}[x_1\cdots x_{k\times{(n-k)}}]}$, where $p$ can be determined by $Q^{(B)}$. Note that each $b_{i,s}$ appears only in one row of $G_{TGRS}$. Then it is clear that $p$ is of degree at most $1$ in each variable. Thus $p$ has no zeros in $\Omega$ since $Q^{(B)}$ is invertible. This gives the polynomial $p$.

The $(i',j')$$\mbox{-}$th entry of the matrix ${\rm adj}{(Q^{(B)})}T^{(B)}$ is equal to the inner product of the $i'$-th row of ${\rm adj}{(Q^{(B)})}$ and the $j'$-th column of $T^{(B)}$. Then the $i'\times j'$ entry of $ M^{(B)}$ can be expressed by $b_{i,s}$'s. By associating $b_{i,s}$ with the variable $x_{(i+1)(s+1)}$, we obtain the polynomial $p^{(i',j')}\in \fq[x_1\cdots x_{k{(n-k)}}]$. By the definition of adjugate matrix, it can be verified that $p^{(i',j')}$ is of degree at most $1$ in each variable. Furthermore, $p^{(i',j')}$ has no zeros in $\Omega$, otherwise $G^{(sys,B)}$ contains a row with $k$ zeros, contradicting the assumption that $\C(\mL,\mP,B)$ is MDS. This completes the proof.
 \end{proof}

\begin{remark}
Note that in Lemma \ref{L9} the polynomials $p^{(i,j)}$ and $p$ in $\fq[x_1,\ldots,x_{k(n-k)}]$ can be explicitly computed for given $\alpha$ and $\nu$, and their coefficients do not depend on the coefficient matrix $B$.
\end{remark}

\begin{thm}\label{T6}
Let $\C(\mL,\mP,B)$ be an $[n,k]$ code defined by \eqref{S6} and $\Omega$ be given as in \eqref{Omega}. Assume that $\min\{k,n-k\}\geq{3}$ and there is a $\widetilde{B} \in \Omega$ for $\C(\mL,\mP,\widetilde{B})$ to be a non-GRS MDS code. Then there is a non-zero multivariate polynomial $P\in \fq[x_1,\ldots,x_{k(n-k)}]$ with degree at most 6 in each variable such that all $B\in \Omega$ for which $\C(\mL,\mP,B)$ is GRS are zeros of $P$.
\end{thm}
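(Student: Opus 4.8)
The strategy is to translate the condition "$\C(\mL,\mP,B)$ is GRS" into polynomial equations in the entries of $B$, using the Rot-Lempel criterion (Lemma \ref{L8}) together with the rational description of the systematic parity part $M^{(B)}$ from Lemma \ref{L9}. Recall that for $B\in\Omega$ the code is already MDS, so conditions 1) and 2) of Lemma \ref{L8} hold automatically; being GRS is then equivalent to condition 3): every $3\times 3$ minor of the matrix $M'^{(B)}$ with entries $M'^{(B)}_{i,j}=\big(M^{(B)}_{i,j}\big)^{-1}$ vanishes. So I would fix one such $3\times 3$ minor, say on rows $i_1<i_2<i_3$ and columns $j_1<j_2<j_3$, and clear denominators to obtain a single polynomial identity in the $b_{i,s}$'s that must hold whenever $\C(\mL,\mP,B)$ is GRS.

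First I would use Lemma \ref{L9}: $M^{(B)}_{i,j}=p^{(i,j)}(B)/p(B)$, where $p$ and each $p^{(i,j)}$ are $k(n-k)$-variate polynomials of degree at most $1$ in each variable, with no zeros on $\Omega$. Hence on $\Omega$ we may write $M'^{(B)}_{i,j}=p(B)/p^{(i,j)}(B)$, and the $3\times 3$ minor
\[
\det\!\begin{pmatrix} M'^{(B)}_{i_1,j_1}&M'^{(B)}_{i_1,j_2}&M'^{(B)}_{i_1,j_3}\\ M'^{(B)}_{i_2,j_1}&M'^{(B)}_{i_2,j_2}&M'^{(B)}_{i_2,j_3}\\ M'^{(B)}_{i_3,j_1}&M'^{(B)}_{i_3,j_2}&M'^{(B)}_{i_3,j_3}\end{pmatrix}
\]
equals $p(B)^{3}$ times a sum of six terms, each a product of three reciprocals $1/p^{(i_a,j_b)}(B)$. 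Multiplying through by the common denominator $\prod_{a,b}p^{(i_a,j_b)}(B)$ — a product of nine polynomials, each of degree $\le 1$ in each variable — produces a polynomial identity $\widehat P_{i,j}(B)=0$ that is a polynomial of degree at most $9$ in each variable (six terms, each a product of six of the nine $p^{(i_a,j_b)}$'s, so degree $\le 6$ in each variable — I would recount carefully here: each of the six expansion terms omits three of the nine factors, leaving six, giving degree $\le 6$ per variable; this is where the "$6$" in the statement comes from). I would then take $P$ to be this $\widehat P$ for one carefully chosen triple of rows and columns.

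The remaining and genuinely delicate point is that $P$ must be \emph{not identically zero}: otherwise the claim is vacuous. This is exactly where the hypothesis that some $\widetilde B\in\Omega$ gives a non-GRS MDS code $\C(\mL,\mP,\widetilde B)$ enters. For such $\widetilde B$, condition 3) of Lemma \ref{L8} fails, so \emph{some} $3\times 3$ minor of $M'^{(\widetilde B)}$ is nonzero; I would choose the triple $(i_1,i_2,i_3;j_1,j_2,j_3)$ to be one realizing this nonvanishing. Since the $p^{(i_a,j_b)}(\widetilde B)$ are all nonzero (no zeros on $\Omega$) and the minor is nonzero, clearing denominators shows $\widehat P_{i,j}(\widetilde B)\neq 0$, hence $P=\widehat P_{i,j}$ is a nonzero polynomial. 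Conversely, for every $B\in\Omega$ with $\C(\mL,\mP,B)$ GRS, all $3\times3$ minors of $M'^{(B)}$ vanish, in particular this one, and since the denominators are again nonzero on $\Omega$ we get $P(B)=0$.

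The main obstacle I anticipate is the bookkeeping of the degree bound: one must argue that, after clearing denominators in the $3\times 3$ minor, the resulting polynomial has degree at most $6$ (not $9$) in each individual variable $x_m$. The key observation making this work is the structure inherited from Lemma \ref{L9}: the variable $b_{i,s}$ — i.e.\ $x_{(i+1)(s+1)}$ — occurs only in row $i$ of $G_{TGRS}$, hence appears in $p^{(i',j')}$ essentially only when $i'$ is tied to row $i$ through the adjugate, and in $p$ with degree $\le 1$; tracking which of the nine factors $p^{(i_a,j_b)}$ actually contain a given variable, and noting that in each of the six expansion terms at most six factors survive, yields the degree-$6$ bound. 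I would state this bound cleanly and leave the elementary (if tedious) verification to the reader, since no deeper idea is involved once the rational form of $M^{(B)}$ is in hand.
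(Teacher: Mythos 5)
Your proposal is correct and follows essentially the same route as the paper: fix a $3\times 3$ minor of $M'^{(B)}$ that is nonzero at $\widetilde B$, write it via Lemma \ref{L9} as $p^3P/Q$ with $Q$ the product of the nine $p^{(i_a,j_b)}$'s and $P$ the sum of six-fold products of them, and use that $p$, $Q$ have no zeros on $\Omega$ to get nonvanishing of $P$ at $\widetilde B$ and vanishing at every GRS $B\in\Omega$. Your closing worry about tracking which variables occur in which $p^{(i',j')}$ is unnecessary: since each of the six expansion terms is a product of six polynomials of degree at most $1$ in each variable, the degree-$6$ bound follows immediately, exactly as in the paper.
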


 \begin{proof}
Note that $B\in \Omega$, namely, $\C(\mL,\mP,B)$ is MDS. Let $G^{(sys,B)}=[I_k|M^{(B)}]$ be the systematic generator matrix of $\C(\mL,\mP,B)$, and $M'^{(B)}=[M'^{(B)}_{i,j}]\in \fq^{k\times(n-k)} $ with $M'^{(B)}_{i,j}=(M^{(B)}_{i,j})^{-1}$. By Lemma \ref{L8}, the MDS code $\C(\mL,\mP,B)$ is a GRS code if and only if all $3\times3$ minors of $M'^{(B)}$ are zero.
Assume that there is a $\widetilde{B} \in \Omega$ such that $\C(\mL,\mP,\widetilde{B})$ is a non-GRS code.
Then there is at least one nonzero $3\times{3}$ minor of $M'^{(\widetilde{B})}$. Fix this minor for all $B\in \Omega$. We focus on this $3\times{3}$ minor at the same position. 

By Lemma \ref{L9}, the entry $M'^{(B)}_{i,j}$ of the matrix $M'^{(B)}$ is the evaluation at $B$ of the polynomial $p/p^{(i,j)}\in{\fq [x_1,\ldots,x_{k(n-k)}]}$. Then the $3\times{3}$ minor of $M'^{(B)}$ associated with the fixed $3\times{3}$ minor of $M'^{(\widetilde{B})}$ can be expressed as the evaluation at $B$ of the polynomial $p^3 P/Q$,  where $P$ and $Q$ are given by
\begin{itemize}
\item $P$ is the sum of products of any six $p^{(i,j)}$'s associated with the fixed $3\times{3}$ minor;
\item $Q$ is the product of all nine $p^{(i,j)}$'s associated with the fixed $3\times{3}$ minor.
\end{itemize}
Note that $p^{(i,j)}$ and $p$ in $\fq[x_1,\ldots,x_{k(n-k)}]$ are $k(n-k)$-variate polynomials of degree at most 1 in each variable and they have no zeros in $\Omega$.

It follows that $P$  is a polynomial of degree at most $6$ in each variable. Then the fixed $3\times 3$ minor of $M'^{(B)}$ is equal to $0$ if and only if the evaluation at $B$ of the polynomial $P$ are zero, due to the fact that the polynomials $Q$ and $p$ have no zeros in $\Omega$ by Lemma \ref{L9}. Since the evaluation at $\widetilde{B}$ of $P$ is nonzero, it implies that $P$ is a nonzero polynomial. It is clear that the evaluation at $B$ of the polynomial $P$ is zero if $\C(\mL,\mP,B)$ is GRS for such $B$. This completes the proof.
 \end{proof}

\begin{remark}
Note that in Theorem \ref{T6} the polynomial $P\in \fq[x_1,\ldots,x_{k(n-k)}]$ can be explicitly computed for given $\alpha$, $\nu$ and $\widetilde{B} \in \Omega$, where $\C(\mL,\mP,\widetilde{B})$ is a non-GRS MDS code. Note that the polynomial $P$ may be not unique by the proof of Theorem \ref{T6}.
\end{remark}

Theorem \ref{T6} can be interpreted as follows: for any given $n$, $k$, $\boldsymbol{\alpha}$ and $\boldsymbol{\nu}$, either all MDS codes are GRS codes, or the number of GRS codes are upper bounded by the number of zeros of a nonzero multi-variable polynomial $P$ of degree at most $6$ in each variable.

\begin{example}
Let $n=6,\ k=3,\ q=17$, $\alpha=(1,2,3,4,5,6)\in{\mathbb{F}_{17}^6}$, $\boldsymbol{\nu}=(1,\ldots,1)$, and the coefficient matrix $B$ be of the form
\begin{equation*}
B=\begin{pmatrix}
x&0&0\\
0&0&0\\
0&0&y
\end{pmatrix},
\end{equation*}
where $x,y\in \fq$.
According to the proof of Lemma \ref{L9}, with some computation using Magma, the polynomials $p$ and $p^{(i,j)}$'s with two variables $x$ and $y$ are given by
\begin{equation*}
p=11xy+12x+10y+2,
\end{equation*} 
\begin{equation*}
\begin{pmatrix}
 p^{(0,0)}&p^{(0,1)}&p^{(0,2)}\\
 p^{(1,0)}&p^{(1,1)}&p^{(1,2)}\\
 p^{(2,0)}&p^{(2,1)}&p^{(2,2)}
\end{pmatrix}
=
\begin{pmatrix}
 xy+14x+9y+2&5xy+10x+12y+6&13xy+7x+15y+12\\
 16xy+13x+6y+11&10xy+15x+7y+1&12xy+4x+8y+4\\
 15xy+14x+12y+6&10xy+x+8y+12&4xy+2x+4y+3
\end{pmatrix}.
\end{equation*}
Note that $\C(\mL,\mP,B)$ is a non-GRS MDS code if $(x,y)=(9,9)$ by Magma. According to the proof of Theorem \ref{T6}, with some computation using Magma, the polynomial $P$ is given by 
\begin{align*}
P(x,y)&=7x^6y^5 + 15x^6y^4 + x^6y^3 + 2x^6y + 3x^5y^6 + 5x^5y^5 + 6x^5y^4 +12x^5y^3 + 11x^5y^2 + 2x^5y + 3x^5 + 13x^4y^6 \\
&+ 14x^4y^5+16x^4y^4 + 4x^4y^3 + 12x^4y^2 + 15x^4y + x^4 + 10x^3y^6 +9x^3y^5 + 8x^3y^4 + 5x^3y^3 + 5x^3y^2+ 15x^3y\\
& + x^3 + 12x^2y^6+ 12x^2y^5 + 7x^2y^4 + 3x^2y^2 + 16x^2y + 14x^2 + 5xy^6 +16xy^5 + 6xy^4 + 11xy^3 + 15xy^2\\
& + 3xy + 4x + 8y^6 + 2y^5 +13y^4+ 16y^3 + 13y^2 + 4y.
\end{align*}
Furthermore, Magma experiments show that the number of zeros of $P$ is $45$, and when $x$ and $y$ run through $\fq$, the number of MDS codes is $90$, the number of GRS codes is $8$ and the number of non-GRS MDS codes is $82$.
\end{example}

\section{Conclusions} \label{sect-6}
In this paper, we take an in-depth study on the $(\mL,\mP)$-TGRS codes for the most general case. Our main contributions are summarized as follows:
\begin{itemize}
\item  We presented a concise necessary and sufficient condition for $(\mL,\mP)$-TGRS codes to be MDS by a universal method, which extends related results in the literature. Additionally,  we proposed a sufficient condition for $(\mL,\mP)$-TGRS codes to be NMDS under the condition that the code is self-dual.
\item  We explicitly characterized the parity check matrices of  $(\mL,\mP)$-TGRS codes and presented a sufficient condition such that the $(\mL,\mP)$-TGRS codes are self-dual.
\item We investigated the non-GRS properties of $(\mL,\mP)$-TGRS codes by using Schur squares and combinatorial techniques. As a result, a large infinite family of non-GRS MDS codes was obtained.
\end{itemize}

The following interesting problems naturally arise:
\begin{problem}
Characterize the necessary and sufficient condition such that the $(\mL,\mP)$-TGRS codes defined by \eqref{S6} is NMDS for the general case.
\end{problem}
\begin{problem}
Construct explicit new infinite families of non-GRS MDS codes, NMDS codes, $m$-MDS codes, self-dual codes from the $(\mL,\mP)$-TGRS codes.
\end{problem}

The reader is cordially invited to join the adventure and solve the problems above.

\section*{Acknowledgements}
This work was supported by the National Natural Science Foundation of China (Nos. 12471492, 12401688), the Innovation Group Project of the Natural Science Foundation of Hubei Province of China (No. 2023AFA021) and the Natural Science Foundation of Hubei Province of China (No. 2024AFB419).


\begin{thebibliography}{99}
\bibitem{BART} D. Bartoli, M. Giulietti, I. Platoni, On the covering radius of MDS codes. IEEE Trans. Inf. Theory 61(2): 801-811 (2015).
\bibitem{BEBO} P. Beelen, M. Bossert, S. Puchinger, J. Rosenkilde, Structural properties of twisted Reed-Solomon codes with applications to cryptography. 2018 IEEE Int. Symp. Inf. Theory (ISIT), Vail, CO, USA, 2018, pp. 946-950.
\bibitem{BEPU} P. Beelen, S. Puchinger, J.R. n{\'e} Nielsen, Twisted Reed-Solomon codes. 2017 IEEE Int. Symp. Inf. Theory (ISIT), Aachen, Germany, 2017, pp. 336-340.
\bibitem{BERO} P. Beelen, S. Puchinger, J. Rosenkilde, Twisted Reed-Solomon codes. IEEE Trans. Inf. Theory 68(5): 3047-3061 (2022).
\bibitem{BETS} K. Betsumiya, S. Georgiou, T.A. Gulliver, M. Harada, C. Koukouvinos, On self-dual codes over some prime fields. Discret. Math. 262(1-3): 37-58 (2003).
\bibitem{AMAR} M.A. Boer, Almost MDS Codes. Des. Codes Cryptogr. 9(2): 143-155 (1996).
\bibitem{CASC} I. Cascudo, R. Cramer, D. Mirandola, G. Z{\'e}mor, Squares of random linear codes. IEEE Trans. Inf. Theory 61(3): 1159-1173 (2015).
\bibitem{CHEN} W. Cheng, On parity-check matrices of twisted generalized Reed-Solomon codes. IEEE Trans. Inf. Theory 70(5): 3213-3225 (2024).
\bibitem{COUV} A. Couvreur, P. Gaborit, V. Gauthier-Uma{\~n}a, A. Otmani, J.P. Tillich, Distinguisher-based attacks on public-key cryptosystems using Reed-Solomon codes. Des. Codes Cryptogr. 73(2): 641-666 (2014).
\bibitem{CRAM} R. Cramer, V. Daza, I. Gracia, J.J. Urroz, G. Leander, J. Mart{\'\i}-Farr{\'e}, C. Padr{\'o}, On codes, matroids, and secure multiparty computation from linear secret-sharing schemes. IEEE Trans. Inf. Theory 54(6): 2644-2657 (2008).
\bibitem{CRAD} R. Cramer, I. Damgard, J.B. Nielsen, Secure multiparty computation and secret sharing. Cambridge University Press, 2015.
\bibitem{DING} Y. Ding, S. Zhu, New self-dual codes from TGRS codes with general $\ell$ twists. Advances in Mathematics of Communications 19(2): 662-675 (2025).
\bibitem{DODU} S.M. Dodunekov, I.N. Landjev, Near-MDS codes over some small fields. Discret. Math. 213(1-3): 55-65 (2000).
\bibitem{DOUG} S.T. Dougherty, S. Mesnager, P. Sol{\'e}, Secret-sharing schemes based on self-dual codes. 2008 IEEE Information Theory Workshop, Porto, Portugal, 2008, pp. 338-342.
\bibitem{FORN} G. Forney, Generalized minimum distance decoding. IEEE Trans. Inf. Theory 12(2): 125-131 (1966).
\bibitem{GU} H. Gu, J. Zhang, On twisted generalized Reed-Solomon codes with $\ell$ twists. IEEE Trans. Inf. Theory 70(1): 145-153 (2024).
\bibitem{GULL} T.A. Gulliver, J.K. Kim, Y. Lee, New MDS or near-MDS self-dual codes. IEEE Trans. Inf. Theory 54(9): 4354-4360 (2008).
\bibitem{HUAQ} D. Huang, Q. Yue, Y. Niu, MDS or NMDS LCD codes from twisted Reed-Solomon codes. Cryptogr. Commun. 15(2): 221-237 (2023).
\bibitem{HUAN} D. Huang, Q. Yue, Y. Niu, X. Li, MDS or NMDS self-dual codes from twisted generalized Reed-Solomon codes. Des. Codes Cryptogr. 89(9): 2195-2209 (2021).
\bibitem{HWPV} W.C. Huffman, V. Pless, Fundamentals of Error-Correcting Codes. Cambridge University Press, Cambridge, 2003.
\bibitem{LAVA} J. Lavauzelle, J. Renner, Cryptanalysis of a system based on twisted Reed-Solomon codes. Des. Codes Cryptogr. 88(7): 1285-1300 (2020).
\bibitem{LIUH} H. Liu, S. Liu, Construction of MDS twisted Reed-Solomon codes and LCD MDS codes. Des. Codes Cryptogr. 89(9): 2051-2065 (2021).
\bibitem{MIRA} D. Mirandola, Schur products of linear codes: a study of parameters. Master Thesis (under the supervision of G. Z{\'e}mor), Univ. Bordeaux 1 and Stellenbosch Univ., July 2012.  Available:
http://www.algant.eu/documents/theses/mirandola.pdf
\bibitem{RAND} H. Randriambololona, On products and powers of linear codes under componentwise multiplication. Algorithmic arithmetic, geometry, and coding theory 637: 3-78 (2015).
\bibitem{ROTL} R.M. Roth, A. Lempel, On MDS codes via Cauchy matrices. IEEE Trans. Inf. Theory 35(6): 1314-1319 (1989).
\bibitem{ROTS} R.M. Roth, G. Seroussi, On generator matrices of MDS codes. IEEE Trans. Inf. Theory 31 (6): 826-830 (1985).
\bibitem{SINC} H. Singh, K.C. Meena, MDS multi-twisted Reed-Solomon codes with small dimensional hull. Cryptogr. Commun. 16(3): 557-578 (2024).
\bibitem{SING} R.C. Singleton, Maximum distance $q$-nary codes. IEEE Trans. Inf. Theory 10(2): 116-118 (1964).
\bibitem{STEA} A.M. Steane, Error correcting codes in quantum theory. Phys. Rev. Lett. 77(5): 793 (1996).
\bibitem{SUIL} J. Sui, Q. Yue, X. Li, D. Huang, MDS, near-MDS or 2-MDS self-dual codes via twisted generalized Reed-Solomon codes. IEEE Trans. Inf. Theory 68(12): 7832-7841 (2022).
\bibitem{SUIS} J. Sui, Q. Yue, F. Sun, New constructions of self-dual codes via twisted generalized Reed-Solomon codes. Cryptogr. Commun. 15(5): 959-978 (2023).
\bibitem{SUIZ} J. Sui, X. Zhu, X. Shi, MDS and near-MDS codes via twisted Reed-Solomon codes. Des. Codes Cryptogr. 90(8): 1937-1958 (2022).
\bibitem{ZHANGA} A. Zhang, K. Feng, On the constructions of MDS self-dual codes via cyclotomy. Finite Fields Their Appl. 77: 101947 (2022).
\bibitem{ZHANGJ} J. Zhang, Z. Zhou, C. Tang, A class of twisted generalized Reed-Solomon codes. Des. Codes Cryptogr. 90(7): 1649-1658 (2022).
\bibitem{ZHAO}C. Zhao, W. Ma, T. Yan, Y. Sun, Research on the construction of maximum distance separable codes via arbitrary twisted generalized Reed-Solomon codes. ArXiv:2408.12049 (2024).
\end{thebibliography}
\end{document}